\newtheorem{Theorem}{Theorem}[section]
\newtheorem{Lemma}{Lemma}[section]
\newtheorem{Corollary}{Corollary}[section]
\newtheorem{Remark}{Remark}[section]
\newtheorem{Definition}{Definition}[section]
\newtheorem{Proposition}{Proposition}[section]
\newtheorem{Example}{Example}[section]
\makeatletter \@addtoreset{equation}{section} \makeatother
\begin{document}
	
	\title{Function-Correcting Codes with Homogeneous Distance \let\thefootnote\relax\footnotetext{E-Mail addresses: huiyingliu@mails.ccnu.edu.cn (H.Y. Liu),  hwliu@ccnu.edu.cn (H. Liu).}}
	\author{ Huiying Liu,~Hongwei Liu }
	\date{\small School of Mathematics and Statistics, Central China Normal University, Wuhan, 430079, China}
	\maketitle
	{\noindent\small{\bf Abstract:} Function-correcting codes are designed to reduce redundancy of codes when protecting function values of information against errors. As generalizations of Hamming weights and Lee weights over $ \mathbb{Z}_{4} $, homogeneous weights are used in codes over finite rings. In this paper, we introduce function-correcting codes with homogeneous distance denoted by FCCHDs, which extend function-correcting codes with Hamming distance. We first define $ D $-homogeneous distance codes. We use $ D $-homogenous distance codes to characterize connections between the optimal redundancy of FCCHDs and lengths of these codes for some matrices $ D $. By these connections, we obtain several bounds of the optimal redundancy of FCCHDs for some functions. In addition, we also construct FCCHDs for homogeneous weight functions and homogeneous weight distribution functions. Specially, redundancies of some codes we construct in this paper reach the optimal redundancy bounds.}
	
	\vspace{1ex}
	{\noindent\small{\bf Keywords:}
		Homogeneous weights; Function-correcting codes; $ D $-homogeneous distance codes; Optimal redundancy.}
	
	2020 \emph{Mathematics Subject Classification}:  94B60, 94B65
	\section{Introduction}
    Function-correcting codes were first introduced in coding theory by Lenz, Bitar, Wachter-Zeh and Yaakobi \cite{lbwy}, which can be viewed as generalizations of unequal error protection codes denoted by UEP codes. UEP codes were proposed by Masnick and Wolf \cite{mw}. Further analysis of UEP codes can be found in \cite{bnz}, \cite{bk}. UEP are mainly divided into bit-wise UEP and message-wise UEP. And the work of message-wise UEP (denoted by UMP) codes can be found in \cite{ssgagd}, where UMP codes provides unequal message protection. The authors considered function-correcting codes with Hamming distance denoted by FCCs in \cite{lbwy}. The goal of the study of FCCs is to reduce the redundancy compared with traditional codes. The reason why FCCs can reduce the redundancy is that the receiver may not need all correct messages but only the attribute of messages. Therein, the attribute of messages is a function related to messages. Obviously, if we know correct messages, we obtain the attribute of messages, thus some classical codes can be viewed as FCCs for identity function. In FCCs, Hamming distances between codewords satisfy appropriate requirements, where messages of these codewords have different function values. In fact, FCCs use system coding method, a codeword in FCCs consists of a message vector and a redundancy vector. Hamming distances between message vectors may be different, thus the redundancy vectors of an FCC have irregular distance requirements. Based on this fact, the authors introduced a class of irregular distance codes to study FCCs. Using these codes, they obtained some bounds of the optimal redundancy of FCCs for some functions. In 2024, Premlal and Rajan \cite{pr} further studied FCCs for linear functions.

    With the development of information, we face the problem of storing a large amount of data. Cassuto and Blaum \cite{cb} first proposed symbol-pair read channels for processing high-density data storage. In 2024, Xia, Liu and Chen \cite{xlc} introduced FCCs with symbol-pair weights in symbol-pair read channels and studied the optimal redundancy of these codes.
    In 2025, Ge, Xu, Zhang and Zhang \cite{gxzz} improved some results in references \cite{lbwy} and \cite{xlc}. Singh, Singh and Yaakobi \cite{ssy} extended results of reference \cite{xlc} to $b$-symbol read channels.

    For some specific functions, the characterization of function-correcting codes may be more specific. There are some related works of function-correcting codes for specific funtions in previous study. Computing function values of combined datas from a sender and a receiver himself is related to the work of FCCs. Orlitsky and Roche \cite{or} studied this problem, they proposed an upper bound of the minimum number of bits. Kuzuoka and Watanable \cite{kw} introduced informative functions and studied the optimal rate for some specific functions by the structure of these functions. What's more, error correcting codes that protect the weights of robust neural networks stored in memory devices had been studied in \cite{hsj}. Huang, Siegel and Jiang \cite{hsj} proposed a better Selective Protection scheme to improve the performance of neural networks after correction.

    Homogeneous weights are generalizations of Lee weights on $ \mathbb{Z}_{4} $ and Hamming weights over finite fields. In 1997, Constantinescu
    and Heise \cite{hc} first introduced the concept of the homogeneous weight on the residue rings of integers. The equivalent characterization for homogeneous distance to be a metric can be seen in \cite{hc}, \cite{ll}. In 2000, Greferath and Schmidt \cite{gs} generalized the homogeneous weight to be defined on arbitrary finite rings. And the work about the existence and uniqueness of the homogeneous weight can be seen in \cite{gs}, \cite{nk}. Further work about homogeneous weights formula was studied in \cite{fl}, \cite{gs}, \cite{h}, \cite{vw}. Bounds of lengths of codes are significant for applications. In \cite{go}, \cite{ss}, bounds of codes with homogeneous distance were studied. In \cite{os}, \cite{sos}, the authors investigated the existence of perfect linear codes over $ \mathbb{Z}_{4} $ and $ \mathbb{Z}_{2^{l}} $ with the homogeneous weights, respectively.

    In this paper, we study function-correcting codes with homogeneous distance denoted by FCCHDs. FCCHDs use encoding functions with homogeneous distance to get the correct function values after channel transmission. Compared with traditional coding theory on finite ring, the goal of FCCHDs is to get correct function values of messages rather than correct messages. Thus this encoding method may reduce the redundancy compared with classical coding theory on finite rings.

    In this paper, we define irregular homogeneous distance codes which have irregular homogeneous distance requirements between codewords. Therein, these requirements are related to a given matrix $ D $ and we call this code as $ D $-homogeneous distance codes for the given matrix $ D $. Observing redundancy vectors of FCCHDs, we find some connections between lengths of these vectors and $ D $-homogeneous distance codes. To study the optimal redundancy of FCCHDs on the finite residue ring $ \mathbb{Z}_{2^{l}} $, we define two classes of matrices, which are homogeneous distance requirement matrices and function homogeneous distance matrices. Entries in these two classes of matrices are related to homogeneous distances between messages. Using these matrices to construct irregular homogeneous distance codes, we obtain some bounds of FCCHDs for an arbitrary function and some specific functions. In addition, for homogeneous weight functions and homogeneous weight distribution functions, we construct some FCCHDs for these functions. Therein, redundancies of some of these FCCHDs reach the optimal redundancy bounds of FCCHDs for these two classes of functions.

    This paper is organized as follows. Some definitions and formulas are given in Section 2. We present some bounds of the optimal redundancy of function-correcting codes with homogeneous distance for an arbitrary function in Section 3. In Section 4, we offer bounds of optimal redundancy of FCCHDs for homogeneous weight functions, homogeneous weight distribution functions and Rosenbloom-Tsfasnman weight functions. We give some specific FCCHDs for homogeneous weight functions and homogeneous weight distribution functions. And we find these FCCHDs reach the optimal redundancy bounds for these functions respectively. In Section 5, we study FCCHDs for locally homogeneous binary functions and min-max functions. We conclude this paper in the last section.

	\section{Preliminaries}
     Let $ |S| $ denote the size of a set $ S. $ $ k,r,l $ are positive integers and $ D $ is a matrix. In this section, we introduce the definition of homogeneous weight on a finite ring. We define function-correcting codes with homogeneous distance and a class of codes with irregular homogeneous distance requirements.

     \subsection{Homogeneous distance}
     In this subsection, we mainly introduce definitions of homogeneous weight and homogeneous distance on a finite ring.

    \begin{Definition}{\rm (\cite{gs})}\label{Def1}
    A real-valued function $\omega_{h}$ on a finite ring $R$ is called a (left) \textit{homogeneous weight} if $\omega_{h}(0)=0$ and the followings hold:\\
    {\rm (1)} For all $x,y \in R,~Rx=Ry$ implies $\omega_{h}(x)=\omega_{h}(y)$ holds.\\
    {\rm (2)} There exists a non-negative real number $\gamma$ such that
   $$ \sum_{y \in Rx}\omega_{h}(y)=\gamma|Rx| \text{ for all } 0\neq x \in R,$$
   where $ Rx $ denotes the left ideal of $ R $ generated by $ x. $
    \end{Definition}

    By {\rm \cite{gs}}, without loss of generality, let the parameter $ \gamma $ be 1. By {\rm \cite{fl,ll}}, we obtain a formula to calculate the homogeneous weight on the residue class ring $\mathbb{Z}_{2^{l}}, $ where $ l $ is a positive integer. For any $x \in  \mathbb{Z}_{2^{l}} , $
        \begin{equation}\label{Formula}
           \omega_{h}(x)=\left\{
  	         \begin{aligned}
  	          0&,~\text{if } x=0,\\
                1&,~\text{if } x \notin \langle 2^{l-1} \rangle,\\
                2&,~\text{if } x \in \langle 2^{l-1} \rangle \setminus \{0\}.
               \end{aligned}
               \right.
        \end{equation}

    \begin{Definition}
        Let $ R $ be a finite ring with homogeneous weight $ \omega_{h} $. Homogeneous distance between $ x,\,y\,\in R $ is defined as $ d_{h}(x,y)=\omega_{h}(x-y). $
    \end{Definition}
    Obviously, if $ u=(u_{1},u_{2},\ldots,u_{k}) \in \mathbb{Z}_{2^{l}}, \, \omega_{h}(u)=\sum_{i=1}^{k}{\omega_{h}(u_{i})}. $ Homogeneous distance may not be a metric, the following lemma gives a characterization of homogeneous distance to be a metric on $\mathbb{Z}_{n}$.
    \begin{Lemma}{\rm{(\cite{hc,ll})}}\label{Lem1}
		Homogeneous distance on $\mathbb{Z}_{n}$ is a metric if and only if $6 \nmid n$.
    \end{Lemma}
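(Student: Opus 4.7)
My plan is to prove both directions of the equivalence separately, using a closed-form expression for the homogeneous weight on $\mathbb{Z}_{n}$. Starting from Definition 2.1 with $\gamma=1$ and summing the averaging identity over each principal ideal, one derives that for $x\neq 0$,
\[
\omega_{h}(x) \;=\; 1 - \frac{\mu(n/d)}{\phi(n/d)}, \qquad d = \gcd(x, n),
\]
where $\mu$ and $\phi$ are the Möbius and Euler functions. I would first establish this formula, because both directions of the lemma then reduce to elementary calculations.

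For the \emph{only if} direction I argue by contrapositive: assume $6 \mid n$ and write $n = 6m$. Consider the triple $(2m,\, m,\, 0)$ in $\mathbb{Z}_{n}$. Since $\gcd(m,n)=m$ gives $n/d=6$, the formula yields $\omega_{h}(m)=1-\mu(6)/\phi(6)=1/2$; similarly $\gcd(2m,n)=2m$ gives $\omega_{h}(2m)=1-\mu(3)/\phi(3)=3/2$. Therefore
\[
d_{h}(2m,0) \;=\; \tfrac{3}{2} \;>\; 1 \;=\; d_{h}(2m,m)+d_{h}(m,0),
\]
violating the triangle inequality. This one construction handles every multiple of $6$ uniformly.

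For the \emph{if} direction, assume $6 \nmid n$. By translation invariance it suffices to show $\omega_{h}(a+b) \leq \omega_{h}(a)+\omega_{h}(b)$ for all $a,b \in \mathbb{Z}_{n}$. Writing $d_{a}=\gcd(a,n)$, $d_{b}=\gcd(b,n)$, $d_{c}=\gcd(a+b,n)$ and plugging into the formula, the inequality becomes an arithmetic statement about values of $1-\mu/\phi$ at related divisors of $n$. I would split into cases based on whether each of $n/d_{a}, n/d_{b}, n/d_{c}$ is (i) non-squarefree, (ii) squarefree with an odd number of prime factors, or (iii) squarefree with an even number of prime factors. Only case (iii) produces $\omega_{h}<1$, and among those the genuinely dangerous configuration is $a,b$ both units, where $\omega_{h}(a)+\omega_{h}(b)=2(1-1/\phi(n))$. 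The main obstacle is organising this case analysis compactly; the key observation that closes it is that a divisor $n/d \in\{2,3\}$ is the only way to produce a weight $\geq 3/2$ on the side of $\omega_{h}(a+b)$, and the simultaneous presence of the quotients $6,3$ (or $6,2$) in the divisor lattice — the mechanism behind the $\mathbb{Z}_{6}$ counterexample — is precisely what $6 \nmid n$ forbids. Tracking these divisor incidences through the three cases above gives the required inequality and completes the proof.
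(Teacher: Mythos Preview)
The paper does not prove this lemma; it is quoted from \cite{hc,ll} without argument and used as a black box, so there is no in-paper proof to compare your proposal against.

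On your proposal itself: the formula $\omega_{h}(x)=1-\mu(n/d)/\phi(n/d)$ with $d=\gcd(x,n)$ is correct, and your counterexample for the direction ``$6\mid n\Rightarrow$ not a metric'' is clean and complete. The converse sketch, however, has a real gap. Your reduction of the dangerous case to ``$a,b$ both units'' is not justified: elements of weight strictly below $1$ are exactly those with $n/\gcd(x,n)$ squarefree with an even number of prime factors, and these need not be units; conversely, when $n$ itself is not squarefree, units have weight exactly $1$ and are not dangerous at all. More importantly, the sentence ``tracking these divisor incidences through the three cases above gives the required inequality'' hides the entire content of the argument. What actually has to be shown is that whenever $\omega_{h}(a+b)>\omega_{h}(a)+\omega_{h}(b)$, the divisor lattice of $n$ is forced to contain both $2$ and $3$; this requires a genuine arithmetic link between $\gcd(a+b,n)$ and the pair $\gcd(a,n),\gcd(b,n)$ (for instance, why two elements each of weight $3/4$ can never sum to the unique element of weight $2$), and you have not supplied it. The strategy is viable, but as written the converse is an outline rather than a proof.
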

    According to Lemma \ref{Lem1}, we know that homogeneous distance is a metric on $ \mathbb{Z}_{2^{l}}. $

    Let $ r $ is a positive integer and $ \rho $ is a real number. We denote
    $$ B_{r}(\rho)=\{\, x \in \mathbb{Z}_{2^{l}}^{r} \, | \, \omega_{h}(x) \leqslant {r\rho} \, \}. $$
    The main theorem in \cite{l} used the entropy function, by Equation (5) in \cite{l}, this Theorem can be written as follows.
    \begin{Lemma}\label{lem2.3}{\rm (\cite{l})}
        Let $ \rho $ be a positive real number such that $ \omega_{min} < \rho \leqslant \bar{\omega}, $ where $ {\omega_{min}=\min_{a \in \mathbb{Z}_{2^{l}}}{\omega_{h}(a)}} $ and $ \bar{\omega}={|\mathbb{Z}_{2^{l}}|^{-1}\sum_{a \in \mathbb{Z}_{2^{l}}}{\omega_{h}{(a)}}} $ are the minimum weight and the average weight of elements of the residue ring $ \mathbb{Z}_{2^{l}}. $ Then $$ |B_{r}(\rho)| \leqslant {(e^{ \lambda\rho }\mathcal{L}(\lambda))^{r}} $$
        holds for all $ \lambda \in [0,\infty) $ and $ n \in \mathbb{N}, $ where $ \mathcal{L}(\lambda)=\sum_{a \in \mathbb{Z}_{2^{l}}}{e^{-\lambda\omega_{h}(a)}} $ and $ e $ is natural constant.
    \end{Lemma}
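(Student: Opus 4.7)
The plan is to derive the bound by a standard Chernoff/exponential-moment argument, exactly of the type that yields the classical entropy bound on Hamming balls. The quantity $\mathcal{L}(\lambda)$ plays the role of a (rescaled) moment generating function associated with the homogeneous weight distribution on $\mathbb{Z}_{2^l}$, and the factor $e^{\lambda\rho}$ in the bound comes from a Markov-type inequality applied to the event $\omega_h(x)\leq r\rho$.

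First I would fix an arbitrary $\lambda\in[0,\infty)$ and note the elementary monotonicity: for every $x\in B_r(\rho)$ we have $\omega_h(x)\leq r\rho$, hence $e^{-\lambda\omega_h(x)}\geq e^{-\lambda r\rho}$. Summing this over $x\in B_r(\rho)$ gives
\begin{equation*}
|B_r(\rho)|\,e^{-\lambda r\rho}\;\leq\;\sum_{x\in B_r(\rho)}e^{-\lambda\omega_h(x)}\;\leq\;\sum_{x\in\mathbb{Z}_{2^l}^{\,r}}e^{-\lambda\omega_h(x)},
\end{equation*}
where the last step just drops the restriction to the ball (valid since the exponentials are non-negative).

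Next I would exploit the additivity of the homogeneous weight on coordinates, which was recorded in the excerpt right after the definition of $d_h$: for $x=(x_1,\ldots,x_r)\in\mathbb{Z}_{2^l}^{\,r}$ one has $\omega_h(x)=\sum_{i=1}^{r}\omega_h(x_i)$. This turns the sum over the product space into a product of identical one-dimensional sums,
\begin{equation*}
\sum_{x\in\mathbb{Z}_{2^l}^{\,r}}e^{-\lambda\omega_h(x)}\;=\;\prod_{i=1}^{r}\sum_{x_i\in\mathbb{Z}_{2^l}}e^{-\lambda\omega_h(x_i)}\;=\;\mathcal{L}(\lambda)^{r}.
\end{equation*}
Combining the two displays and multiplying through by $e^{\lambda r\rho}$ gives $|B_r(\rho)|\leq (e^{\lambda\rho}\mathcal{L}(\lambda))^r$, which is the claimed inequality for every $\lambda\geq 0$.

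There is essentially no hard step here: the factorization of the moment generating function over coordinates is immediate from the additivity of $\omega_h$, and the Markov step is a one-line inequality. The hypothesis $\omega_{\min}<\rho\leq\bar\omega$ plays no role in proving the stated inequality itself; its purpose is to guarantee that the optimal $\lambda^{\ast}=\arg\min_{\lambda\geq 0}e^{\lambda\rho}\mathcal{L}(\lambda)$ is strictly positive and finite, so that the resulting bound is nontrivial (i.e.\ strictly smaller than $|\mathbb{Z}_{2^l}|^r=\mathcal{L}(0)^r$). If I wanted to make the connection to the entropy formulation of \cite{l} explicit, I would minimize $\lambda\mapsto\lambda\rho+\log\mathcal{L}(\lambda)$ by differentiation; the first-order condition $\rho=-\mathcal{L}'(\lambda)/\mathcal{L}(\lambda)$ is solvable precisely in the range $\omega_{\min}<\rho\leq\bar\omega$, which both explains and uses the hypothesis. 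That optimization, however, is not needed for the inequality as stated and can be left implicit.
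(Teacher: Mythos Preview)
Your argument is correct: the Chernoff/Markov step together with the factorization of $\sum_{x}e^{-\lambda\omega_h(x)}$ over coordinates (using the additivity $\omega_h(x)=\sum_i\omega_h(x_i)$) immediately gives $|B_r(\rho)|\leq (e^{\lambda\rho}\mathcal{L}(\lambda))^r$, and your remark that the hypothesis $\omega_{\min}<\rho\leq\bar\omega$ only serves to locate a nontrivial optimizer $\lambda^\ast>0$ is exactly right.

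There is nothing to compare against in the paper itself: the lemma is quoted from \cite{l} without proof (the authors merely note that the entropy formulation there can be rewritten, via their Equation~(5), in the form stated). Your exponential-moment derivation is precisely the standard mechanism behind Loeliger's bound, so you have supplied the argument that the paper omits rather than diverged from it.
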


    \subsection{Function-correcting codes with homogeneous distance and irregular homogeneous distance codes}

     Let $ u \in \mathbb{Z}_{2^{l}}^{k} $ be a message and $ f:\mathbb{Z}_{2^{l}}^{k} \rightarrow {\rm Im}(f) \triangleq {\{ f(u):u \in \mathbb{Z}_{2^{l}}^{k} \}} $ be a function on $ \mathbb{Z}_{2^{l}}^{k}, $ where $ k $ is a positive integer. We define an encoding function $$ Enc:\mathbb{Z}_{2^{l}}^{k} \rightarrow \mathbb{Z}_{2^{l}}^{k+r},~Enc(u)=(u,p(u)), $$ where $ p(u) \in \mathbb{Z}_{2^{l}}^{r} $ is the redundancy vector of the codeword $ Enc(u) $ and $ r $ is the redundant length.

     Note that for each information $ u, $ the codeword $ Enc(u) $ is transmitted over an erroneous channel. In this paper, we assume that $ y \in \mathbb{Z}_{2^{l}}^{k+r} $ is a received word with $ d_{h}(Enc(u),y) \leqslant t . $ We have the following definition.

    \begin{Definition}\label{Def2}
       Suppose homogeneous distance between a sent codeword and a received codeword is at most $t$. An encoding function $Enc:\mathbb{Z}_{2^{l}}^{k} \rightarrow \mathbb{Z}_{2^{l}}^{k+r}$ with ${Enc(u)=(u,p(u))}, $ $ {u \in \mathbb{Z}_{2^{l}}^{k}} $ defines a function-correcting code with homogeneous distance on $ \mathbb{Z}_{2^{l}} $ for the function $ {f:\mathbb{Z}_{2^{l}}^{k} \rightarrow {\rm Im}(f)} $, denoted by {\rm FCCHD}, if for all $ u_{1},u_{2} \in \mathbb{Z}_{2^{l}}^{k}$ with $ f(u_{1}) \neq f(u_{2}) $ it holds that $$d_{h}(Enc(u_{1}),Enc(u_{2})) \geqslant 2t+1.$$
    \end{Definition}

    By this definition, the receiver can uniquely recover the correct $ f(u) , $ if the receiver knows the function $f(\cdot)$ and the encoding function $ Enc(\cdot) . $ The goal of constructing FCCHDs is to get correct $ f(u) $ for any message $ u \in \mathbb{Z}_{2^{l}}^{k} $ after transmitting.
    \begin{Remark}
        For a given function $ {f:\mathbb{Z}_{2^{l}}^{k} \rightarrow {\rm Im}(f)} $ and encoding function $ Enc $ of an {\rm{FCCHD}} $Enc:\mathbb{Z}_{2^{l}}^{k} \rightarrow \mathbb{Z}_{2^{l}}^{k+r}$, let $ {u \in \mathbb{Z}_{2^{l}}^{k}} $, suppose that $ y $ is received and $ {d_{h}(Enc(u),y) \leqslant t} $. Refer to highly similar decoding rules, using triangle inequality of homogeneous distance, the receiver can get correct function value $ f(u) $. Note that under these condition, the receiver may can not get the correct message $ u $.
    \end{Remark}

    We define the optimal redundancy of FCCHDs for a function $f$ as follows.
    \begin{Definition}\label{Def3}
       The optimal redundancy $r_{h}^{f}(k,t)$ is defined as the smallest integer $ r $ such that there exists an {\rm FCCHD} with encoding function $Enc:\mathbb{Z}_{2^{l}}^{k} \rightarrow \mathbb{Z}_{2^{l}}^{k+r}$ for the function $f$ on $ \mathbb{Z}_{2^{l}}^{k} $.
    \end{Definition}

    For convenience, we denote $[a]^{+} \triangleq \max\{a,0\}, $ for any integer $a.$ For any positive integer $M,$ let $[M] \triangleq \{1,2,\ldots,M\}$ be a set. For a matrix $ D , $ $[D]_{ij}$ is the $(i,j)$ entry of $D.$ For $ M>0 , $ $ \mathbb{N}_{0}^{M \times M} $ denotes the set of $ M \times M $ matrices with nonnegative integer entries.

    Let $ P=\{\,p_{1},p_{2},\ldots,p_{M}\,\} \subseteq \mathbb{Z}_{2^{l}}^{r} $ be a code of length $r$ and cardinality $ M . $ We define a class of codes with irregular homogeneous distance requirements as follows.
    \begin{Definition}\label{Def5}
       Let $ D \in \mathbb{N}_{0}^{M \times M} $ be a matrix. $ P=\{\,p_{1},p_{2},\ldots,p_{M}\,\} \subseteq \mathbb{Z}_{2^{l}}^{r}, $ if there exists an ordering of codewords of $P$ such that $d_{h}(p_{i},p_{j}) \geqslant [D]_{ij}$ for al $i,j \in [M], $ then we call $ P $ a $D$-homogeneous distance code.

       Furthermore, we define $N_{h}(D)$ to be the smallest integer $r$ such that there exists a $D$-homogeneous distance code of length $ r . $ In particular, if $[D]_{ij}=d,\,\forall{\,i \neq j},~i,j \in [M] , $ we write $N_{h}(D)$ as $N_{h}(M,d) . $
    \end{Definition}

    Next, we define homogeneous distance requirement matrices for a function $f$.
    \begin{Definition}\label{Def4}
       Let $ u_{1},u_{2},\ldots,u_{M} \in \mathbb{Z}_{2^{l}}^{k} $ be $ M $ vectors over $ \mathbb{Z}_{2^{l}}^{k} , $ where $ k,l,M $ are positive integers. We define homogeneous distance requirement matrix $ D_{h}^{f}(t,u_{1},u_{2},\ldots,u_{M}) $ for a function $f$ as the $ M \times M $ matrix with entries
       \begin{equation}\label{Formula 2}
           [D_{h}^{f}(t,u_{1},u_{2},\ldots,u_{M})]_{ij}=\left\{
  	         \begin{aligned}
  	          &[2t+1-d_{h}(u_{i},u_{j})]^{+},& \text{if } f(u_{i}) \neq f(u_{j}), \\
                &0,& \text{otherwise}.
               \end{aligned}
               \right.
        \end{equation}
    \end{Definition}

    \begin{Definition}\label{Def6}
      Let $ f:\mathbb{Z}_{2^{l}}^{k} \rightarrow {\rm Im}(f) $ be a function. For any $ z \in {\rm Im}(f), $ let $ f^{-1}(z) $ be the set of preimages of $ z, $ i.e. , $ f^{-1}(z)={ \{\, u \in \mathbb{Z}_{2^{l}}^{k} \, | \, f(u)=z \, \} }. $ For $ \forall z_{1}, \, z_{2} \in {\rm Im}(f), $ define function homogeneous distance between function values in $ {\rm Im}(f) $ as follows.
      $$ d_{h}^{f}(z_{1},z_{2})={ \min{ \{\, d_{h}(u,v) \, | \, u \in f^{-1}(z_{1}) , \, v \in f^{-1}(z_{2}) \, \} }}, $$
      where $ d_{h}^{f}(z_{1},z_{2})=0,\text{ if } z_{1}=z_{2} \in {\rm Im}(f). $
    \end{Definition}

    Using Definition \ref{Def6}, we give a definition of the function homogeneous distance matrix for a function $f$ as follows.
    \begin{Definition}\label{Def7}
      The function homogeneous distance matrix for a function $f$ is denoted by an $E \times E$ matrix $ D_{h}^{f}(t,f_{1},\ldots,f_{E}), $ whose entries are defined as follows.
       \begin{equation}\label{Formula 3}
           [D_{h}^{f}(t,f_{1},\ldots,f_{E})]_{ij}=\left\{
  	         \begin{aligned}
  	          &[2t+1-d_{h}^{f}(f_{i},f_{j})]^{+},& \text{if } i \neq j , \\
                &0,& \text{otherwise},
               \end{aligned}
               \right.
        \end{equation}
        where $E=|{\rm Im}(f)|,~{\rm Im}(f)=\{f_{1},f_{2},\ldots,f_{E}\}$.
    \end{Definition}

\section{Bounds of FCCHDs for general functions}
	We find some connections between $r_{h}^{f}(k,t)$ of FCCHDs for a function $f$ and some specific $D$-homogeneous distance codes. In this section, we let $E=|{\rm Im}(f)|$ and we give some bounds of $r_{h}^{f}(k,t)$ using these connections.

    \subsection{Connections between FCCHDs and $ D $-homogeneous distance codes }
     In this subsection, we mainly use homogeneous distance requirement matrices and function homogeneous distance matrices $ D $ to construct $ D $-homogeneous distance codes. For a function $ f $, we associate redundancy of FCCHDs with $ D $-homogeneous distance codes. Using definitions of some special matrices, we find some connections between $ r_{h}^{f}(k,t) $ and $ N_{h}(D) $ which is the smallest length of $ D $-homogeneous distance codes.
	
	\begin{Theorem}\label{Th1}
		For any function $f:\mathbb{Z}_{2^{l}}^{k} \rightarrow {\rm Im}(f)$,
         \begin{equation}\label{Formula 5}
           r_{h}^{f}(k,t)=N_{h}(D_{h}^{f}(t,u_{1},u_{2},\ldots,u_{2^{lk}})),
        \end{equation}
        where $\{u_{1},u_{2},\ldots,u_{2^{lk}}\}=\mathbb {Z}_{2^{l}}^{k}.$
	\end{Theorem}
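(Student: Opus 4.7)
The plan is to prove the equality by establishing the two inequalities $r_h^f(k,t) \le N_h(D_h^f(t,u_1,\ldots,u_{2^{lk}}))$ and $r_h^f(k,t) \ge N_h(D_h^f(t,u_1,\ldots,u_{2^{lk}}))$, using the additivity of the homogeneous weight on coordinate-wise concatenation, namely $\omega_h((a,b)) = \omega_h(a) + \omega_h(b)$, which was recorded immediately after Definition 2.2 in the preliminaries. Throughout I enumerate the messages as $\mathbb{Z}_{2^l}^k = \{u_1,\ldots,u_{2^{lk}}\}$ and abbreviate $D := D_h^f(t,u_1,\ldots,u_{2^{lk}})$.

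For the inequality $r_h^f(k,t) \ge N_h(D)$, I would start from an optimal FCCHD, i.e.\ an encoding function $Enc(u_i) = (u_i,p(u_i))$ of redundancy $r = r_h^f(k,t)$. I would form the set of redundancy vectors $P := \{p(u_1),\ldots,p(u_{2^{lk}})\} \subseteq \mathbb{Z}_{2^l}^r$ and check that it is a $D$-homogeneous distance code under the natural ordering inherited from the enumeration of the messages. The verification splits into two cases according to $[D]_{ij}$. When $f(u_i) = f(u_j)$ we have $[D]_{ij} = 0$ and there is nothing to prove. When $f(u_i) \neq f(u_j)$, the FCCHD condition gives $d_h(Enc(u_i),Enc(u_j)) \ge 2t+1$; by additivity of $\omega_h$ this equals $d_h(u_i,u_j) + d_h(p(u_i),p(u_j))$, so
\begin{equation*}
d_h(p(u_i),p(u_j)) \;\ge\; 2t+1 - d_h(u_i,u_j) \;\ge\; [2t+1 - d_h(u_i,u_j)]^+ \;=\; [D]_{ij},
\end{equation*}
which is exactly the required inequality. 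Hence $P$ is a $D$-homogeneous distance code of length $r$, so $N_h(D) \le r = r_h^f(k,t)$.

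For the reverse inequality $r_h^f(k,t) \le N_h(D)$, I would take an optimal $D$-homogeneous distance code $P = \{p_1,\ldots,p_{2^{lk}}\}$ of length $r' = N_h(D)$, with an ordering witnessing $d_h(p_i,p_j) \ge [D]_{ij}$ for all $i,j$. I would then define the encoding function $Enc : \mathbb{Z}_{2^l}^k \to \mathbb{Z}_{2^l}^{k+r'}$ by $Enc(u_i) = (u_i,p_i)$. To check the FCCHD property from Definition 2.3, pick $u_i,u_j$ with $f(u_i) \neq f(u_j)$; again by additivity of $\omega_h$,
\begin{equation*}
d_h(Enc(u_i),Enc(u_j)) \;=\; d_h(u_i,u_j) + d_h(p_i,p_j) \;\ge\; d_h(u_i,u_j) + [2t+1-d_h(u_i,u_j)]^+ \;\ge\; 2t+1.
\end{equation*}
Thus $Enc$ defines an FCCHD of redundancy $r'$, so $r_h^f(k,t) \le r' = N_h(D)$, completing the equality.

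Neither direction is technically deep; the only thing to be careful about is to use the enumeration $\{u_1,\ldots,u_{2^{lk}}\}$ consistently on both sides (so that the ordering required in Definition 2.5 is inherited from, or transferred to, the FCCHD), and to invoke the $[\,\cdot\,]^+$ clause correctly when $d_h(u_i,u_j) \ge 2t+1$ already. The main conceptual point, which I would stress briefly, is that the additivity $\omega_h((u,p)) = \omega_h(u) + \omega_h(p)$ for the homogeneous weight lets the global distance $2t+1$ requirement between $Enc(u_i)$ and $Enc(u_j)$ split cleanly into the fixed part $d_h(u_i,u_j)$ plus the residual requirement $[2t+1-d_h(u_i,u_j)]^+$ placed on the redundancy coordinates — and this split is exactly what the matrix $D_h^f$ encodes.
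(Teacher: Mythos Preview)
Your proposal is correct and follows essentially the same two-inequality strategy as the paper; the only cosmetic difference is that for the direction $r_h^f(k,t) \ge N_h(D)$ you argue directly that the redundancy vectors of an optimal FCCHD form a $D$-homogeneous distance code, whereas the paper phrases the same computation as a proof by contradiction. Your explicit handling of the case $f(u_i)=f(u_j)$ and of the $[\,\cdot\,]^+$ clause is a welcome clarification that the paper leaves implicit.
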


    \begin{proof}
       To prove $r_{h}^{f}(k,t) \geqslant N_{h}(D_{h}^{f}(t,u_{1},\ldots,u_{2^{lk}})), $ we assume the contrary that $$r_{h}^{f}(k,t)<N_{h}(D_{h}^{f}(t,u_{1},u_{2},\ldots,u_{2^{lk}})).$$
       Let the encoding function $Enc:\mathbb{Z}_{2^{l}}^{k} \rightarrow \mathbb{Z}_{2^{l}}^{k+r},u_{i} \mapsto (u_{i},p_{i})$ be an FCCHD with redundancy $ r=r_{h}^{f}(k,t). $ Since redundancies of this FCCHDs do not form a $ D_{h}^{f}(t,u_{1},u_{2},\ldots,u_{2^{lk}}) $-homogeneous distance code, there exists two redundancy vectors $p_{i},~p_{j},~i \neq j$ such that $$d_{h}(p_{i},p_{j})<2t+1-d_{h}(u_{i},u_{j}),$$ hence,
       $$ d_{h}(Enc(u_{i}),Enc(u_{j}))=d_{h}(u_{i},u_{j})+d_{h}(p_{i},p_{j})<2t+1, $$
       this is a contradiction to the definition of FCCHDs.

       On the other hand,\! suppose that $ {p_{1},p_{2},\ldots,p_{2^{lk}}} $\! form a\! $ {D_{h}^{f}(t,u_{1},u_{2},\ldots,u_{2^{lk}})} $-homogeneous distance code with length $ r'={N_{h}(D_{h}^{f}(t,u_{1},u_{2},\ldots,u_{2^{lk}}))}. $ Then the encoding function ${Enc:\mathbb{Z}_{2^{l}}^{k} \rightarrow \mathbb{Z}_{2^{l}}^{k+r'}},$ $u_{i} \mapsto (u_{i},p_{i})$ is an FCCHD by using Definition \ref{Def4}, thus $$r_{h}^{f}(k,t) \leqslant N_{h}(D_{h}^{f}(t,u_{1},u_{2},\ldots,u_{2^{lk}})).$$

       Therefore, $$ r_{h}^{f}(k,t)=N_{h}(D_{h}^{f}(t,u_{1},u_{2},\ldots,u_{2^{lk}})). $$
    \end{proof}

    Since\! ${N_{h}(D_{h}^{f}(t,u_{1},u_{2},\ldots,u_{2^{lk}}))}$\! is usually hard to compute,\! where\! ${\{u_{1},u_{2},\ldots,u_{2^{lk}}\}=\mathbb {Z}_{2^{l}}^{k}},$ we can choose less different vectors $ {u_{1},u_{2},\ldots,u_{M}} $ in $ \mathbb{Z}_{2^{l}}^{k} $ to obtain a simpler matrix $ {D_{h}^{f}(t,u_{1},u_{2},\ldots,u_{M})}, $ therein, $ M $ is a positive integer. This is easier for us to obtain a bound.
    \begin{Corollary}\label{Cor1}
       Let $ u_{1},u_{2},\ldots,u_{M} \in \mathbb{Z}_{2^{l}}^{k} $ be arbitrary different vectors. Then the optimal redundancy of {\rm FCCHDs} for any function $ f $ satisfies the following lower bound: $$ r_{h}^{f}(k,t) \geqslant N_{h}(D_{h}^{f}(t,u_{1},u_{2},\ldots,u_{M})). $$
       In particular, for any function $f$ with $|{\rm Im}(f)| \geqslant 2, $ then $ r_{h}^{f}(k,t) \geqslant t. $
    \end{Corollary}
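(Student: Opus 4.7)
The first inequality is a direct consequence of Theorem~\ref{Th1} by restriction. Starting from an optimal FCCHD $\mathrm{Enc}(u)=(u,p(u))$ of redundancy $r=r_h^f(k,t)$, the proof of Theorem~\ref{Th1} gives that the family $\{p(v):v\in\mathbb{Z}_{2^l}^k\}$ is a $D_h^f(t,v_1,\ldots,v_{2^{lk}})$-homogeneous distance code, where $v_1,\ldots,v_{2^{lk}}$ enumerate $\mathbb{Z}_{2^l}^k$. Picking the sub-indexing corresponding to the given $u_1,\ldots,u_M$ and observing that the entry $[D_h^f(t,u_1,\ldots,u_M)]_{ij}$ depends only on the pair $(u_i,u_j)$, the corresponding $M$ redundancy vectors automatically satisfy all the required pairwise homogeneous distance lower bounds and thus form a $D_h^f(t,u_1,\ldots,u_M)$-homogeneous distance code of length $r$. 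By Definition~\ref{Def5} this yields $N_h(D_h^f(t,u_1,\ldots,u_M))\leq r=r_h^f(k,t)$.

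For the specialization $r_h^f(k,t)\geq t$, the plan is to instantiate the first inequality at $M=2$ with a carefully chosen pair. Since $|\mathrm{Im}(f)|\geq 2$ there exist $v,v'\in\mathbb{Z}_{2^l}^k$ with $f(v)\neq f(v')$; Equation~(\ref{Formula}) gives $\omega_h(1)=1$, so $\mathbb{Z}_{2^l}^k$ is connected under unit-homogeneous-weight steps. Walking from $v$ to $v'$ along such unit steps, at least one adjacent pair $u_1,u_2$ on the path must satisfy $f(u_1)\neq f(u_2)$ and $d_h(u_1,u_2)=1$, so by Definition~\ref{Def4}, $[D_h^f(t,u_1,u_2)]_{12}=[2t+1-1]^{+}=2t$.

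It then suffices to verify $N_h(2,2t)\geq t$. For any $p_1,p_2\in\mathbb{Z}_{2^l}^r$, Equation~(\ref{Formula}) bounds each coordinate's homogeneous weight by $2$, so $d_h(p_1,p_2)=\omega_h(p_1-p_2)\leq 2r$. The requirement $d_h(p_1,p_2)\geq 2t$ thus forces $r\geq t$, and combining with the first inequality gives $r_h^f(k,t)\geq N_h(D_h^f(t,u_1,u_2))\geq t$. The only step demanding any real thought is the existence of the unit-distance pair $u_1,u_2$ with differing $f$-values; everything else is bookkeeping from Theorem~\ref{Th1}, Definition~\ref{Def5}, and the explicit weight formula~(\ref{Formula}).
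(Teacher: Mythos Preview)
Your proof is correct and follows essentially the same approach as the paper: restriction from Theorem~\ref{Th1} for the first inequality, a connectivity argument in $\mathbb{Z}_{2^l}^k$ under unit homogeneous steps to produce $u_1,u_2$ with $d_h(u_1,u_2)=1$ and $f(u_1)\neq f(u_2)$, and then the reduction to $N_h(2,2t)$. Your handling of the last step is in fact slightly cleaner than the paper's: you prove the needed direction $N_h(2,2t)\geq t$ directly via the coordinate-wise bound $\omega_h\leq 2$, whereas the paper exhibits a length-$t$ code achieving distance $2t$ (which shows $N_h(2,2t)\leq t$) and asserts equality without writing out the lower bound.
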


    \begin{proof}
       Since $ \{u_{1},u_{2},\ldots,u_{M}\} \subseteq \mathbb{Z}_{2^{l}}^{k}=\{u_{1},u_{2},\ldots,u_{2^{lk}} \},$ by Theorem \ref{Th1}, we have $$ N_{h}(D_{h}^{f}(t,u_{1},u_{2},\ldots,u_{M})) \leqslant N_{h}(D_{h}^{f}(t,u_{1},u_{2},\ldots,u_{2^{lk}}))=r_{h}^{f}(k,t). $$

       If $|{\rm Im}(f)| \geqslant 2, $ we can claim that there exist $u, \, u' \in \mathbb{Z}_{2^{l}}^{k}$ with $ d_{h}(u,u')=1 $ satisfying $ f(u) \neq f(u'). $ In the following, we prove this claim. Suppose the contrary, for all $ u,\,u' \in \mathbb{Z}_{2^{l}}^{k} $ with $ d_{h}(u,u')=1, $ we have $ f(u)=f(u'). $ Firstly, we can fix a message $ u \in \mathbb{Z}_{2^{l}}^{k}, $ then function values of all messages in the ball $ B_{h}(u,1) $ are equal $ f(u), $ where $ {B_{h}(u,1)=\{\,v \in \mathbb{Z}_{2^{l}}^{k} \, | \, d_{h}(v,u) \leqslant 1 \,\}}. $ Next, we choose a different message $ u' $ in $ B_{h}(u,1), $ similarly, function values in the new ball $ B_{h}(u',1) $ are  equal $ f(u'). $ Continue this process, we obtain that $ |{\rm Im}(f)|=1, $ a contradiction. Then $$r_{h}^{f}(k,t) \geqslant N_{h}(D_{h}^{f}(t,u,u'))=N_{h}(2,2t).$$

       By Definition \ref{Def4}, $ P=\{(\overset{t}{\overbrace{0,0,\ldots,0}}),(\overset{t}{\overbrace{2^{l-1},2^{l-1},\ldots,2^{l-1}}})\} $ is a $D_{h}^{f}(t,u,u')$-homogeneous distance code, thus $N_{h}(2,2t)=t.$
    \end{proof}

    The purpose of an FCCHD for a function $ f $ is to get correct function values after transmitting, thus we need to study the connection between $r_{h}^{f}(k,t)$ and $N_{h}(D),$ where the matrix $D$ is the function homogeneous distance matrix for the function $f.$
    \begin{Theorem}\label{Th2}
       For any function $f:\mathbb{Z}_{2^{l}}^{k} \rightarrow {\rm Im}(f),$
         \begin{equation}\label{Formula 6}
           r_{h}^{f}(k,t) \leqslant N_{h}(D_{h}^{f}(t,f_{1},\ldots,f_{E})),
        \end{equation}
        where $E=|{\rm Im}(f)|$, $\,{\rm Im}(f)=\{\,f_{1},\,f_{2},\,\ldots,\,f_{E}\,\}$.
    \end{Theorem}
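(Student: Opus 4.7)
The plan is to prove the upper bound constructively by exhibiting an explicit FCCHD whose redundancy length matches $N_h(D_h^f(t,f_1,\ldots,f_E))$. So my first step will be to set $r' = N_h(D_h^f(t,f_1,\ldots,f_E))$ and invoke Definition \ref{Def5} to obtain an ordered $D_h^f(t,f_1,\ldots,f_E)$-homogeneous distance code $\{q_1,q_2,\ldots,q_E\}\subseteq \mathbb{Z}_{2^l}^{r'}$ satisfying $d_h(q_i,q_j)\geqslant [2t+1-d_h^f(f_i,f_j)]^+$ for all $i\neq j$. The redundancy vectors will be drawn from this small pool, one per fiber of $f$.

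Next, I would define the encoding function $Enc:\mathbb{Z}_{2^l}^{k}\rightarrow \mathbb{Z}_{2^l}^{k+r'}$ by $Enc(u)=(u,q_{\sigma(u)})$, where $\sigma(u)\in[E]$ is the unique index with $f(u)=f_{\sigma(u)}$. This is well-defined because $\{f_1,\ldots,f_E\}$ enumerates $\mathrm{Im}(f)$. Note that all messages lying in the same fiber $f^{-1}(f_i)$ get the same redundancy vector $q_i$, which is exactly what the function homogeneous distance matrix should allow.

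Then I verify that $Enc$ meets the FCCHD condition of Definition \ref{Def2}. Pick any $u_1,u_2\in\mathbb{Z}_{2^l}^{k}$ with $f(u_1)\neq f(u_2)$, say $f(u_1)=f_i$, $f(u_2)=f_j$ with $i\neq j$. By Definition \ref{Def6}, $d_h(u_1,u_2)\geqslant d_h^f(f_i,f_j)$. I split on whether $d_h^f(f_i,f_j)\geqslant 2t+1$ or not: in the first case the prefix part already contributes enough, namely $d_h(Enc(u_1),Enc(u_2))\geqslant d_h(u_1,u_2)\geqslant 2t+1$; in the second case, the defining inequality of the $D$-homogeneous distance code gives $d_h(q_i,q_j)\geqslant 2t+1-d_h^f(f_i,f_j)$, so
\begin{equation*}
d_h(Enc(u_1),Enc(u_2))=d_h(u_1,u_2)+d_h(q_i,q_j)\geqslant d_h^f(f_i,f_j)+(2t+1-d_h^f(f_i,f_j))=2t+1.
\end{equation*}
Either way the FCCHD condition holds, so by Definition \ref{Def3} we conclude $r_h^f(k,t)\leqslant r'=N_h(D_h^f(t,f_1,\ldots,f_E))$.

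There is no real obstacle here: the only delicate point is the $[\,\cdot\,]^+$ clipping in the definition of the homogeneous distance matrix, which is precisely why the case split on $d_h^f(f_i,f_j)\geqslant 2t+1$ is needed. Unlike Theorem \ref{Th1}, equality is not expected in general since distinct messages in the same fiber are allowed to share a redundancy vector, which can only decrease the length needed compared to distinguishing all $2^{lk}$ messages.
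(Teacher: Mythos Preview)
Your proof is correct and follows essentially the same construction as the paper: pick a $D_h^f(t,f_1,\ldots,f_E)$-homogeneous distance code of minimal length, assign the $i$-th codeword as the redundancy vector for the entire fiber $f^{-1}(f_i)$, and verify the FCCHD condition using $d_h(u_1,u_2)\geqslant d_h^f(f_i,f_j)$. Your explicit case split on whether $d_h^f(f_i,f_j)\geqslant 2t+1$ is slightly more careful than the paper's version, which silently drops the $[\,\cdot\,]^+$ when invoking the distance-code inequality.
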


    \begin{proof}
       For convenience, denote $ r \triangleq N_{h}(D_{h}^{f}(t,f_{1},\ldots,f_{E})) . $ We choose $ P=\{p_{1},\ldots,p_{E}\}$ $ \subseteq \mathbb{Z}_{2^{l}}^{r} $ to be a $ D_{h}^{f}(t,f_{1},\ldots,f_{E}) $-homogeneous distance code such that $d_{h}(p_{i},p_{j}) \geqslant 2t+1-d_{h}^{f}(f_{i},f_{j}) , $ when $ i \neq j . $

       Define an encoding function $Enc:u \mapsto (u,p_{i}) , $ where $ f(u)=f_{i} . $ Thus, codewords of two information vectors with the same function value have the same redundancy vector. For any $ u_{i},u_{j} $ with $ f(u_{i})=f_{i},~f(u_{j})=f_{j}, $ if $ f_{i} \neq f_{j}, $ we have $$ d_{h}(Enc(u_{i}),Enc(u_{j}))=
               d_{h}(u_{i},u_{j})+d_{h}(p_{i},p_{j}). $$
        Refer to Definition \ref{Def5} and Definition \ref{Def6}, we obtain $$ d_{h}(Enc(u_{i}),Enc(u_{j}))\geqslant {d_{h}^{f}(f_{i},f_{j})+2t+1-d_{h}^{f}(f_{i},f_{j})}, $$ thus $ d_{h}(Enc(u_{i}),Enc(u_{j})) \geqslant {2t+1}. $
         By using the definition of FCCHDs, this encoding function $ Enc $ defines an FCCHD. Thus $$r_{h}^{f}(k,t) \leqslant N_{h}(D_{h}^{f}(t,f_{1},\ldots,f_{E})).$$
    \end{proof}

    There is an important special case such that the bound in Theorem \ref{Th2} to be tight as follows.
    \begin{Corollary}\label{Cor2}
        If there exists a set of representative information vectors $ u_{1},u_{2},\ldots,u_{E} $ with $ \{f(u_{1}),f(u_{2}),\ldots,f(u_{E})\}={\rm Im}(f) $ and $ D_{h}^{f}(t,f_{1},\ldots,f_{E})=D_{h}^{f}(t,u_{1},\ldots,u_{E}), $ then
        $$ r_{h}^{f}(k,t)=N_{h}(D_{h}^{f}(t,f_{1},\ldots,f_{E})).$$
    \end{Corollary}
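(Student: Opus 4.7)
The plan is to sandwich $r_h^f(k,t)$ between two matching bounds, one supplied by Theorem~\ref{Th2} and the other by Corollary~\ref{Cor1}, and then collapse them using the hypothesis $D_h^f(t,f_1,\ldots,f_E)=D_h^f(t,u_1,\ldots,u_E)$. So this corollary should be almost immediate: neither direction requires a new construction, only an appeal to the two results just proved.

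First, I would invoke Theorem~\ref{Th2} directly to obtain the upper bound
\[
r_h^f(k,t)\;\leqslant\; N_h\bigl(D_h^f(t,f_1,\ldots,f_E)\bigr),
\]
with no additional argument. Next, for the matching lower bound, I would apply Corollary~\ref{Cor1} to the $E$ chosen representative vectors $u_1,\ldots,u_E\in\mathbb{Z}_{2^l}^k$ (which are pairwise distinct because their images under $f$ are pairwise distinct elements of $\mathrm{Im}(f)$). That yields
\[
r_h^f(k,t)\;\geqslant\; N_h\bigl(D_h^f(t,u_1,\ldots,u_E)\bigr).
\]
Using the assumption $D_h^f(t,f_1,\ldots,f_E)=D_h^f(t,u_1,\ldots,u_E)$, the quantities $N_h(\cdot)$ attached to the two sides are equal by Definition~\ref{Def5} (since $N_h$ depends only on the matrix), giving the required lower bound.

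Combining the two inequalities produces the claimed equality. There is no genuine obstacle here; the only thing worth flagging is that the hypothesis is really a statement that the representatives $u_1,\ldots,u_E$ achieve the minimum in the definition of $d_h^f(f_i,f_j)$ for every pair $i\neq j$, so that the $[\,2t+1-d_h(u_i,u_j)\,]^+$ entries agree with the $[\,2t+1-d_h^f(f_i,f_j)\,]^+$ entries. Once this is noted, the proof reduces to quoting Theorem~\ref{Th2} and Corollary~\ref{Cor1} and comparing.
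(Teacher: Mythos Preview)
Your proposal is correct and follows exactly the paper's own approach: the paper's proof simply cites Corollary~\ref{Cor1} and Theorem~\ref{Th2} and says the result follows easily. Your write-up just makes the two invocations explicit and notes why the representatives are distinct, which is entirely in line with the intended argument.
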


    \begin{proof}
        By Corollary \ref{Cor1} and Theorem \ref{Th2}, we can easily obtain this result.
    \end{proof}

    \subsection{Bonds of $ N_{h}(D) $ for some matrices $ D $}
    Before this subsection, we have found some connections between the values of $ r_{h}^{f}(k,t) $ and $ N_{h}(D) $ for some special matrices $ D $. Thus if we know bounds of $ N_{h}(D) $, we will obtain some bounds of $ r_{h}^{f}(k,t) $. In this subsection, we mainly study bounds of $ N_{h}(D) $ for some matrices $ D $.

    \begin{Lemma}\label{Lem2}{\rm(\cite{go})}
       For any $(n,M,d)$ code $C$ with homogeneous distance, where $ d $ is the minimum homogeneous distance of code $ C, $ the following inequality holds
       $$ M(M-1)d \leqslant \sum_{x,y \in C}{\omega_{h}(x-y)} \leqslant nM^{2}. $$
    \end{Lemma}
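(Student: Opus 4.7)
The plan is to prove the two inequalities separately, both by reorganizing the double sum $\sum_{x,y\in C}\omega_h(x-y)$ as a double count.

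The lower bound is immediate from the definition of the minimum distance: every ordered pair $(x,y)$ with $x\neq y$ satisfies $\omega_h(x-y)=d_h(x,y)\geq d$, and the $M$ diagonal pairs contribute zero. Counting the $M(M-1)$ off-diagonal pairs yields $\sum_{x,y\in C}\omega_h(x-y)\geq M(M-1)d$.

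For the upper bound, I would swap the order of summation to work coordinate by coordinate: writing $\sum_{x,y\in C}\omega_h(x-y)=\sum_{i=1}^{n}S_i$ with $S_i:=\sum_{x,y\in C}\omega_h(x_i-y_i)$, and setting $n_a:=|\{x\in C:x_i=a\}|$ for each $a\in\mathbb{Z}_{2^l}$ (so $\sum_a n_a=M$), the per-coordinate contribution becomes the quadratic form $S_i=\sum_{a,b\in\mathbb{Z}_{2^l}}n_a n_b\,\omega_h(a-b)$. It then suffices to establish $S_i\leq M^2$ for each $i$ and sum over the $n$ coordinates.

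The main obstacle is this per-coordinate inequality, since the naive estimate $\omega_h\leq\omega_{\max}=2$ would only yield $2nM^2$. I would exploit the explicit shape of $\omega_h$ on $\mathbb{Z}_{2^l}$ from equation~(\ref{Formula}): the defect $g(c):=1-\omega_h(c)$ is supported on the two-element ideal $\langle 2^{l-1}\rangle=\{0,2^{l-1}\}$, with $g(0)=1$ and $g(2^{l-1})=-1$. Using $M^2=\left(\sum_a n_a\right)^2=\sum_{a,b}n_a n_b$, I would rewrite the gap as $M^2-S_i=\sum_{a,b}n_a n_b\,g(a-b)$; since only the shifts $a-b\in\{0,2^{l-1}\}$ survive, this collapses to $\sum_a n_a^2-\sum_a n_a n_{a+2^{l-1}}$. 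Pairing each index $a\in\{0,1,\ldots,2^{l-1}-1\}$ with its antipode $a+2^{l-1}$ then produces the manifestly non-negative expression $\sum_{a=0}^{2^{l-1}-1}(n_a-n_{a+2^{l-1}})^2$, yielding $S_i\leq M^2$; the rest is bookkeeping across the $n$ coordinates.
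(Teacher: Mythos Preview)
The paper does not supply its own proof of this lemma; it is quoted from Greferath--O'Sullivan \cite{go} and used as a black box. Your argument is correct for the ring $\mathbb{Z}_{2^{l}}$, which is all the paper requires: the lower bound is immediate, and for the upper bound your per-coordinate reduction together with the explicit sum-of-squares identity
\[
M^{2}-S_{i}=\sum_{a=0}^{2^{l-1}-1}\bigl(n_{a}-n_{a+2^{l-1}}\bigr)^{2}\ \geqslant\ 0
\]
is valid, using only that on $\mathbb{Z}_{2^{l}}$ the defect $1-\omega_{h}(c)$ is supported on $\{0,2^{l-1}\}$ with values $1,-1$.

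The cited source proves the inequality over arbitrary finite Frobenius rings, where no such explicit formula for $\omega_{h}$ is available. There the per-coordinate bound $S_{i}\leqslant \gamma M^{2}$ comes from the character-theoretic description of the homogeneous weight: writing $\omega_{h}$ via a generating character turns $\sum_{a,b}n_{a}n_{b}\bigl(\gamma-\omega_{h}(a-b)\bigr)$ into a non-negative combination of terms $\bigl|\sum_{a}n_{a}\chi(ua)\bigr|^{2}$. Your approach is more elementary and self-contained for the special ring at hand; theirs explains structurally why the same inequality holds in general.
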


    By using the above lemma and Definition \ref{Def5}, the following lemma is obtained.
    \begin{Lemma}\label{Lem3}
        For any matrix $D \in \mathbb{N}_{0}^{M \times M},$ $$ N_{h}(D) \geqslant \frac{1}{M^{2}}\sum_{i,j}{[D]_{ij}}. $$
    \end{Lemma}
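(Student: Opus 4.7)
The plan is to apply the upper half of Lemma \ref{Lem2} to an optimal $D$-homogeneous distance code and read off the bound. The bound is a simple Plotkin-style averaging inequality, so no new idea is needed beyond packaging what Lemma \ref{Lem2} already provides.

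First, I would let $P=\{p_{1},p_{2},\ldots,p_{M}\}\subseteq\mathbb{Z}_{2^{l}}^{r}$ be a $D$-homogeneous distance code of length $r=N_{h}(D)$, with the ordering guaranteed by Definition \ref{Def5}, so that $d_{h}(p_{i},p_{j})\geqslant[D]_{ij}$ for all $i,j\in[M]$. If no such code exists, then $\sum_{i,j}[D]_{ij}$ must already be infinite or $N_{h}(D)=\infty$, and the inequality is vacuous; otherwise $P$ is a genuine $(r,M,d)$ code in the sense of Lemma \ref{Lem2}, where $d$ is its minimum homogeneous distance (we will not need to use $d$ explicitly).

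Second, I would invoke the right-hand inequality of Lemma \ref{Lem2}, which gives
\[
\sum_{x,y\in P}\omega_{h}(x-y)\;\leqslant\;rM^{2}.
\]
Here the left side equals $\sum_{i,j\in[M]}d_{h}(p_{i},p_{j})$, and by the $D$-homogeneous distance property each term is at least $[D]_{ij}$. Combining these two observations,
\[
\sum_{i,j}[D]_{ij}\;\leqslant\;\sum_{i,j}d_{h}(p_{i},p_{j})\;\leqslant\;rM^{2},
\]
and dividing by $M^{2}$ yields $N_{h}(D)=r\geqslant\frac{1}{M^{2}}\sum_{i,j}[D]_{ij}$.

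There is essentially no obstacle: the only subtlety is noticing that Lemma \ref{Lem2} is phrased with the minimum distance $d$, but only its right-hand inequality (which is independent of $d$) is needed. The diagonal contribution $\sum_{i}[D]_{ii}$ is either zero (the natural case arising from Definitions \ref{Def4} and \ref{Def7}) or forces $N_{h}(D)=\infty$, so including the diagonal indices in the double sum is harmless.
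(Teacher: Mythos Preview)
Your proposal is correct and follows essentially the same approach as the paper: take an optimal $D$-homogeneous distance code of length $r=N_{h}(D)$, apply the right-hand inequality of Lemma~\ref{Lem2} to bound $\sum_{i,j}d_{h}(p_{i},p_{j})\leqslant rM^{2}$, and combine with $d_{h}(p_{i},p_{j})\geqslant[D]_{ij}$. Your extra remarks about the diagonal terms and the vacuous case are not in the paper's proof but do no harm.
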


    \begin{proof}
        Let $r=N_{h}(D)$, $P=\{p_{1},p_{2},\ldots,p_{M}\} \subseteq \mathbb{Z}_{2^{l}}^{r} $ be a $D$-homogeneous distance code of length $ r $. By Lemma \ref{Lem2}, then $$ \sum_{i,j}{\omega_{h}(p_{i}-p_{j})}=\sum_{i,j}{d_{h}(p_{i},p_{j})} \leqslant rM^{2}. $$ Since $$ \sum_{i,j}{[D]_{ij}} \leqslant \sum_{i,j}{d_{h}(p_{i},p_{j})} \leqslant rM^{2}, $$
        we have $$r=N_{h}(D) \geqslant \frac{1}{M^{2}}\sum_{i,j}{[D]_{ij}}.$$
    \end{proof}

    For any homogeneous distance requirement matrix $ D $, Lemma \ref{Lem3} offers a lower bound of $N_{h}(D)$ to be $ \frac{2}{M^{2}}\sum_{i,j:i<j}{[D]_{ij}} $ as $ D $ is a symmetric matrix. Next, we show an upper bound of the smallest length of $D$-homogeneous distance codes.

    Fix an element $ x $ in $ \mathbb{Z}_{2^{l}}^{r} , $ we denote
    $$ V_{h}(r,d) \triangleq |\{~c \in \mathbb{Z}_{2^{l}}^{r}~|~d_{h}(c,x) \leqslant d~\}|. $$ Similar to Gilbert-Varshamov bound, we obtain a lemma as follows.
    \begin{Lemma}\label{Lem4}
        For any homogeneous distance requirement matrix $D \in \mathbb{N}_{0}^{M \times M}$, and any permutation $\pi : [M] \rightarrow [M],$ $$ N_{h}(D) \leqslant \min_{r \in \mathbb{N}} { \left\{ ~ r:2^{lr} > \max_{j \in [M]}{\sum_{i=1}^{j-1}{V_{h}(r,{[D]_{\pi(i)\pi(j)}-1})}} ~ \right\} ~ }. $$
    \end{Lemma}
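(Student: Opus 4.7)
The plan is to prove this Gilbert--Varshamov style bound by a greedy sequential construction. Fix a permutation $\pi:[M]\to[M]$ and let $r$ be any positive integer satisfying
\[
2^{lr} \;>\; \max_{j\in[M]}\sum_{i=1}^{j-1} V_{h}\bigl(r,[D]_{\pi(i)\pi(j)}-1\bigr).
\]
I would construct the codewords $p_{\pi(1)},p_{\pi(2)},\ldots,p_{\pi(M)}\in\mathbb{Z}_{2^{l}}^{r}$ one at a time in the order prescribed by $\pi$, and verify at each step that a valid choice exists. Then $\{p_{1},\ldots,p_{M}\}$ forms a $D$-homogeneous distance code of length $r$, so $N_{h}(D)\le r$, and minimizing over $r$ yields the claimed inequality.

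The core of the argument is the following inductive step. Pick $p_{\pi(1)}$ arbitrarily. Suppose $p_{\pi(1)},\ldots,p_{\pi(j-1)}$ have already been selected so that $d_{h}(p_{\pi(i)},p_{\pi(i')}) \ge [D]_{\pi(i)\pi(i')}$ for all $i<i'<j$. To choose $p_{\pi(j)}$, I need a vector in $\mathbb{Z}_{2^{l}}^{r}$ that lies outside every ball
\[
B_{\pi(i)} \;\triangleq\; \bigl\{\,x\in\mathbb{Z}_{2^{l}}^{r} \;:\; d_{h}(x,p_{\pi(i)})\le [D]_{\pi(i)\pi(j)}-1\,\bigr\}, \qquad i=1,\ldots,j-1.
\]
By definition of $V_{h}$, each such ball has size $V_{h}(r,[D]_{\pi(i)\pi(j)}-1)$, so the number of forbidden vectors is at most $\sum_{i=1}^{j-1}V_{h}(r,[D]_{\pi(i)\pi(j)}-1)$, which in turn is at most the maximum appearing in the defining inequality for $r$. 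Since this maximum is strictly less than $2^{lr} = |\mathbb{Z}_{2^{l}}^{r}|$, the complement of $\bigcup_{i<j}B_{\pi(i)}$ in $\mathbb{Z}_{2^{l}}^{r}$ is non-empty, so a valid $p_{\pi(j)}$ exists and satisfies $d_{h}(p_{\pi(j)},p_{\pi(i)})\ge [D]_{\pi(i)\pi(j)}$ for all $i<j$, extending the partial code.

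After $M$ such steps, the resulting set $P=\{p_{\pi(1)},\ldots,p_{\pi(M)}\}$ meets all the distance constraints imposed by $D$ under the given ordering, hence $P$ is a $D$-homogeneous distance code of length $r$ in the sense of Definition \ref{Def5}. Therefore $N_{h}(D)\le r$, and taking the minimum over all $r$ satisfying the stated inequality gives the lemma. No step here is especially deep; the main thing to be careful about is the bookkeeping of indices under the permutation $\pi$ and the observation that the inequality is strict (which ensures a strictly positive number of admissible codewords at every stage). The only substantive ingredient beyond elementary counting is that the homogeneous ball size $V_{h}(r,d)$ does not depend on the center, which follows from translation invariance of $d_{h}$ on $\mathbb{Z}_{2^{l}}^{r}$.
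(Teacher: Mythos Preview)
Your proposal is correct and follows essentially the same Gilbert--Varshamov style greedy construction as the paper's own proof: pick codewords one at a time in the order dictated by $\pi$, and at each step use the strict inequality $2^{lr}>\sum_{i<j}V_{h}(r,[D]_{\pi(i)\pi(j)}-1)$ to guarantee a point outside the union of forbidden balls. The only cosmetic difference is that the paper first reduces to the identity permutation for notational convenience and then remarks that the ordering was arbitrary, whereas you carry the permutation $\pi$ explicitly throughout.
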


    \begin{proof}
        For simplicity, we assume $ \pi(i)=i,\,\forall \, i \in [M] . $
        Choosing an arbitrary codeword $ p_{1} \in \mathbb{Z}_{2^{l}}^{r} , $ if $ (2^{l})^{r}>V_{h}(r,{[D]_{12}-1}), $ then there exists $p_{2} \in \mathbb{Z}_{2^{l}}^{r}$ such that $ d_{h}(p_{1},p_{2}) \geqslant [D]_{12}. $ Similarly, if  $$(2^{l})^{r}>{V_{h}(r,{[D]_{13}-1})+V_{h}(r,{[D]_{23}-1})},$$ we can choose $ p_{3} \in \mathbb{Z}_{2^{l}}^{r} $ such that $d_{h}(p_{1},p_{3}) \geqslant [D]_{13}, ~ d_{h}(p_{2},p_{3}) \geqslant [D]_{23}.$ Continue this process, if $$(2^{l})^{r}>\sum_{i=1}^{M-1}{V_{h}(r,{[D]_{iM}-1})},$$ we can choose $p_{M} \in \mathbb{Z}_{2^{l}}^{r}$ such that $d_{h}(p_{i},p_{M}) \geqslant [D]_{iM}.$

        Therefore, if $ (2^{l})^{r}>\max_{j \in [M]}{\sum_{i=1}^{j-1}{V_{h}(r,{[D]_{ij}-1})}}, $ there exist $p_{1},~p_{2},~\ldots~,~p_{M}$ to be codewords of a $D$-homogeneous distance code. By Definition \ref{Def5}, codewords of $D$-homogeneous distance codes can be chosen in arbitrary order , then $$ N_{h}(D) \leqslant \min_{r \in \mathbb{N}}{\{r:2^{lr} > \max_{j \in [M]}{\sum_{i=1}^{j-1}{V_{h}(r,{[D]_{\pi(i)\pi(j)}-1})}}\}}. $$
    \end{proof}

    Since for any two vectors $x,~y \in \mathbb{Z}_{2^{l}}^{k},~d_{h}(x,y) \geqslant d(x,y),$ where $d(x,y)$ denotes Hamming distance between vectors $x,~y.$ In order to use this inequation to obtain a bound of $ N_{h}(D) $ for some special matrices $ D $, we need the following definition and lemmas.
    \begin{Definition}{\rm (\cite{lbwy})}\label{Def6.1}
        Let $ D \in \mathbb{N}_{0}^{M \times M} . $ Then $ P=\{p_{1},p_{2},\ldots,p_{M}\} $ is a $D$-code, if there exists an ordering of codewords of $P$ such that $d(p_{i},p_{j}) \geqslant [D]_{ij}$ for all $i,j \in [M], $ where $ d(p_{i},p_{j}) $ denotes the Hamming distance between $ p_{i} $ and $ p_{j}. $

       Furthermore, we define $N(D)$ to be the smallest integer $r$ such that there exists a $D$-code of length $ r . $ If $[D]_{ij}=d$ for all $i \neq j,~i,j \in [M] , $ we write $N(D)$ to be $N(M,d) . $
    \end{Definition}

    In {\cite{lbwy}}, it seems that there is a small error in Lemma 4 due to a computation error in the proof. The correct one is as follows.
    \begin{Lemma}{\rm (\cite{lbwy})}\label{Lem6.1}
        For any $ M,d \in \mathbb{N} $ with $ d \geqslant 10 $ and $ M \leqslant d^{2} , $
        \begin{equation}\label{2.3}
            N(M,d) \leqslant \frac{2d-2}{1-2\sqrt{{\ln{(d)}}/{d}}}.
        \end{equation}
    \end{Lemma}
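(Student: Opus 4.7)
The plan is to prove this bound by a greedy Gilbert--Varshamov style construction, with the size of a Hamming ball estimated via the multiplicative Chernoff lower-tail bound. I would set $r = \bigl\lceil (2d-2)/(1-2\sqrt{\ln(d)/d}) \bigr\rceil$ and construct $P = \{p_1,\ldots,p_M\} \subseteq \{0,1\}^r$ so that $d(p_i, p_j) \geq d$ for all $i \neq j$ in a greedy fashion. Having fixed $p_1, \ldots, p_{j-1}$, the number of vectors in $\{0,1\}^r$ blocked from being the next codeword is at most $(j-1) V(r, d-1)$, where $V(r, d-1) = \sum_{i=0}^{d-1}\binom{r}{i}$ is the Hamming ball volume. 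Thus the construction succeeds provided $(M-1)\, V(r, d-1) < 2^r$, and since $M \leq d^2$ it is enough to prove $V(r, d-1) \leq 2^r/d^2$.

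The key estimate interprets $V(r,d-1)/2^r$ as $\Pr[X \leq d-1]$ with $X \sim \mathrm{Bin}(r, 1/2)$. Writing $\alpha = 2\sqrt{\ln(d)/d}$ so that $r = (2d-2)/(1-\alpha)$, the threshold satisfies $d-1 = (1-\alpha)(r/2)$, i.e.\ the mean $r/2$ shrunk by factor $(1-\alpha)$. The multiplicative Chernoff lower-tail bound $\Pr[X \leq (1-\delta)\mu] \leq \exp(-\delta^2 \mu/2)$ with $\delta = \alpha$ and $\mu = r/2$ then gives
$$\Pr[X \leq d-1] \;\leq\; \exp\!\left(-\frac{\alpha^2 r}{4}\right) \;=\; \exp\!\left(-\frac{(2d-2)\ln d}{d(1-\alpha)}\right).$$
A short algebraic check shows that $(2d-2)/(d(1-\alpha)) \geq 2$ is equivalent to $d\alpha \geq 1$, which reduces to $2\sqrt{d\ln d} \geq 1$ and holds for every $d \geq 2$. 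Hence the exponent is at least $2\ln d$, so the right-hand side is at most $d^{-2}$, delivering the desired volume bound.

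The only delicate point, and where I expect the original \cite{lbwy} argument to have slipped, is ensuring the Chernoff lower tail is actually applicable, since it demands $0 < \delta < 1$, i.e.\ $\alpha < 1$, i.e.\ $4\ln d < d$. This is precisely where the hypothesis $d \geq 10$ is used: a direct check gives $4\ln 10 \approx 9.21 < 10$, so $\alpha < 1$ for all $d \geq 10$, keeping the entire chain of inequalities valid. Once the volume estimate $V(r,d-1) \leq 2^r/d^2$ is in hand, the greedy pick of $p_j$ succeeds at every step $j \leq M$, producing the required $D$-code with $[D]_{ij}=d$ and yielding $N(M,d) \leq r$, which is the claimed bound.
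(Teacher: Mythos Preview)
The paper does not supply its own proof of this lemma; it is simply quoted (with a noted correction) from \cite{lbwy}, so there is no in-paper argument to compare your proposal against.

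That said, your Gilbert--Varshamov plus Chernoff argument is the standard route and is essentially correct. The identification $d-1=(1-\alpha)\,r/2$ with $\alpha=2\sqrt{\ln(d)/d}$, the lower-tail bound $\Pr[X\le(1-\delta)\mu]\le e^{-\delta^{2}\mu/2}$, the algebraic reduction $(2d-2)/(d(1-\alpha))\ge 2\iff d\alpha\ge 1\iff 2\sqrt{d\ln d}\ge 1$, and the role of the hypothesis $d\ge 10$ in forcing $\alpha<1$ all check out. One small technicality: by taking $r=\lceil(2d-2)/(1-\alpha)\rceil$ you literally establish $N(M,d)\le\lceil(2d-2)/(1-\alpha)\rceil$, which can exceed the stated right-hand side by at most $1$ when $(2d-2)/(1-\alpha)$ is not an integer; this discrepancy is harmless for the way the lemma is used downstream (Proposition~\ref{Lem6}).
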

    \begin{Lemma}\label{Lem5}
        For any $d,M \in \mathbb{N},~N_{h}(M,d) \leqslant N(M,d).$
    \end{Lemma}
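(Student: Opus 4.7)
The plan is to show that every $D$-code in the sense of Definition~\ref{Def6.1} is automatically a $D$-homogeneous distance code on the same alphabet, so the minimum length required for the homogeneous-distance version cannot exceed that required for the Hamming-distance version.

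First I would verify the pointwise inequality $\omega_{h}(a) \geqslant w_{H}(a)$ for every $a \in \mathbb{Z}_{2^{l}}$, where $w_{H}(a) \in \{0,1\}$ denotes the single-coordinate Hamming weight. This is immediate from the homogeneous weight formula (\ref{Formula}): one has $\omega_{h}(0) = 0 = w_{H}(0)$, and $\omega_{h}(a) \in \{1,2\}$ while $w_{H}(a) = 1$ whenever $a \neq 0$. Summing this comparison coordinate-wise over $r$ positions yields exactly the inequality $d_{h}(x,y) \geqslant d(x,y)$ for all $x, y \in \mathbb{Z}_{2^{l}}^{r}$ already noted in the excerpt.

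Next I would set $r = N(M,d)$ and pick a $D$-code $P = \{p_{1}, p_{2}, \ldots, p_{M}\} \subseteq \mathbb{Z}_{2^{l}}^{r}$ of length $r$ realizing this minimum, where $D$ is the $M \times M$ matrix with all off-diagonal entries equal to $d$. By Definition~\ref{Def6.1}, there is an ordering of $P$ with $d(p_{i}, p_{j}) \geqslant d$ for all $i \neq j$. Chaining this with the pointwise comparison gives $d_{h}(p_{i}, p_{j}) \geqslant d(p_{i}, p_{j}) \geqslant d$ under the same ordering, so the identical set of codewords also witnesses a $D$-homogeneous distance code of length $r$. Hence $N_{h}(M,d) \leqslant r = N(M,d)$, as required.

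There is no substantive obstacle here; the entire argument reduces to the single pointwise comparison $\omega_{h} \geqslant w_{H}$ on $\mathbb{Z}_{2^{l}}$, after which the same codewords with the same ordering transport directly between the two code classes. The only care needed is to note that the comparison holds on every single residue of $\mathbb{Z}_{2^{l}}$ and therefore survives summation, which is why the constants in the formula (\ref{Formula}) being bounded below by $1$ on nonzero residues is essential.
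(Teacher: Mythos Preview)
Your proof is correct and follows essentially the same route as the paper: take an optimal $D$-code of length $r=N(M,d)$, use $d_{h}(p_{i},p_{j}) \geqslant d(p_{i},p_{j}) \geqslant d$ to conclude it is also a $D$-homogeneous distance code, and infer $N_{h}(M,d) \leqslant r$. The only difference is that you spell out the coordinate-wise verification of $\omega_{h} \geqslant w_{H}$ from formula~(\ref{Formula}), which the paper simply asserts in the paragraph preceding the lemma.
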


    \begin{proof}
        By Definition \ref{Def6.1}, let $r=N(M,d),$ then there exists a code $C=\{p_{1},p_{2},\ldots,p_{M}\}$ of length $r$ to be a $D$-code such that $d(p_{i},p_{j}) \geqslant d$ for all $i \neq j,~i,j \in [M].$ Since $$d_{h}(p_{i},p_{j}) \geqslant d(p_{i},p_{j}) \geqslant d , $$ $C$ is a $D$-homogeneous distance code, hence $$N_{h}(M,d) \leqslant N(M,d).$$
    \end{proof}

    Following Lemma \ref{Lem6.1} and Lemma \ref{Lem5}, we obtain another upper bound of $N_{h}(M,d)$ as follows.
    \begin{Proposition}\label{Lem6}
        For any $M,d \in \mathbb{N}$ with $d \geqslant 10$ and $ M \leqslant d^{2} , $ then $$ N_{h}(M,d) \leqslant \frac{2d-2}{1-2\sqrt{{\ln{(d)}}/{d}}} . $$
    \end{Proposition}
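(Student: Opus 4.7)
The proposition is set up so that the bound follows immediately by chaining the two preceding results. My plan is simply to combine Lemma \ref{Lem5} with Lemma \ref{Lem6.1}.

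First I would invoke Lemma \ref{Lem5}, which states that for any $M,d \in \mathbb{N}$ we have $N_h(M,d) \leqslant N(M,d)$. This uses nothing more than the pointwise inequality $d_h(x,y) \geqslant d(x,y)$ between homogeneous and Hamming distances on $\mathbb{Z}_{2^l}^r$, so a $D$-code in the Hamming sense is automatically a $D$-homogeneous distance code.

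Next, since the hypotheses $d \geqslant 10$ and $M \leqslant d^2$ are exactly those required by Lemma \ref{Lem6.1}, I would apply that lemma to conclude
\[
N(M,d) \leqslant \frac{2d-2}{1-2\sqrt{\ln(d)/d}}.
\]
Composing these two inequalities yields the desired bound on $N_h(M,d)$.

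There is essentially no obstacle here; the work has already been done in establishing Lemma \ref{Lem5} and in stating Lemma \ref{Lem6.1}. The only thing to be careful about is to remark that the condition $1-2\sqrt{\ln(d)/d}>0$ (which holds for $d \geqslant 10$) guarantees that the right-hand side is positive and finite, so the stated inequality makes sense.
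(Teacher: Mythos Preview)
Your proposal is correct and matches the paper's proof exactly: the paper simply chains Lemma \ref{Lem5} with Lemma \ref{Lem6.1} (Inequation (\ref{2.3})) to obtain $N_{h}(M,d) \leqslant N(M,d) \leqslant \frac{2d-2}{1-2\sqrt{\ln(d)/d}}$. Your additional remark about the positivity of the denominator for $d \geqslant 10$ is a nice clarification but is not explicitly mentioned in the paper.
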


    \begin{proof}
        By using Inequation (\ref{2.3}), it follows from Lemma \ref{Lem5} that $$N_{h}(M,d) \leqslant N(M,d) \leqslant \frac{2d-2}{1-2\sqrt{{\ln{(d)}}/{d}}} . $$
    \end{proof}

    By using Lemma \ref{lem2.3} and Lemma \ref{Lem4}, we obtain another bound of $ N_{h}(M,d) $ without restrictions on the parameters.
    \begin{Theorem}\label{Th3}
        For any $ M,d \in \mathbb{N}, $ it holds that
        $$ N_{h}(M,d) \leqslant \left\lceil {\frac{\ln{M}+d-1}{1-\ln{2}}} \right\rceil. $$
    \end{Theorem}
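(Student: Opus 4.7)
The plan is to combine Lemma \ref{Lem4} (the Gilbert--Varshamov-style upper bound) with Lemma \ref{lem2.3} (the Chernoff-type ball-size bound), specialized to the constant-entry matrix. Concretely, take $D \in \mathbb{N}_0^{M \times M}$ with $[D]_{ij}=d$ for $i \neq j$ and $[D]_{ii}=0$, and apply Lemma \ref{Lem4} with $\pi$ the identity. Because all off-diagonal entries are equal, the maximum over $j$ of $\sum_{i=1}^{j-1} V_h(r,[D]_{ij}-1)$ collapses to $(M-1)V_h(r,d-1)$, so the task reduces to exhibiting the smallest $r$ with $2^{lr} > (M-1)V_h(r,d-1)$, and then verifying that $r = \lceil (\ln M + d - 1)/(1-\ln 2)\rceil$ works.

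Next, I would feed $V_h(r,d-1)$ into Lemma \ref{lem2.3} with $\rho=(d-1)/r$ and the specific choice $\lambda=1$. The admissibility conditions $\omega_{\min} = 0 < \rho \leq \bar{\omega} = 1$ (noting $\bar{\omega} = 2^{-l}(0 + (2^l-2)\cdot 1 + 1\cdot 2) = 1$) amount to $d\geq 2$ and $r\geq d-1$; the latter is automatic for our candidate $r$ since $1-\ln 2 < 1$. This gives the compact estimate $V_h(r,d-1) \leq e^{d-1}\mathcal{L}(1)^r$.

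The heart of the argument is a uniform-in-$l$ bound on $\mathcal{L}(1)$, which I would establish as $\mathcal{L}(1) \leq 2^{l+1}/e$. Using the weight formula (\ref{Formula}), one has $\mathcal{L}(1)=1 + (2^l-2)e^{-1} + e^{-2}$, and the claim rearranges to $e(1-1/e)^2 = e - 2 + e^{-1} \leq 2^l$; the left-hand side is approximately $1.086$, so it holds for every $l\geq 1$. Plugging this in, $(M-1)V_h(r,d-1) \leq M e^{d-1}(2^{l+1}/e)^r$, so the desired $2^{lr} > (M-1)V_h(r,d-1)$ is implied by $2^{lr} \geq M e^{d-1}(2^{l+1}/e)^r$, which taking logarithms is exactly $r(1-\ln 2) \geq \ln M + (d-1)$. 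The ceiling choice therefore suffices.

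Two minor edge cases will need brief treatment: the $d=1$ case is handled directly, since $V_h(r,0)=1$ reduces Lemma \ref{Lem4} to $r \geq \lceil \log_{2^l} M\rceil$, which is dominated by the claimed bound because $l\ln 2 \geq \ln 2 > 1 - \ln 2$; and the $M=1$ case is trivial. The step I anticipate as the main obstacle is the clean algebraic verification $\mathcal{L}(1) \leq 2^{l+1}/e$, since it is precisely this inequality (and the matching choice $\lambda = 1$) that eliminates the dependence on $l$ and produces the constant $1-\ln 2$ in the denominator of the final bound.
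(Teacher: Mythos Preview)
Your proposal is correct and follows essentially the same route as the paper: apply Lemma~\ref{Lem4} to the constant matrix, bound $V_h(r,d-1)$ via Lemma~\ref{lem2.3} with $\lambda=1$, establish $\mathcal{L}(1)\leqslant 2^{l+1}/e$, and solve the resulting inequality. You are slightly more careful than the paper in two respects: you keep the sharper $(M-1)V_h$ from Lemma~\ref{Lem4} rather than loosening to $M V_h$, and you explicitly address the admissibility hypothesis $0<\rho\leqslant 1$ of Lemma~\ref{lem2.3} along with the edge cases $d=1$ and $M=1$, which the paper's proof glosses over.
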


    \begin{proof}
        By Lemma \ref{Lem4}, we have
        \begin{equation*}
            \begin{aligned}
                N_{h}(M,d) &\leqslant {\min_{r \in \mathbb{N}} { \left\{ ~ r:2^{lr} > \max_{j \in [M]}{\sum_{i=1}^{j-1}{V_{h}(r,{d-1})}} ~ \right\} ~ }},\\
                &\leqslant {\min_{r \in \mathbb{N}}{\{~r:2^{lr}>MV_{h}(r,{d-1})\}}}.
            \end{aligned}
        \end{equation*}
        Using Lemma \ref{lem2.3},
        $$ V_{h}(r,d-1)=\left |B_{r}(\frac{d-1}{r}) \right | \leqslant {\left (e^{\lambda\frac{d-1}{r}}\sum_{a \in \mathbb{Z}_{2^{l}}}{e^{-\lambda\omega_{h}(a)}}\right )^{r}}. $$
        Let $ \lambda=1, $ then
        $$ V_{h}(r,d-1) \leqslant {e^{d-1}\left (\sum_{a \in \mathbb{Z}_{2^{l}}}{e^{-\omega_{h}(a)}}\right )^{r}} \leqslant {e^{d-1}[(2^{l}-1)e^{-1}+1]^{r}} \leqslant {e^{d-r-1}2^{(l+1)r}}. $$

        If $ {2^{lr} > {e^{d-r-1}2^{(l+1)r}M}}, $ then $ {2^{lr} > {MV_{h}(r,{d-1})}}. $ We only need to compute the inequation $ {{e^{d-r-1}2^{(l+1)r}M} < 2^{lr}} $. If $ {{e^{d-r-1}2^{(l+1)r}M} < 2^{lr}} $, we have $$ {{2^{r}Me^{d-r-1}} < 1}. $$ By logarithmic operation, we have $$ {{\ln{(2^{r}e^{-r})}} < \ln{(Me^{d-1})^{-1}}}, $$ hence we get $$ {r > {\frac{\ln{M}+(d-1)}{1-\ln2}}}. $$ That means, if $ r=\left\lceil {\frac{\ln{M}+d-1}{1-\ln{2}}} \right\rceil, $ we have $ r \geqslant {d-1} $ and $ 2^{lr} \geqslant {MV_{h}(r,{d-1})}. $

        Therefore, $ N_{h}(M,d) \leqslant \left\lceil {\frac{\ln{M}+d-1}{1-\ln{2}}} \right\rceil. $

    \end{proof}

    \begin{Remark}
        Proposition \ref{Lem6} and Theorem \ref{Th3} give two upper bounds of $ N_{h}(M,d), $ while Proposition \ref{Lem6} gives a bound when $ d \geqslant 10 $ and $ M \leqslant d^{2}. $ Thus we can not compare these two bounds directly. If we restrict $ d $ and $ M $ in Theorem \ref{Th3}, we can compare these two bounds. If $ {d \geqslant 10},\, { M\leqslant d^{2} }, $ then $ {\left\lceil {\frac{\ln{M}+d-1}{1-\ln{2}}} \right\rceil} \leqslant {\left\lceil {\frac{2\ln{d}+d-1}{1-\ln{2}}} \right\rceil}. $ We now compare $ f(d)={\left\lceil {\frac{2\ln{d}+d-1}{1-\ln{2}}} \right\rceil} $ and $ g(d)=\frac{2d-2}{1-2\sqrt{{\ln{(d)}}/{d}}}. $ Note that under the condition of $ { M\leqslant d^{2} }, $ if $ {f(d) \leqslant g(d)}, $ then $ {\left\lceil {\frac{\ln{M}+d-1}{1-\ln{2}}} \right\rceil} \leqslant \frac{2d-2}{1-2\sqrt{{\ln{(d)}}/{d}}} $ holds. By observing Figure 1, we conclude that when $ {10 \leqslant d \leqslant 93},~M \leqslant d^{2}, $ the bound in Theorem \ref{Th3} is better than the bound in Proposition \ref{Lem6}.
    \end{Remark}

    \begin{figure}[htb]
      \centering
      \includegraphics[width=0.5\textwidth]{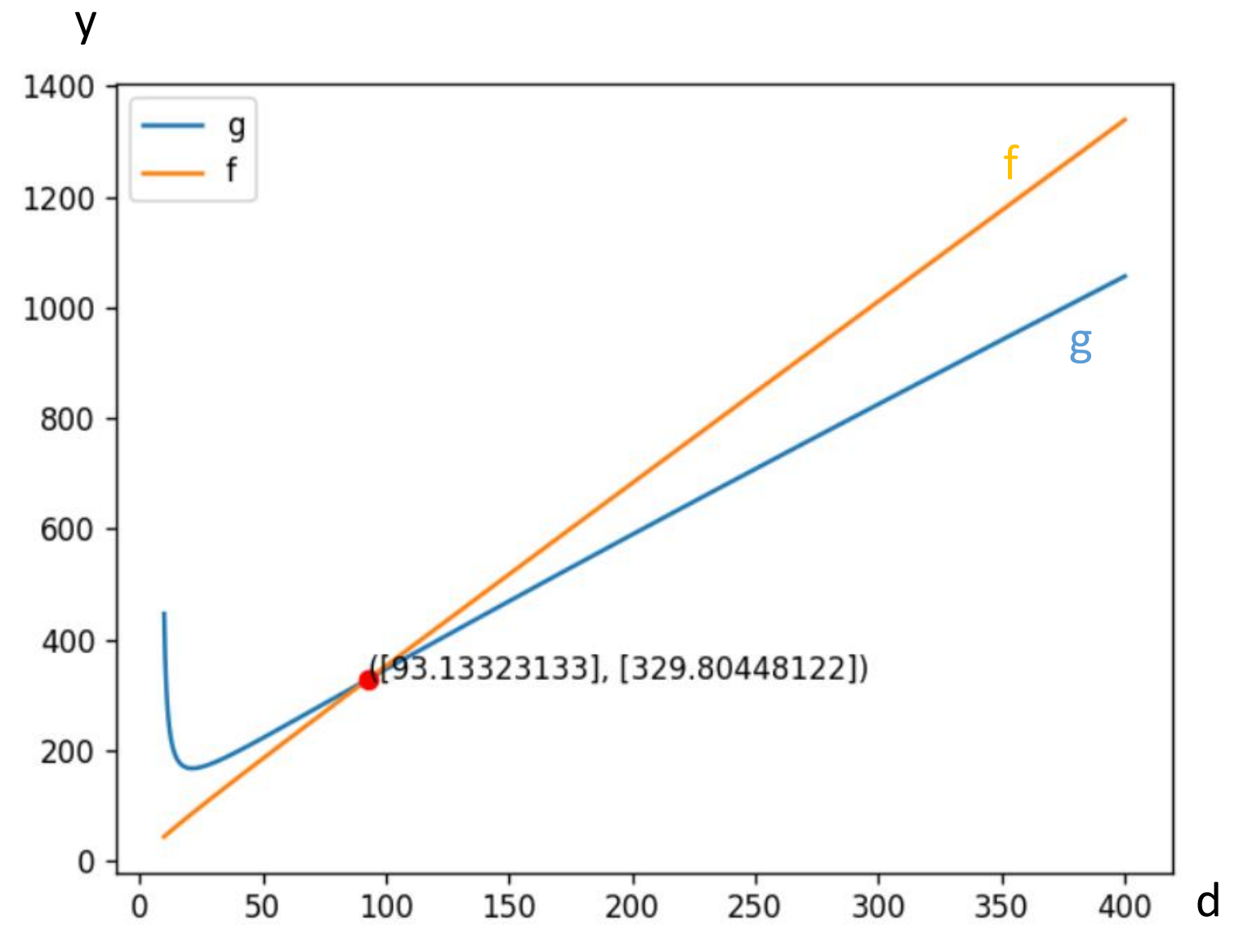}
      \caption[]{Comparision of functions $ f(d)={\left\lceil {\frac{2\ln{d}+d-1}{1-\ln{2}}} \right\rceil} $ and $ g(d)=\frac{2d-2}{1-2\sqrt{{\ln{(d)}}/{d}}}. $}
      \label{fig:pic1}
    \end{figure}

    We can get some values of $ {\left\lceil {\frac{\ln{M}+d-1}{1-\ln{2}}} \right\rceil} $ and $ \frac{2d-2}{1-2\sqrt{{\ln{(d)}}/{d}}} $ for some positive integers $ M,d $ as follows.
    \begin{table}[htb]
    \caption{Some values of upper bounds in Lemma \ref{Lem6} and Theorem \ref{Th3}}
    \centering
    \scriptsize
    \begin{tabular}{p{1cm}p{1cm}p{5.5cm}p{5.5cm}}
        \hline
        $d$ & $M$ &        $\frac{2d-2}{1-2\sqrt{{\ln{(d)}}/{d}}}$   & $\left\lceil {\frac{\ln{M}+d-1}{1-\ln{2}}} \right\rceil$ \\
        \hline
        10   & 90       & 446.7075690846029  & 44\\
        50   & 200      & 222.43907485184036  & 177\\
        70   & 200      & 272.0381083046102      & 243 \\
        80  & 300      & 297.0386071011767     & 277 \\
        90  & 400      & 321.9995158887893    & 310  \\
        \hline
    \end{tabular}
     \label{table-2}
    \end{table}

    \section{Bounds of FCCHDs for functions based on weights}
    In this section, we study FCCHDs for homogeneous weight functions, homogeneous weight distribution functions and Rosenbloom-Tsfasnman weight functions. We also offer two encoding functions which are proved to be FCCHDs for homogeneous weight functions and homogeneous weight distribution functions, respectively. What's more, under some conditions, redundancies of some of these FCCHDs reach the optimal redundancy bounds for these functions, respectively.

    \subsection{FCCHDs for Homogeneous weight functions}

    Let $ f(u)=\omega_{h}(u) ,$ where $ {u \in \mathbb{Z}_{2^{l}}^{k}} $ and $ \omega_{h} $ is homogeneous weight function on $ \mathbb{Z}_{2^{l}}^{k}$. Denote $ {E'\triangleq {|{\rm Im}(\omega_{h})|-1}} ,$ by Equation (\ref{Formula}), we have $ {{\rm Im}(\omega_{h})=\{0,1,\ldots,2k\}}, $ $ {|{\rm Im}(\omega_{h})|={2k+1}} $ and $ {E'=2k}. $ Suppose that $ {{\rm Im}(\omega_{h})=\{f_{0},\ldots,f_{E'}\}}. $ In this part, for convenience, we write $ D_{h}^{\omega_{h}}(t) \triangleq D_{h}^{\omega_{h}}(t,f_{0},\ldots,f_{E'}). $

    Note that the bound in Theorem \ref{Th2} is tight for the  homogeneous weight function.

    \begin{Theorem}\label{Lem8}
    	Let $f(u)=\omega_{h}(u).$ Consider the $ (2k+1) \times (2k+1)$ matrix $D_{h}^{\omega_{h}}(t)$ with entries ${[D_{h}^{\omega_{h}}(t)]_{(i+1)(i+1)}=0} $ and $ {[D_{h}^{\omega_{h}}(t)]_{(i+1)(j+1)}=[2t+1-|i-j|]^{+} }$ for $ i \neq j,$ ${i,j \in \{\,0,1,\ldots,E'\,\}}. $ Then $$  r_{h}^{\omega_{h}}(k,t)=N_{h}(D_{h}^{\omega_{h}}(t)). $$
    \end{Theorem}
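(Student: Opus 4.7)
The approach is to apply Corollary \ref{Cor2}: I will exhibit explicit representative messages $u_0, u_1, \ldots, u_{2k}$ (one for each function value $i = \omega_h(u_i)$) such that the resulting homogeneous distance requirement matrix $D_h^{\omega_h}(t, u_0, \ldots, u_{2k})$ coincides with the function homogeneous distance matrix $D_h^{\omega_h}(t, f_0, \ldots, f_{E'})$ and with the matrix $D_h^{\omega_h}(t)$ of the statement. This reduces the theorem to producing a geodesic chain in $(\mathbb{Z}_{2^l}^k, d_h)$ of length $2k$ with unit-weight jumps, i.e.\ vectors $u_0, \ldots, u_{2k}$ with $\omega_h(u_i) = i$ and $d_h(u_i, u_j) = |i - j|$ for all $i, j$.

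The lower bound $d_h^{\omega_h}(i, j) \geq |i - j|$ is immediate from the reverse triangle inequality applied to $d_h$, which is a metric on $\mathbb{Z}_{2^l}$ by Lemma \ref{Lem1}, so the chain will simultaneously identify $d_h^{\omega_h}(i,j) = |i - j|$ and equate the two matrices. To construct the chain I would take
\[
u_i = \bigl(\underbrace{2^{l-1}, \ldots, 2^{l-1}}_{\lfloor i/2\rfloor \text{ entries}}, \underbrace{1}_{\text{if $i$ is odd}}, 0, \ldots, 0\bigr) \in \mathbb{Z}_{2^l}^k,
\]
which is well-defined because $\lceil i/2\rceil \leq k$ for $i \leq 2k$. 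Equation \eqref{Formula} then gives $\omega_h(u_i) = 2 \lfloor i/2 \rfloor + (i \bmod 2) = i$. The equality $d_h(u_i, u_j) = |i - j|$ follows by splitting into four cases based on the parities of $i$ and $j$, writing $u_j - u_i$ coordinate by coordinate, and using the three weight values $\omega_h(2^{l-1}) = 2$, $\omega_h(1) = 1$, and $\omega_h(1 - 2^{l-1}) = 1$ (the last valid because $1 - 2^{l-1} \notin \{0, 2^{l-1}\}$ modulo $2^l$ whenever $l \geq 2$).

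The main obstacle is the parity bookkeeping in the distance computation, but in each case the nonzero coordinates of $u_j - u_i$ are concentrated in positions $\lfloor i/2\rfloor + 1$ through $\lceil j/2\rceil$, and one checks a short identity of the form ``(full blocks of weight two) $+$ (boundary term of weight at most one) $= |i - j|$''. Once the chain is in hand, Corollary \ref{Cor2} immediately yields $r_h^{\omega_h}(k, t) = N_h(D_h^{\omega_h}(t))$.
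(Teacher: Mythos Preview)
Your proposal is correct and follows essentially the same strategy as the paper: both invoke Corollary~\ref{Cor2} by exhibiting representatives $u_0,\ldots,u_{2k}$ with $\omega_h(u_i)=i$ and $d_h(u_i,u_j)=|i-j|$, so that the homogeneous distance requirement matrix on these representatives coincides with the function homogeneous distance matrix $D_h^{\omega_h}(t)$.

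The only difference is the concrete geodesic chain chosen. The paper packs weight-$1$ entries first and only uses the heavy symbol $2^{l-1}$ once the $k$ coordinates are exhausted, taking $u_i=(\,\overset{m_i}{\overbrace{1,\ldots,1}}\,,\,\overset{s_i}{\overbrace{2^{l-1},\ldots,2^{l-1}}}\,,\,0,\ldots,0\,)$ with $m_i+2s_i=i$ and $s_i$ minimal; this makes the distance computation split according to whether $s_j=0$. Your chain instead packs the weight-$2$ symbol first and uses a single trailing $1$ for odd $i$, which keeps every $u_i$ supported on at most $\lceil i/2\rceil$ coordinates and reduces the verification to a short parity case analysis using $\omega_h(2^{l-1})=2$, $\omega_h(1)=1$, and $\omega_h(2^{l-1}-1)=1$. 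Both constructions yield the same conclusion; yours is slightly more transparent, while the paper's has the minor advantage that the verification does not explicitly invoke the value $\omega_h(2^{l-1}-1)$ (though both implicitly require $l\ge 2$, which is already forced by $|\mathrm{Im}(\omega_h)|=2k+1$). Your explicit mention of the reverse triangle inequality to pin down $d_h^{\omega_h}(i,j)\ge |i-j|$ is a point the paper leaves implicit.
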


    \begin{proof}
        Let $ {{\rm Im}(\omega_{h})=\{0,1,2,\ldots,2k\}=\{f_{0},\ldots,f_{E'}\}}, $ where $ {E'=|{\rm Im}(\omega_{h})|-1}. $ We assume that ${f_{i}=i,\, {i \in \{\, 0,1,\ldots,E'\,\}}} . $ Thus $ {[D_{h}^{\omega_{h}}(t)]_{(i+1)(j+1)}=[2t+1-d_{h}^{f}(f_{i},f_{j})]^{+}},\,{i \neq j}. $ By Definition \ref{Def6}, $ {[D_{h}^{\omega_{h}}(t)]_{(i+1)(j+1)}=[2t+1-|i-j|]^{+}} ,\,{i \neq j},\,{i,j \in \{\,0,1,\ldots,E'\,\}}. $ By Theorem \ref{Th2}, we obtain that $ {r_{h}^{\omega_{h}}(k,t) \leqslant N_{h}(D_{h}^{\omega_{h}}(t))}. $

        On the other hand, let $ u_{i}=(\,\overset{m_{i}}{\overbrace{1,\,\cdots,\,1}}\, ~\, \overset{s_{i}}{\overbrace{2^{l-1},\,\cdots,\,2^{l-1}}}\, ~\, \overset{k-m_{i}-s_{i}}{\overbrace{0,\,\cdots,\,0}} \,),\,i=0,1,\ldots,E' , $ where $$ m_{i}=\max{\{\,m\,|\,m+2s=i,~m,s \in \mathbb{N}\,\}} , $$ and $$ s_{i}=\min{\{\,s\,|\,m+2s=i,~m,s \in \mathbb{N}\,\}} . $$ If $ s_{i}>0 $ and $ k-m_{i}-s_{i}>0 , $ we can suppose that $ m_{0}=m_{i}+2,~s_{0}=s_{i}-1 , $ then $ m_{0}+2s_{0}=i $ with $ m_{0}>m_{i} . $ This is a contradiction, thus, if $ s_{i}>0, $ $ k-m_{i}-s_{i}=0. $

        Without loss of generality, assume that $ m_{i} \geqslant m_{j} . $ We have
        $$ d_{h}(u_{i},u_{j})={(m_{i}-m_{j})+2\,|\,s_{i}-[s_{j}-(m_{i}-m_{j})]^{+}\,|}. $$ If $ s_{j}=0 , $ then $$ d_{h}(u_{i},u_{j})={(m_{i}-m_{j})+2s_{i}}={i-j};$$ If $ s_{j} \neq 0 , $ then $ s_{j}+m_{j}=k , $ and so
         \begin{equation*}
            \begin{aligned}
             d_{h}(u_{i},u_{j})=&
               {(m_{i}-m_{j})+2\,|\,s_{i}-[s_{j}-m_{i}+m_{j}]^{+}\,|\,} \\
               =&{(m_{i}-m_{j})+2\,|\,s_{i}-k+m_{i}\,|\,}\\
               =&{m_{i}-m_{j}+2k-2m_{i}-2s_{i}}\\
               =&{2k-m_{j}-(2s_{i}+m_{i})}\\
               =&{j-i} .
            \end{aligned}
         \end{equation*}
         Thus $ d_{h}(u_{i},u_{j})=|i-j|,$ where $ u_{i},~u_{j} $ satisfy $\omega_{h}(u_{i})=i,~\omega_{h}(u_{j})=j . $ Since $$ [D_{h}^{\omega_{h}}(t,u_{0},u_{1},\ldots,u_{E'})]_{(i+1)(j+1)}=[2t+1-|i-j|]^{+},\,{i \neq j},\,{i,j \in \{\, 0,1,\ldots,E'\,\}}, $$ we have $$ D_{h}^{\omega_{h}}(t,u_{0},u_{1},\ldots,u_{E'})=D_{h}^{\omega_{h}}(t) . $$
         By Corollary \ref{Cor2}, $$ r_{h}^{\omega_{h}}(k,t)=N_{h}( D_{h}^{\omega_{h}}(t,u_{0},u_{1},\ldots,u_{E'}))=N_{h}(D_{h}^{\omega_{h}}(t)) . $$
    \end{proof}

    \begin{Remark}
        Since homogeneous weight is a generalization of Hamming weight, Theorem \ref{Lem8} can be easily proved correct for the Hamming weight function.
    \end{Remark}

    Now, we give some examples.
    \begin{Example}
        For $ k=3,~t=2 , $ the function homogeneous distance matrix $ D_{h}^{\omega_{h}}(2) $ is given by the symmetric $ 7 \times 7 $ matrix, where
        $$ D_{h}^{\omega_{h}}(2)=\begin{pmatrix}
            0 & 4 & 3 & 2 & 1 & 0 & 0\\
            4 & 0 & 4 & 3 & 2 & 1 & 0\\
            3 & 4 & 0 & 4 & 3 & 2 & 1\\
            2 & 3 & 4 & 0 & 4 & 3 & 2\\
            1 & 2 & 3 & 4 & 0 & 4 & 3\\
            0 & 1 & 2 & 3 & 4 & 0 & 4\\
            0 & 0 & 1 & 2 & 3 & 4 & 0
        \end{pmatrix} . $$

        For $t=2,$ the function homogeneous distance matrix on $\mathbb{Z}_{2^{3}}^{4}$ of the homogeneous wight function is $$ D_{h}^{\omega_{h}}(2)=\begin{pmatrix}
            0 & 4 & 3 & 2 & 1 & 0 & 0 & 0 & 0\\
            4 & 0 & 4 & 3 & 2 & 1 & 0 & 0 & 0\\
            3 & 4 & 0 & 4 & 3 & 2 & 1 & 0 & 0\\
            2 & 3 & 4 & 0 & 4 & 3 & 2 & 1 & 0\\
            1 & 2 & 3 & 4 & 0 & 4 & 3 & 2 & 1\\
            0 & 1 & 2 & 3 & 4 & 0 & 4 & 3 & 2\\
            0 & 0 & 1 & 2 & 3 & 4 & 0 & 4 & 3\\
            0 & 0 & 0 & 1 & 2 & 3 & 4 & 0 & 4\\
            0 & 0 & 0 & 0 & 1 & 2 & 3 & 4 & 0
        \end{pmatrix} . $$
    \end{Example}

    Based on Theorem \ref{Lem8}, we get a lower bound of $ r_{h}^{\omega_{h}}(k,t) $ by using Lemma \ref{Lem3}, and the value of this bound is only depend on $t.$
    \begin{Corollary}\label{Cor3}
        For any $ k>{\lceil{\frac{t+1}{2}}\rceil}, $ $$ r_{h}^{\omega_{h}}(k,t) \geqslant \left\lceil {\frac{{5t^{3}+15t^{2}+10t}}{3(t+2)^{2}}}\right\rceil. $$
    \end{Corollary}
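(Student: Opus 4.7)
The plan is to apply Corollary \ref{Cor1} with a carefully chosen set of $M=t+2$ message representatives and then extract a closed-form lower bound from Lemma \ref{Lem3}. Concretely, I would reuse the explicit construction from the proof of Theorem \ref{Lem8}: for each $i\in\{0,1,\ldots,t+1\}$, pick a vector $u_i\in\mathbb{Z}_{2^{l}}^{k}$ with $\omega_h(u_i)=i$ such that $d_h(u_i,u_j)=|i-j|$ for all $i,j$. The hypothesis $k>\lceil(t+1)/2\rceil$ is used exactly here: it guarantees $2k\geq t+1$, so that a weight-$(t+1)$ representative exists in $\mathbb{Z}_{2^{l}}^{k}$ (for smaller weights the construction is already trivial).

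Next, let $D = D_h^{\omega_h}(t,u_0,u_1,\ldots,u_{t+1})$. Since $\omega_h(u_i)\neq \omega_h(u_j)$ whenever $i\neq j$ and $|i-j|\leq t+1\leq 2t$, every off-diagonal entry equals $2t+1-|i-j|$ with no truncation by $[\cdot]^{+}$. Corollary \ref{Cor1} gives $r_h^{\omega_h}(k,t)\geq N_h(D)$, and Lemma \ref{Lem3} then yields
\[
 N_h(D)\;\geq\;\frac{1}{(t+2)^2}\sum_{i,j=0}^{t+1}[D]_{(i+1)(j+1)}
 \;=\;\frac{2}{(t+2)^2}\sum_{d=1}^{t+1}(t+2-d)(2t+1-d),
\]
where I grouped ordered pairs by $d=|i-j|$, obtaining $2(t+2-d)$ pairs for each $d\in\{1,\ldots,t+1\}$.

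It then remains to evaluate the single sum in closed form. Substituting $e=t+2-d$ turns it into $\sum_{e=1}^{t+1} e(e+t-1)$, which via the standard identities for $\sum e$ and $\sum e^2$ simplifies to $\frac{5t(t+1)(t+2)}{6}$. Multiplying by $\frac{2}{(t+2)^2}$ gives $\frac{5t(t+1)}{3(t+2)} = \frac{5t^3+15t^2+10t}{3(t+2)^2}$, and taking the ceiling (since $r_h^{\omega_h}(k,t)\in\mathbb{N}$) produces the claimed inequality.

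There is no real analytic obstacle; the only step that is not mechanical is spotting that $M=t+2$ is the right number of representatives to take. One natural way to motivate the choice is to treat the averaged expression $\frac{(M-1)(6t+2-M)}{3M}$ (obtained from Lemma \ref{Lem3} in the regime $M\leq 2t+1$, where all matrix entries are untruncated) as a function of $M$ and test integer values; $M=t+2$ yields exactly the quoted bound and keeps every entry of the matrix strictly positive, which is precisely what is needed for the averaging argument to be tight at the stated value.
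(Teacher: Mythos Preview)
Your proposal is correct and follows essentially the same route as the paper: pick the $t+2$ weight representatives $u_0,\ldots,u_{t+1}$ (equivalently, the top-left $(t+2)\times(t+2)$ block of $D_h^{\omega_h}(t)$), apply Lemma~\ref{Lem3}, and evaluate the resulting sum in closed form. The only cosmetic difference is that the paper reaches the submatrix via the equality in Theorem~\ref{Lem8} rather than via Corollary~\ref{Cor1}; your added explanation of why the hypothesis $k>\lceil (t+1)/2\rceil$ is used and your motivation for the choice $M=t+2$ are nice touches the paper leaves implicit.
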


    \begin{proof}
        Let $ \{p_{1},p_{2},\ldots,p_{2k+1}\} \subseteq \mathbb{Z}_{2^{l}}^{r} $ be a $ D_{h}^{\omega_{h}}(t) $-homogeneous distance code. Consider the first $ t+2 $ codewords $ p_{1},~p_{2},\ldots,~p_{t+2} . $ By Lemma \ref{Lem3} and Theorem \ref{Lem8}, $$ r_{h}^{\omega_{h}}(k,t)=N_{h}(D_{h}^{\omega_{h}}(t)) \geqslant \frac{2}{(t+2)^{2}}\sum_{i,j:{0<i<j \leqslant {t+2}}}{[D_{h}^{\omega_{h}}(t)]_{ij}}, $$
        Following the proof of Theorem \ref{Lem8}, we have $$ r_{h}^{\omega_{h}}(k,t) \geqslant {\frac{2}{(t+2)^{2}}\sum_{m=0}^{t}{(2t-m)(t+1-m)}}. $$ Thus it holds that $ r_{h}^{\omega_{h}}(k,t) \geqslant \frac{{5t^{3}+15t^{2}+10t}}{3(t+2)^{2}}. $

        Since $ r_{h}^{\omega_{h}}(k,t) $ is an integer, we have $r_{h}^{\omega_{h}}(k,t) \geqslant \left\lceil {\frac{{5t^{3}+15t^{2}+10t}}{3(t+2)^{2}}}\right\rceil.$
    \end{proof}

    In the following, we provide an FCCHD for the homogeneous weight function on $ \mathbb{Z}_{2^{l}}^{k}. $ The shifted modulo operator of integer $a,b$ \rm{(\cite{lbwy})} is defined as
        \begin{equation}
            a \text{ smod } b \triangleq {((a-1)\mod{b})+1} \in \{1,2,\ldots,b\}.
        \end{equation}
    For examples, $ 2 \text{ smod } 5=2;~5 \text{ somd } 5=5. $

    \begin{Theorem}\label{Con1}
        Let $ \omega_{h} $ be the homogeneous weight function on $ \mathbb{Z}_{2^{l}}^{k} $. Then the encoding function $ Enc_{\omega_{h}}:\mathbb{Z}_{2^{l}}^{k} \rightarrow \mathbb{Z}_{2^{l}}^{k+r} $ is an {\rm{FCCHD}},
        $$ Enc_{\omega_{h}}(u)=(u,p_{\omega_{h}(u)+1}), $$
        where $ p_{i}, ~ i=1,2,\ldots,{2k+1}, $ are defined as follows.

        For $ t=1 , $ set $ p_{1}=(\,0 \ 0\,),~p_{2}=(\,1 \ 1\,) $ and $ p_{3}=(\,2^{l-1} \ 0\,). $ Then set $ p_{i}=p_{i \text{ smod }3} $ for $ i \geqslant 4.$

        For $t=2 , $ set $\, {p_{1}=(\,0 \ 0 \ 0\,)},~{p_{2}=(\,2^{l-1} \ 2^{l-1} \ 0\,)},\,{p_{3}=(\,0 \ 2^{l-1} \ 2^{l-1})},~{p_{4}=(2^{l-1} \ 0 \ 2^{l-1})} $. When $\, {5 \leqslant i \leqslant 8} $, let $\, p_{i}=p_{i-4}+(0 \ 1 \ 0) $, i.e., $ p_{5}=(\,0 \ 1 \ 0\,), $ $ p_{6}=(\,2^{l-1} \ {2^{l-1}+1} \ 0\,),$ $ p_{7}=(\,0 \ {2^{l-1}+1} \ 2^{l-1}\,),~p_{8}=(\,2^{l-1} \ 1 \ 2^{l-1}\,)$. When $\, i \geqslant 9 $, set $\, p_{i}=p_{i \text{ smod }8} $.

        For $ t=3, $ we set $~ p_{1}=(\ 0 \,\ 0 \,\ 0 \,\ 0 \,\ 0\ )\,,~p_{2}=(\ 1 \,\ 1 \,\ 1 \,\ 1 \,\ 2^{l-1}\ )\,,~p_{3}=(\ 2^{l-1} \,\ 2^{l-1} \,\ 2^{l-1} \,\ 0 \,\ 0\ )\,,\\p_{4}=(\,0 \ 0 \ 2^{l-1} \ 2^{l-1} \ 0\,),~p_{5}=(\,1 \ 1 \ {2^{l-1}+1} \ {2^{l-1}+1} \ 2^{l-1}\,),~p_{6}=(\,2^{l-1} \ 2^{l-1} \ 0 \ 2^{l-1} \ 0\,)$ and $p_{7}=(\,{2^{l-1}+1} \ {2^{l-1}+1} \ 2^{l-1} \ {2^{l-1}+1} \ 2^{l-1}\,). $ Then set $ p_{i}=p_{i \text{ smod }7} $ for $ i \geqslant 8. $

        For $ t > 3 , $ let $ p_{1},p_{2},\ldots,p_{2t+1} $ be a code with the minimum homogeneous distance $ 2t , $ i.e. $ d_{h}(p_{i},p_{j}) \geqslant 2t $ for all $ i,j \leqslant 2t+1,~i \neq j $ and set $ p_{i}=p_{i \text{ smod }(2t+1)} $ for $ i \geqslant {2t+2}. $
    \end{Theorem}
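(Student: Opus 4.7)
The plan is to reduce the FCCHD property to a pairwise distance requirement on the redundancy vectors $p_i$, then verify that requirement by direct inspection of the explicit constructions. Fix $u_1, u_2 \in \mathbb{Z}_{2^l}^k$ with $\omega_h(u_1) \neq \omega_h(u_2)$ and write $i = \omega_h(u_1)$, $j = \omega_h(u_2)$, $\bar i = (i+1)\text{ smod }m$, $\bar j = (j+1)\text{ smod }m$, where $m$ is the period of the construction ($m = 3, 8, 7$ for $t = 1, 2, 3$ and $m = 2t+1$ for $t > 3$). Since $Enc_{\omega_h}(u) = (u, p_{\omega_h(u)+1})$,
$$d_h(Enc_{\omega_h}(u_1), Enc_{\omega_h}(u_2)) = d_h(u_1, u_2) + d_h(p_{\bar i}, p_{\bar j}).$$
Applying the triangle inequality of $d_h$ (a metric on $\mathbb{Z}_{2^l}^k$ by Lemma \ref{Lem1}) to the triple $0, u_1, u_2$ yields $|i - j| = |\omega_h(u_1) - \omega_h(u_2)| \leq d_h(u_1, u_2)$. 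If $|i - j| \geq 2t+1$ the required bound is immediate; otherwise $1 \leq |i - j| \leq 2t < m$, so $\bar i \neq \bar j$ and it remains to show $d_h(p_{\bar i}, p_{\bar j}) \geq 2t + 1 - |i - j|$.

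Next I would observe that this reduces to a finite condition indexed by $[m]$. For fixed distinct $\bar i, \bar j \in [m]$, the minimum value of $|i - j|$ taken over lifts of $(\bar i, \bar j)$ is the cyclic gap $\Delta(\bar i, \bar j) := \min(|\bar i - \bar j|, m - |\bar i - \bar j|)$, so it is enough to check, for every distinct pair $a, b \in [m]$, the single inequality
$$d_h(p_a, p_b) \geq 2t + 1 - \Delta(a, b).$$

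For $t > 3$ this is immediate from the hypothesis: the base code $\{p_1, \ldots, p_{2t+1}\}$ has minimum homogeneous distance $2t$, and $2t \geq 2t + 1 - \Delta(a, b)$ whenever $\Delta(a, b) \geq 1$. For $t \in \{1, 2, 3\}$ the pairwise distances between the listed codewords are computed coordinate-by-coordinate via Equation (\ref{Formula}): each coordinate contributes $0$, $1$, or $2$ according as its two entries coincide, differ by a unit, or differ by a nonzero element of $\langle 2^{l-1}\rangle$. Tabulating these distances (for example, in the $t = 2$ case one checks that each consecutive pair $p_a, p_{a+1}$ and the wrap-around pair $p_1, p_8$ has homogeneous distance at least $4$, each pair at cyclic gap $2$ has distance at least $3$, and so on) and comparing with the required bound verifies every inequality.

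The main obstacle is the finite case analysis for $t \in \{1, 2, 3\}$, since the cyclic wrap-around forces one to verify the sharp requirement $d_h(p_a, p_b) \geq 2t + 1 - \Delta(a, b)$ pair by pair, rather than a uniform minimum-distance bound as in the $t > 3$ regime. Once the full pairwise table is enumerated, all inequalities hold, and the encoding $Enc_{\omega_h}$ meets the FCCHD condition of Definition \ref{Def2}.
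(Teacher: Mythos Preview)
Your proposal is correct and follows essentially the same strategy as the paper: split according to whether $|i-j|\geqslant 2t+1$, and in the remaining range reduce the FCCHD condition to a finite family of pairwise homogeneous-distance inequalities among the base redundancy vectors $p_1,\ldots,p_m$, which are then verified directly (with the $t>3$ case handled by the assumed minimum distance $2t$). Your use of the cyclic gap $\Delta(a,b)$ packages the paper's ``linear pair / wrap-around pair'' dichotomy into a single inequality $d_h(p_a,p_b)\geqslant 2t+1-\Delta(a,b)$, which is a tidy reformulation but not a different argument.
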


    \begin{proof}
        {\bf Case 1.}
        For $ t=1 , $ $ \forall{u_{1},u_{2}} \in \mathbb{Z}_{2^{l}}^{k} $ with $ \omega_{h}(u_{1}) \neq \omega_{h}(u_{2}) . $ Without loss of generality, we suppose $ {\omega_{h}(u_{1})=3m_{1}+\omega_{1}} $ and $\omega_{h}(u_{2})=3m_{2}+\omega_{2},$ where $\omega_{1},\omega_{2} \in \{0,1,2\}. $ Thus $ p_{\omega_{h}(u_{1})+1}=p_{\omega_{1}+1},~p_{\omega_{h}(u_{2})+1}=p_{\omega_{2}+1}. $

        If $ |{\omega_{h}(u_{1})-\omega_{h}(u_{2})}| \geqslant 3 , $ then $$ d_{h}(Enc_{\omega_{h}}(u_{1}),Enc_{\omega_{h}}(u_{2})) \geqslant d_{h}(u_{1},u_{2})=\omega_{h}(u_{1}-u_{2}) \geqslant |{\omega_{h}(u_{1})-\omega_{h}(u_{2})}| \geqslant 3. $$ If $ |{\omega_{h}(u_{1})-\omega_{h}(u_{2})}| < 3 , $ then $ p_{\omega_{1}+1} \neq p_{\omega_{2}+1} . $ Since $$ d_{h}(p_{1},p_{2})=d_{h}(p_{1},p_{3})=d_{h}(p_{2},p_{3})=2 , $$ we have $ d_{h}(p_{\omega_{1}+1},p_{\omega_{2}+1})=2 . $ Thus,
        $$ d_{h}(Enc_{\omega_{h}}(u_{1}),Enc_{\omega_{h}}(u_{2}))={d_{h}(u_{1},u_{2})+d_{h}(p_{\omega_{1}+1},p_{\omega_{2}+1})} \geqslant 3. $$
        Therefore, $ Enc_{\omega_{h}}:u \mapsto (u,p_{\omega_{h}(u)+1}) $ is an FCCHD, when $ t=1. $

        {\bf Case 2.}
        For $ t=2 , $ due to the definition of homogeneous distance, compute homogeneous distance between these $ p_{i}. $ For $ \forall{u_{1},u_{2}} \in \mathbb{Z}_{2^{l}}^{k} $ with $ \omega_{h}(u_{1}) \neq \omega_{h}(u_{2}), $ suppose that $ \omega_{h}(u_{1})={8m_{1}+\omega_{1}},$ $ \omega_{h}(u_{2})={8m_{2}+\omega_{2}}, $ then $ p_{\omega_{h}(u_{1})+1}=p_{\omega_{1}+1},~p_{\omega_{h}(u_{2})+1}=p_{\omega_{2}+1}. $ Since homogenous weight is a metric on $ \mathbb{Z}_{2^{l}}, $ we conclude that if $ |{\omega_{h}(u_{1})-\omega_{h}(u_{2})}|=j,\, 1 \leqslant j \leqslant 4, $ it holds that there exists $ i \in \{\, 1,2,\ldots,{8-j} \,\} $ such that $$ d_{h}(p_{\omega_{1}+1},p_{\omega_{2}+1})=d_{h}(p_{i},p_{i+j}) \geqslant {5-j}, $$ or $$ d_{h}(p_{\omega_{1}+1},p_{\omega_{2}+1})=d_{h}(p_{m+j-8},p_{m}) \geqslant {5-j}, $$ for some $ m \in \{\, {9-j}, \ldots , 8 \,\}. $ Thus
        \begin{equation*}
          \begin{aligned}
              d_{h}(Enc_{\omega_{h}}(u_{1}),Enc_{\omega_{h}}(u_{2}))&=d_{h}(u_{1},u_{2})+d_{h}(p_{\omega_{1}+1},p_{\omega_{2}+1})\\
              &\geqslant {|\omega_{h}(u_{1})-\omega_{h}(u_{2})|+d_{h}(p_{\omega_{1}+1},p_{\omega_{2}+1})}\\
              &\geqslant 5.
          \end{aligned}
        \end{equation*}
        If $ |{\omega_{h}(u_{1})-\omega_{h}(u_{2})}| \geqslant 5, $ $ d_{h}(Enc_{\omega_{h}}(u_{1}),Enc_{\omega_{h}}(u_{2})) \geqslant d_{h}(u_{1},u_{2}) \geqslant 5. $ Therefore, this code is an FCCHD.

        {\bf Case 3.}
        For $ t=3. $ We prove the above encoding function is an FCCHD when $ t=3, $ by Definition \ref{Def2}, we need to prove that if $ \omega_{h}(u_{1}) \neq \omega_{h}(u_{2}), $ for $ u_{1},u_{2} \in \mathbb{Z}_{2^{l}}^{k}, $ it holds that $$ d_{h}(Enc_{\omega_{h}}(u_{1}),Enc_{\omega_{h}}(u_{2}))={d_{h}(u_{1},u_{2})+d_{h}(p_{\omega_{h}(u_{1})+1},p_{\omega_{h}(u_{2})+1})} \geqslant 7. $$
        Similarly, if $ |{\omega_{h}(u_{1})-\omega_{h}(u_{2})}| \geqslant 7, $ $ d_{h}(u_{1},u_{2}) \geqslant {|{\omega_{h}(u_{1})-\omega_{h}(u_{2})}|}, $ obviously.

        If $ |{\omega_{h}(u_{1})-\omega_{h}(u_{2})}| < 7, $ assume that $ \omega_{h}(u_{1})=7m_{31}+\omega_{31},~\omega_{h}(u_{2})=7m_{32}+\omega_{32}, $ we have $ p_{\omega_{h}(u_{1})+1}=p_{\omega_{31}+1} $ and $ p_{\omega_{h}(u_{2})+1}=p_{\omega_{32}+1},\, {0 \leqslant \omega_{31},\omega_{32} \leqslant 6} . $ If $ {|{\omega_{h}(u_{1})-\omega_{h}(u_{2})}|=j}, $ $ 1 \leqslant j \leqslant 6, $ it is easy to prove that $ {{-1} \leqslant {m_{31}-m_{32}} \leqslant {1}},$ $ {{-7} < {\omega_{31}-\omega_{32} < 7}}. $ Due to the symmetry of the homogeneous distance, by Equation (\ref{Formula}), $$ d_{h}(p_{\omega_{31}+1},p_{\omega_{32}+1})=d_{h}(p_{i},p_{i+j}) \geqslant {7-j}, $$ for some $ i \in \{1,2,\ldots,{7-j}\}, $ or $$ d_{h}(p_{\omega_{31}+1},p_{\omega_{32}+1})=d_{h}(p_{m},p_{m+j-7}) \geqslant {7-j}, $$ for some $ m \in \{{8-j},\ldots,7\}. $ Thus $ d_{h}(Enc_{\omega_{h}}(u_{1}),Enc_{\omega_{h}}(u_{2})) \geqslant 7, $ for $ \forall u_{1},u_{2} \in \mathbb{Z}_{2^{l}}^{k} $ with $ \omega_{h}(u_{1}) \neq \omega_{h}(u_{2}). $ Therefore, this encoding function is an FCCHD when $ t=3. $

        {\bf Case 4.}
        For $ t > 3. $ $ \forall{u_{1},~u_{2}} \in \mathbb{Z}_{2^{l}}^{k} $ with $ \omega_{h}(u_{1}) \neq \omega_{h}(u_{2}). $ Without loss of generality, suppose that $ \omega_{h}(u_{1})={(2t+1)m_{1}'+\omega_{1}'}, $ $ \omega_{h}(u_{2})={(2t+1)m_{2}'+\omega_{2}'},~\omega_{1}',\omega_{2}' \in \{0,1,\ldots,2t\} . $ Thus we have $ {p_{\omega_{h}(u_{1})+1}=p_{\omega_{1}'+1},~ p_{\omega_{h}(u_{2})+1}=p_{\omega_{2}'+1}}. $

        If $ |{\omega_{h}(u_{1})-\omega_{h}(u_{2})}| \geqslant {2t+1} , $ then $$ d_{h}(Enc_{\omega_{h}}(u_{1}),Enc_{\omega_{h}}(u_{2})) \geqslant d_{h}(u_{1},u_{2})=\omega_{h}(u_{1}-u_{2}) \geqslant |{\omega_{h}(u_{1})-\omega_{h}(u_{2})}| \geqslant {2t+1}. $$ If $ |{\omega_{h}(u_{1})-\omega_{h}(u_{2})}| < {2t+1} , $ then $ p_{\omega_{1}'+1} \neq p_{\omega_{2}'+1} . $ Since $ d_{h}(p_{\omega_{1}'+1},p_{\omega_{2}'+1}) \geqslant 2t , $ it is easy to have
        $$ d_{h}(Enc_{\omega_{h}}(u_{1}),Enc_{\omega_{h}}(u_{2}))={d_{h}(u_{1},u_{2})+d_{h}(p_{\omega_{1}'+1},p_{\omega_{2}'+1})} \geqslant {2t+1}. $$
        Therefore, $ Enc_{\omega_{h}}:u \mapsto (u,p_{\omega_{h}(u)+1}) $ is an FCCHD.
    \end{proof}

        \begin{Corollary}
            For arbitrary integer $ k>2, $ $ r_{h}^{\omega_{h}}(k,1)=2 $ and $ r_{h}^{\omega_{h}}(k,2)=3. $
        \end{Corollary}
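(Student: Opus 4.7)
The plan is to sandwich $r_h^{\omega_h}(k,t)$ between matching upper and lower bounds that we already have at hand, invoking two earlier results of the paper.

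For the upper bound, I will simply exhibit the explicit constructions given in Theorem \ref{Con1}. When $t=1$, the construction defines redundancy vectors $p_1,p_2,p_3 \in \mathbb{Z}_{2^l}^2$ and then repeats them cyclically via $p_i = p_{i\,\mathrm{smod}\,3}$; since these vectors lie in $\mathbb{Z}_{2^l}^2$, the encoding function $Enc_{\omega_h}:u\mapsto (u,p_{\omega_h(u)+1})$ certifies an FCCHD with redundancy length $r=2$, hence $r_h^{\omega_h}(k,1)\leqslant 2$. Similarly, for $t=2$ the redundancy vectors $p_1,\dots,p_8$ live in $\mathbb{Z}_{2^l}^3$, so the same theorem gives $r_h^{\omega_h}(k,2)\leqslant 3$.

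For the matching lower bound I will apply Corollary \ref{Cor3}, which says $r_h^{\omega_h}(k,t)\geqslant \lceil (5t^3+15t^2+10t)/(3(t+2)^2)\rceil$ whenever $k>\lceil (t+1)/2\rceil$. Plugging in $t=1$ gives the threshold $k>1$ (so any $k>2$ qualifies) and the bound $\lceil 30/27\rceil = \lceil 10/9\rceil = 2$. Plugging in $t=2$ gives the threshold $k>\lceil 3/2\rceil = 2$, which matches the hypothesis $k>2$, and the bound $\lceil 120/48\rceil = \lceil 5/2\rceil = 3$. Combining each lower bound with the corresponding upper bound yields the two stated equalities.

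Since both directions are direct citations of already established results, there is essentially no difficult step. The only thing I expect to watch carefully is the bookkeeping: verifying that Corollary \ref{Cor3}'s hypothesis $k>\lceil (t+1)/2\rceil$ is indeed met under the assumption $k>2$ for both $t=1$ and $t=2$, and double-checking the arithmetic evaluations of the rational expression in the ceiling. Both are routine.
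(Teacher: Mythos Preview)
Your proposal is correct and follows essentially the same approach as the paper: the lower bounds come from Corollary~\ref{Cor3} (yielding $\lceil 10/9\rceil=2$ and $\lceil 5/2\rceil=3$), and the upper bounds from the explicit length-$2$ and length-$3$ constructions in Theorem~\ref{Con1}. Your additional verification of the hypothesis $k>\lceil (t+1)/2\rceil$ and the arithmetic is accurate.
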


        \begin{proof}
            By Corollary \ref{Cor3}, $ r_{h}^{\omega_{h}}(k,1) \geqslant 2. $ As shown in Theorem \ref{Con1}, when $ t=1 $, there exists an FCCHD for homogeneous weight functions with redundancy $2$, thus $ r_{h}^{\omega_{h}}(k,1)=2. $

            By Corollary \ref{Cor3}, $ r_{h}^{\omega_{h}}(k,2) \geqslant 3. $ By Theorem \ref{Con1}, we know that $ r_{h}^{\omega_{h}}(k,2) \leqslant 3, $ thus $ r_{h}^{\omega_{h}}(k,2)=3. $
        \end{proof}

    \begin{Remark}
        When $ t=2 $ and $ t=3 $, redundancies of FCCHDs in Theorem \ref{Con1} reach the optimal redundancy bound of FCCHDs for homogeneous weight functions.
    \end{Remark}

    \subsection{FCCHDs for Homogeneous weight distribution functions}

    Let $ T \in \mathbb{N} $ with $ T|(2k+1) . $ We consider the homogeneous weight distribution function $ {f(u)=\Delta_{T_{h}}(u) \triangleq \lfloor{\frac{\omega_{h}(u)}{T}}\rfloor} $ on $ \mathbb{Z}_{2^{l}}^{k}. $ In this subsection, we let $ {E_{2} \triangleq |{\rm Im}(f)|=|{\rm Im}(\Delta_{T_{h}})|=\frac{2k+1}{T}} . $ Now, we provide an FCCHD for the function $ \Delta_{T_{h}} $ as follows.
    \begin{Theorem}\label{Con2}
        Assume that $ k,t,T \in \mathbb{N}, $ therein, $ T $ divides $ (2k+1) $ and $ T \geqslant {2t+1}. $ We define an encoding function $ Enc_{\Delta_{T_{h}}}:\mathbb{Z}_{2^{l}}^{k} \rightarrow \mathbb{Z}_{2^{l}}^{k+t}, $ $$ Enc_{\Delta_{T_{h}}}(u)=(u,p_{\omega_{h}(u)+1}), $$ with $ p_{i} \in \mathbb{Z}_{2^{l}}^{t} $ defined as follows. We set $ p_{i}=(\,\overset{i-1}{\overbrace{1,\,\cdots,\,1}}\,~\,\overset{t-i+1}{\overbrace{0,\,\cdots,\,0}}\,) $ for $ {\forall \,i \in [t+1]} ${\rm{;}} $ p_{i}=(\,\overset{i-1-t}{\overbrace{2^{l-1},\,\cdots,\,2^{l-1}}}\,~\,\overset{2t-i+1}{\overbrace{1,\,\cdots,\,1}}\,) $ for $ i \in \{t+2,\ldots,2t+1\} ${\rm{;}} when $ {i \in
        \{2t+2,\ldots,T\}} $, set $p_{i}=(\,\overset{t}{\overbrace{2^{l-1},\,\cdots,\,2^{l-1}}}\,)$ and $ p_{i}=p_{i \text{ smod } T} $ for $ i \geqslant {T+1} . $ Then this code is an {\rm {FCCHD}} for the Homogeneous weight distribution function $ \Delta_{T_{h}}. $ In addition, $ r_{h}^{\Delta_{T_{h}}}(k,t)=t $ also holds when $ T<2k+1 .$
    \end{Theorem}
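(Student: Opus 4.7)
The plan is to verify that the encoding $\mathrm{Enc}_{\Delta_{T_h}}$ satisfies the FCCHD distance requirement, then combine the resulting upper bound with Corollary~\ref{Cor1} to conclude $r_h^{\Delta_{T_h}}(k,t)=t$ when $T<2k+1$. Fix $u_1,u_2\in\mathbb{Z}_{2^l}^k$ with $\Delta_{T_h}(u_1)\neq\Delta_{T_h}(u_2)$ and write $\omega_h(u_i)=Tq_i+r_i$ with $0\le r_i<T$, so $q_1\neq q_2$. Since $p_i=p_{i\,\mathrm{smod}\,T}$, we have $p_{\omega_h(u_i)+1}=p_{r_i+1}$. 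Using
\[
d_h(\mathrm{Enc}_{\Delta_{T_h}}(u_1),\mathrm{Enc}_{\Delta_{T_h}}(u_2)) = d_h(u_1,u_2)+d_h(p_{r_1+1},p_{r_2+1}),
\]
together with the reverse triangle inequality $d_h(u_1,u_2)\ge|\omega_h(u_1)-\omega_h(u_2)|$, it suffices to show
\[
|\omega_h(u_1)-\omega_h(u_2)|+d_h(p_{r_1+1},p_{r_2+1})\ge 2t+1.
\]
If $|\omega_h(u_1)-\omega_h(u_2)|\ge 2t+1$ we are done immediately; otherwise, since the gap is a positive multiple of $T$ corrected by $r_2-r_1$ and $T\ge 2t+1$, we must have $|q_2-q_1|=1$. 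Without loss of generality $q_2=q_1+1$, forcing $r_1>r_2$ and $s:=(r_1+1)-(r_2+1)\ge T-2t$, while $|\omega_h(u_1)-\omega_h(u_2)|=T-s$.

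The technical core is the computation of $d_h(p_a,p_b)$ with $a=r_1+1>b=r_2+1$ and $s=a-b\ge T-2t$. I would partition the redundancy codewords into the three types prescribed by the construction: Type~I ($1\le i\le t+1$, entries in $\{0,1\}$), Type~II ($t+2\le i\le 2t+1$, entries in $\{1,2^{l-1}\}$), and Type~III ($2t+2\le i\le T$, the constant vector $(2^{l-1})^t$). Using Equation~(\ref{Formula}), which assigns weight $1$ to $1-2^{l-1}$ and weight $2$ to $-2^{l-1}$, a position-by-position tally for each of the ordered type pairs $(\mathrm{I},\mathrm{I})$, $(\mathrm{I},\mathrm{II})$, $(\mathrm{II},\mathrm{II})$, $(\mathrm{I},\mathrm{III})$, $(\mathrm{II},\mathrm{III})$ is expected to yield $d_h(p_a,p_b)=s$ in the first three cases and $d_h(p_a,p_b)=2t-b+1$ in the last two. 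Substituting back gives $(T-s)+s=T\ge 2t+1$ in the first family and $(T-s)+(2t-b+1)\ge 2t+1$ (using $a\le T$) in the second. The pair $(\mathrm{III},\mathrm{III})$ is vacuous, since it would force $s\ge T-2t$ and $s\le T-(2t+2)$ simultaneously. The mixed pair $(\mathrm{I},\mathrm{II})$ requires a further split into $b\le a-t$ and $b>a-t$, and this bookkeeping (together with the edge cases $l=1$ vs. $l\ge 2$ in the weight formula) is the main obstacle; it is tedious but routine.

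Finally, since the encoding above has length $k+t$, we get $r_h^{\Delta_{T_h}}(k,t)\le t$. When $T<2k+1$, $|\mathrm{Im}(\Delta_{T_h})|=(2k+1)/T\ge 2$, so Corollary~\ref{Cor1} yields $r_h^{\Delta_{T_h}}(k,t)\ge t$, and equality follows.
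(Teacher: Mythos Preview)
Your proposal is correct and follows essentially the same approach as the paper: reduce to the case $|q_1-q_2|=1$, then case-split on the types of $p_{r_1+1}$ and $p_{r_2+1}$ and compute the homogeneous distances explicitly, concluding with Corollary~\ref{Cor1} for the lower bound. The only cosmetic difference is that the paper first separates $T=2t+1$ from $T>2t+1$ before doing the type analysis, whereas you handle both at once; the computations and the use of $d_h(u_1,u_2)\ge|\omega_h(u_1)-\omega_h(u_2)|$ are identical.
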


    \begin{proof}
        Define encoding function as above, by Definition \ref{Def2}, we need to prove $$ d_{h}(Enc_{\Delta_{T_{h}}}(u_{1}),Enc_{\Delta_{T_{h}}}(u_{2})) \geqslant {2t+1}, $$ for all $ u_{1},u_{2} \in \mathbb{Z}_{2^{l}}^{k} $ with $ \Delta_{T_{h}}(u_{1}) \neq \Delta_{T_{h}}(u_{2}) . $

        If $ d_{h}(u_{1},u_{2}) \geqslant {2t+1} , $ it is easy to see that $$ d_{h}(Enc_{\Delta_{T_{h}}}(u_{1}),Enc_{\Delta_{T_{h}}}(u_{2}))=d_{h}(u_{1},u_{2})+d_{h}(p_{\omega_{h}(u_{1})+1},p_{\omega_{h}(u_{2})+1}) \geqslant {2t+1}.  $$
        If $ d_{h}(u_{1},u_{2})<{2t+1} $ and $ T \geqslant {2t+1} , $ we can assume that $ \Delta_{T_{h}}(u_{1})={m-1},~\Delta_{T_{h}}(u_{2})=m $ for some positive integer $m.$ Let $ \omega_{h}(u_{1})=(m-1)T+\omega_{1},~\omega_{h}(u_{2})=mT+\omega_{2} $ with $ \omega_{1},\omega_{2} \in \{0,1,\ldots,T-1\} , $ then $ p_{\omega_{h}(u_{1})+1}=p_{\omega_{1}+1},~p_{\omega_{h}(u_{2})+1}=p_{\omega_{2}+1} . $ We observe that $ \omega_{1}>\omega_{2} $ as $$ {\omega_{h}(u_{2})-\omega_{h}(u_{1})} \leqslant \omega_{h}(u_{2}-u_{1})<{2t+1} \leqslant T. $$
        It follows that
        \begin{equation}\label{equ}
           \begin{aligned}
               d_{h}(Enc_{\Delta_{T_{h}}}(u_{1}),Enc_{\Delta_{T_{h}}}(u_{2}))=&d_{h}(u_{1},u_{2})+d_{h}(p_{\omega_{h}(u_{1})+1},p_{\omega_{h}(u_{2})+1}) \\
               \geqslant &{T-\omega_{1}+\omega_{2}+d_{h}(p_{\omega_{1}+1},p_{\omega_{2}+1})}.
           \end{aligned}
        \end{equation}
        {\bf Case 1.}
        For $ T=2t+1 , $ then $ \omega_{1},\omega_{2} \leqslant 2t . $

        If $ \omega_{1},\omega_{2} \leqslant t $, we have $ p_{\omega_{1}+1}=(\,\overset{\omega_{1}}{\overbrace{1,\,\cdots,\,1}}\,~\,\overset{t-\omega_{1}}{\overbrace{0,\,\cdots,\,0}} \,),\, p_{\omega_{2}+1}=(\,\overset{\omega_{2}}{\overbrace{1,\,\cdots,\,1}}\,~\,\overset{t-\omega_{2}}{\overbrace{0,\,\cdots,\,0}}\,) , $ thus $$ d_{h}(p_{\omega_{h}(u_{1})+1},p_{\omega_{h}(u_{2})+1})={|\omega_{1}-\omega_{2}|} . $$ By Equation (\ref{equ}), $$ d_{h}(Enc_{\Delta_{T_{h}}}(u_{1}),Enc_{\Delta_{T_{h}}}(u_{2})) \geqslant {T-\omega_{1}+\omega_{2}+|\omega_{1}-\omega_{2}|} \geqslant {2t+1}.$$

        If $ t < \omega_{1},\omega_{2} \leqslant 2t , $ corresponding redundancies are $ p_{\omega_{1}+1}=(\,\overset{\omega_{1}-t}{\overbrace{2^{l-1},\,\cdots,\,2^{l-1}}}\,~\,\overset{2t-\omega_{1}}{\overbrace{1,\,\cdots,\,1}}\,) $ and ${p_{\omega_{2}+1}=(\,\overset{\omega_{2}-t}{\overbrace{2^{l-1},\,\cdots,\,2^{l-1}}}\,~\,\overset{2t-\omega_{2}}{\overbrace{1,\,\cdots,\,1}}\,) }, $ hence, $$ d_{h}(p_{\omega_{h}(u_{1})+1},p_{\omega_{h}(u_{2})+1})={|\omega_{1}-\omega_{2}|}. $$ Similarly, we can get $ d_{h}(Enc_{\Delta_{T_{h}}}(u_{1}),Enc_{\Delta_{T_{h}}}(u_{2}))\geqslant {2t+1} .$

        If $ \omega_{1}>t,~\omega_{2} \leqslant t , $ then $ p_{\omega_{1}+1}=(\,\overset{\omega_{1}-t}{\overbrace{2^{l-1},\,\cdots,\,2^{l-1}}}\,~\,\overset{2t-\omega_{1}}{\overbrace{1,\,\cdots,\,1}}\,),\,$ $ p_{\omega_{2}+1}=(\,\overset{\omega_{2}}{\overbrace{1,\,\cdots,\,1}}\,~\,\overset{t-\omega_{2}}{\overbrace{0,\,\cdots,\,0}}\,).$ If $ {\omega_{1}-t}>\omega_{2} , $ then $$ d_{h}(p_{\omega_{h}(u_{1})+1},p_{\omega_{h}(u_{2})+1})={2[(\omega_{1}-t)-\omega_{2}]+\omega_{2}+(2t-\omega_{1})}={\omega_{1}-\omega_{2}} .$$ If $ {\omega_{1}-t} \leqslant \omega_{2} , $ then $$ d_{h}(p_{\omega_{h}(u_{1})+1},p_{\omega_{h}(u_{2})+1})={\omega_{1}-\omega_{2}} .$$ Thus by Equation (\ref{equ}), $$ d_{h}(Enc_{\Delta_{T_{h}}}(u_{1}),Enc_{\Delta_{T_{h}}}(u_{2})) \geqslant {T-\omega_{1}+\omega_{2}+\omega_{1}-\omega_{2}}=T=2t+1. $$
        Hence, this encoding function is an FCCHD when $ T=2t+1. $

        {\bf Case 2.}
        For $ T>{2t+1} .$

        If $ \omega_{i} \leqslant\ 2t ,i=1,2 ,$ by the same argument for $ T=2t+1 , $ then we have $$ d_{h}(Enc_{\Delta_{T_{h}}}(u_{1}),Enc_{\Delta_{T_{h}}}(u_{2})) \geqslant {2t+1}. $$

        If $ \omega_{1}>2t,~t<\omega_{2} \leqslant2t , $ corresponding redundancies are $p_{\omega_{h}(u_{1})+1}=(\,\overset{t}{\overbrace{2^{l-1},\,\cdots,\,2^{l-1}}}\,),\,$ $ p_{\omega_{h}(u_{2})+1}=(\,\overset{\omega_{2}-t}{\overbrace{2^{l-1},\,\cdots,\,2^{l-1}}}\,~\,\overset{2t-\omega_{2}}{\overbrace{1,\,\cdots,\,1}}\,), $ hence, $$ d_{h}(p_{\omega_{h}(u_{1})+1},p_{\omega_{h}(u_{2})+1})={2t-\omega_{2}}. $$ Therefore, $$ d_{h}(Enc_{\Delta_{T_{h}}}(u_{1}),Enc_{\Delta_{T_{h}}}(u_{2})) \geqslant {T-\omega_{1}+\omega_{2}+2t-\omega_{2}}=T-\omega_{1}+2t \geqslant {2t+1}. $$

        If $ \omega_{1}>2t,~\omega_{2} \leqslant t , $ then $ p_{\omega_{h}(u_{1})+1}=(\,\overset{t}{\overbrace{2^{l-1},\,\cdots,\,2^{l-1}}}\,),~p_{\omega_{2}+1}=(\,\overset{\omega_{2}}{\overbrace{1,\,\cdots,\,1}}\,~\,\overset{t-\omega_{2}}{\overbrace{0,\,\cdots,\,0}}\,), $ thus $$ d_{h}(p_{\omega_{h}(u_{1})+1},p_{\omega_{h}(u_{2})+1})={2t-\omega_{2}}. $$ Hence,
        $$ d_{h}(Enc_{\Delta_{T_{h}}}(u_{1}),Enc_{\Delta_{T_{h}}}(u_{2})) \geqslant {T-\omega_{1}+\omega_{2}+2t-\omega_{2}} \geqslant {2t+1}. $$

        If $ \omega_{1}>2t $ and $ \omega_{2}>2t, $ then $ d_{h}(u_{1},u_{2}) \geqslant {T-\omega_{1}+\omega_{2}} \geqslant {2t+1} . $ obviously.

        Therefore, for $ T \geqslant {2t+1} $, this encoding function is an FCCHD with redundancy length of t. If $ T<2k+1, $ by Corollary \ref{Cor1}, $ r_{h}^{\Delta_{T_{h}}}(k,t) \geqslant t . $  Hence, $ r_{h}^{\Delta_{T_{h}}}(k,t)=t. $
    \end{proof}

    \subsection{FCCHDs for Rosenbloom-Tsfasnman weight functions}
    In this subsection, we introduce Rosenbloom-Tsfasnman weight functions and study the optimal redundancy of FCCHDs for Rosenbloom-Tsfasnman weight functions.
    \begin{Definition}\label{Def12}
        $ \forall u \in \mathbb{Z}_{2^{l}}^{k}, $ we suppose that $ u=(u_{1},u_{2},\ldots,u_{k}). $ We define the Rosenbloom-Tsfasnman weight function $ \omega_{RT}, $ if $ u \neq (0,0,\ldots,0), $ $$ \omega_{RT}(u)=\max{\{1 \leqslant j \leqslant k ~| ~ \forall s>j,u_{s}=0,u_{j} \neq 0 \}}; $$
        if $ u=(0,0,\ldots,0), $ we denote $ \omega_{RT}(u)=0. $
    \end{Definition}

    Refer to this definition, we have $ |{\rm Im}(\omega_{RT})|={k+1} $ and $ {\rm Im}(\omega_{RT})=\{0,1,\ldots,k\} $ on $ \mathbb{Z}_{2^{l}}^{k}. $
    \begin{Theorem}
        Let $f(u)=\omega_{RT}(u) $ and $ {\rm Im}(\omega_{RT}) \triangleq \{f_{0},f_{1},\ldots,f_{k}\} $ on $ \mathbb{Z}_{2^{l}}^{k}. $ Then entries of the $ (k+1) \times (k+1)$ matrix $D_{h}^{\omega_{RT}}(t,f_{0},f_{1},\ldots,f_{k})$ are
        \begin{equation*}
            [D_{h}^{\omega_{RT}}(t,f_{0},\ldots,f_{k})]_{(i+1)(j+1)}=\left\{
  	         \begin{aligned}
  	          &{2t},& \text{if } i \neq j , \\
                &0,& \text{otherwise},
               \end{aligned}
               \right.
        \end{equation*}
        for ${i,j \in \{\,0,1, \ldots, k\,\}} $. Furthermore, $ r_{h}^{\omega_{RT}}(k,t) \leqslant \left\lceil {\frac{\ln{(k+1)}+2t-1}{1-\ln{2}}} \right\rceil$.
    \end{Theorem}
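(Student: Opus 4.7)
The plan is to verify the claim in two stages: first compute the function homogeneous distances $d_{h}^{\omega_{RT}}(f_{i},f_{j})$ to identify the off-diagonal entries of the matrix, then apply Theorem \ref{Th2} and Theorem \ref{Th3} to derive the redundancy bound.

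For the first stage, by Definition \ref{Def7} it suffices to show that $d_{h}^{\omega_{RT}}(f_{i},f_{j})=1$ for every $i\neq j$, which then forces $[D_{h}^{\omega_{RT}}(t,f_{0},\ldots,f_{k})]_{(i+1)(j+1)}=[2t+1-1]^{+}=2t$. Without loss of generality assume $0\leqslant i<j\leqslant k$. The lower bound $d_{h}^{\omega_{RT}}(f_{i},f_{j})\geqslant 1$ is immediate: preimages of different function values are distinct vectors, and homogeneous distance is a metric on $\mathbb{Z}_{2^{l}}^{k}$ by Lemma \ref{Lem1}. For the matching upper bound, I would exhibit explicit preimages. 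Let $u\in\mathbb{Z}_{2^{l}}^{k}$ be the zero vector if $i=0$, or the vector having $1$ in position $i$ and $0$ elsewhere if $i\geqslant 1$; by Definition \ref{Def12} then $\omega_{RT}(u)=i$. Let $v=u+e_{j}$, where $e_{j}$ denotes the vector with a single $1$ in position $j$ and zeros elsewhere. Since $v_{j}=1\neq 0$ and $v_{s}=0$ for all $s>j$, Definition \ref{Def12} gives $\omega_{RT}(v)=j$. Moreover $v-u=e_{j}$, so by Formula (\ref{Formula}), $d_{h}(u,v)=\omega_{h}(1)=1$. Hence $d_{h}^{\omega_{RT}}(f_{i},f_{j})=1$ and the matrix has the claimed form.

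For the second stage, since $D_{h}^{\omega_{RT}}(t,f_{0},\ldots,f_{k})$ has all off-diagonal entries equal to the constant $2t$, Definition \ref{Def5} yields $N_{h}(D_{h}^{\omega_{RT}}(t,f_{0},\ldots,f_{k}))=N_{h}(k+1,2t)$. Theorem \ref{Th2} then gives $r_{h}^{\omega_{RT}}(k,t)\leqslant N_{h}(k+1,2t)$, and invoking Theorem \ref{Th3} with $M=k+1$ and $d=2t$ delivers the desired upper bound $\lceil(\ln(k+1)+2t-1)/(1-\ln 2)\rceil$.

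No step is a genuine obstacle; the crux is simply the observation that toggling a single coordinate at position $j$ moves a vector between RT-weight classes $i$ and $j$ while incurring the minimum possible homogeneous displacement $\omega_{h}(1)=1$, after which the remainder is bookkeeping and an appeal to the already-established bounds of the previous section.
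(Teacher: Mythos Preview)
Your proposal is correct and follows essentially the same approach as the paper: both compute $d_{h}^{\omega_{RT}}(f_{i},f_{j})=1$ by exhibiting explicit preimages differing in a single coordinate of homogeneous weight $1$, then invoke Theorem~\ref{Th2} and Theorem~\ref{Th3}. Your choice of witnesses ($u=e_{i}$ or $0$, $v=u+e_{j}$) differs only cosmetically from the paper's ($u_{ij}=e_{i}+e_{j}$, $v_{ij}=e_{j}$), and you handle the boundary case $i=0$ more explicitly than the paper does.
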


    \begin{proof}
        Since $ {\rm Im}(\omega_{RT}) \triangleq \{f_{0},f_{1},\ldots,f_{k}\}=\{0,1,\cdots,k\} $ on $ \mathbb{Z}_{2^{l}}^{k}. $ Without loss of generality assume that $ f_{i}=i,\,{i \in \{\,0,1, \ldots, k\,\}} $. By Definition \ref{Def6},
        $$ d_{h}^{\omega_{RT}}(f_{i},f_{j})={ \min{ \{\, d_{h}(u,v) \, | \, \omega_{RT}(u)=f_{i} , \, \omega_{RT}(v)=f_{j},\, u,\,v \in \mathbb{Z}_{2^{l}}^{k}\, \} }}, $$

        If $i=j$, $d_{h}^{\omega_{RT}}(f_{i},f_{j})=0$. If $i \neq j$, suppose $ i>j $. Let $ u_{ij}=e_{i}+e_{j},\,v_{ij}=e_{j}$, where $ e_{i},\,e_{j} \in \mathbb{Z}_{2^{l}}^{k} $ are unit vectors with i-th, j-th components being 1, respectively. Thus $ \omega_{RT}(u_{ij})=i=f_{i} $ and $ \omega_{RT}(v_{ij})=j=f_{j} $. Since $ d_{h}(u_{ij},v_{ij})=1 $, we have $ d_{h}^{\omega_{RT}}(f_{i},f_{j})=1 $. Therefore, $ \forall \, i,\,j \in \{\,0,\,1,\,\cdots,\,k\,\} $,
        \begin{equation*}
            [D_{h}^{\omega_{RT}}(t,f_{0},\ldots,f_{k})]_{(i+1)(j+1)}=\left\{
  	         \begin{aligned}
  	          &{2t},& \text{if } i \neq j , \\
                &0,& \text{otherwise}.
               \end{aligned}
               \right.
        \end{equation*}

        Since $ N_{h}(D_{h}^{\omega_{RT}}(t,f_{0},\ldots,f_{k}))=N_{h}(k+1,\,2t) $, by Theorem \ref{Th2} and Theorem \ref{Th3}, we have $$ r_{h}^{\omega_{RT}}(k,t) \leqslant N_{h}(k+1,\,2t) \leqslant \left\lceil {\frac{\ln{(k+1)}+2t-1}{1-\ln{2}}} \right\rceil. $$
    \end{proof}

   \section{FCCHDs for two types of functions}
   In this section, we introduce two types of functions, which are locally homogeneous binary functions and min-max functions. Using the special structure of these functions, we further study FCCHDs for these functions.
    \subsection{FCCHDs for locally homogeneous binary functions}
	In this subsection, we introduce homogeneous binary functions and we study the optimal redundancy of FCCHDs for these functions.

    Firstly, we define a homogeneous function ball of a function $ f $ as follows.
	
	\begin{Definition}\label{Def8}
		The homogeneous function ball of a function f with radius $\rho$ around $u \in \mathbb{Z}_{2^{l}}^{k}$ is defined by
         $$ B_{h}^{f}(u,\rho)=\{f(u'):u' \in
         \mathbb{Z}_{2^{l}}^{k} \text{ and } d_{h}(u,u') \leqslant \rho \} . $$
	\end{Definition}

    Further, the locally homogeneous binary function is defined as follows.
    \begin{Definition}\label{Def9}
        A function $f:\mathbb{Z}_{2^{l}}^{k} \rightarrow {\rm Im}(f)$ is called $\rho$-locally homogeneous binary function, if for all $u \in \mathbb{Z}_{2^{l}}^{k},$ $|B_{h}^{f}(u,\rho)| \leqslant 2 .$
    \end{Definition}

    It is easy to prove the existence of this function. Constant functions and binary functions are $\rho$-locally homogeneous binary functions for any positive integer $\rho.$ Obviously, when $ T \geqslant 4t+1 , $ the homogeneous weight distribution function $ \Delta_{T_{h}} $ is a $ 2t $-locally homogeneous binary function. For any $2t$-locally homogeneous binary function, a more specific bound is obtained as follows.
    \begin{Theorem}\label{Lem7}
        For any $2t$-locally homogeneous binary function $f$ with $|{\rm Im}(f)| \geqslant 2,$ $$ t \leqslant r_{h}^{f}(k,t) \leqslant 2t. $$
    \end{Theorem}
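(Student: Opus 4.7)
The plan is straightforward once we separate the two inequalities. The lower bound $r_h^f(k,t) \geq t$ is immediate from Corollary~\ref{Cor1}, since $|{\rm Im}(f)| \geq 2$. So the real content is the upper bound $r_h^f(k,t) \leq 2t$, for which I would exhibit an explicit {\rm FCCHD} of redundancy $r = 2t$ built from a proper $2$-coloring of a ``conflict graph'' associated with $f$.

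To construct this encoding, I first define an auxiliary graph $H$ on the vertex set $\mathbb{Z}_{2^{l}}^{k}$ by declaring $u_1 \sim u_2$ if and only if $d_h(u_1,u_2) \leq 2t$ and $f(u_1) \neq f(u_2)$. These are exactly the pairs whose $Enc$-images need extra distance from the redundancy. The key claim is that $H$ is bipartite. Indeed, for any $u$, every neighbor $v$ of $u$ in $H$ satisfies $f(v) \in B_h^f(u,2t) \setminus \{f(u)\}$; by the $2t$-locally homogeneous binary assumption this set has size at most $1$, so all neighbors of $u$ in $H$ share a single common function value. Hence in any closed walk $u_1 \sim u_2 \sim \cdots \sim u_{2s+1} \sim u_1$ of odd length, applying this observation at each vertex yields $f(u_1) = f(u_3) = \cdots = f(u_{2s+1})$ (indices cyclic), contradicting the edge $u_{2s+1} \sim u_1$. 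Therefore $H$ has no odd cycle and is bipartite.

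Let $\chi: \mathbb{Z}_{2^{l}}^{k} \to \{0,1\}$ be a proper $2$-coloring of $H$, extended arbitrarily on isolated vertices. Define $p(u)=(\chi(u),\chi(u),\ldots,\chi(u)) \in \mathbb{Z}_{2^{l}}^{2t}$ and $Enc(u)=(u,p(u))$. For any $u_1,u_2$ with $f(u_1) \neq f(u_2)$, either $d_h(u_1,u_2) \geq 2t+1$, in which case Definition~\ref{Def2} holds on the message part alone, or $d_h(u_1,u_2) \leq 2t$, which makes $\{u_1,u_2\}$ an edge of $H$; then $\chi(u_1) \neq \chi(u_2)$, so $d_h(p(u_1),p(u_2)) = 2t\cdot\omega_h(1) = 2t$ and hence $d_h(Enc(u_1),Enc(u_2)) \geq 1 + 2t = 2t+1$, as required. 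The main obstacle is the bipartiteness claim: the rest of the argument is routine, but that step is the single place where the $2t$-locally homogeneous binary hypothesis is essential, because without it odd cycles in $H$ would obstruct any two-valued repetition-style redundancy of this form.
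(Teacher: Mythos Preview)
Your proof is correct, and at its core it coincides with the paper's: both encode by repeating a $\{0,1\}$-valued label $2t$ times, and both rely on the fact that any two messages $u_1,u_2$ with $f(u_1)\neq f(u_2)$ and $d_h(u_1,u_2)\le 2t$ must receive different labels. The presentations differ, however. The paper writes down an explicit label $w_{2t}(u)=\mathbf{1}[\,f(u)=\max B_h^f(u,2t)\,]$ and then argues by exhibiting a decoding procedure; one can check that this $w_{2t}$ is precisely a proper $2$-coloring of your conflict graph $H$, since if $u_1\sim u_2$ in $H$ then $B_h^f(u_1,2t)=B_h^f(u_2,2t)=\{f(u_1),f(u_2)\}$ and exactly one of the two attains the maximum. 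Your route---proving $H$ is bipartite via the odd-cycle argument and then verifying Definition~\ref{Def2} directly on distances---is more abstract but arguably cleaner, because it sidesteps the decoding discussion and matches the distance-based definition of an FCCHD exactly. The paper's version has the advantage of giving a concrete, computable coloring rather than merely asserting one exists. One small imprecision: your equality $d_h(p(u_1),p(u_2))=2t\cdot\omega_h(1)=2t$ assumes $l\ge 2$; for $l=1$ one has $\omega_h(1)=2$, but this only strengthens the bound, so the argument still goes through.
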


    \begin{proof}
        By Corollary \ref{Cor1}, $r_{h}^{f}(k,t) \geqslant t.$
        On the other hand, let $ {\rm Im}(f)=\{f_{1},f_{2},\ldots,f_{E}\} $ and without loss of generality, assume that $f_{i}=i.$ Let $u$ be the message to be encoded. And define
        \begin{equation*}\label{Formula 4}
           w_{2t}(u)=\left\{
  	         \begin{aligned}
  	          &1,\text{ if } f(u)=\max{B_{h}^{f}(u,2t)}, \\
                &0,\text{ otherwise}.
               \end{aligned}
               \right.
        \end{equation*}
        Define $Enc(u)=(\,u,\,\overset{2t}{\overbrace{w_{2t}(u),\,\cdots,\,w_{2t}(u)}}\,)=(\,u,p\,),\,u \in \mathbb{Z}_{2^{l}}^{k} $, thus $p=(\,\overset{2t}{\overbrace{w_{2t}(u),\,\cdots,\,w_{2t}(u)}}\,).$ Suppose the receiver receives $(u',p').$ Firstly, we compute $B_{h}^{f}(u',t).$
        For any $ u_{1} \in \mathbb{Z}_{2^{l}}^{k} $ with $d_{h}(u',u_{1}) \leqslant t , $ we have
        $$ d_{h}(u,u_{1}) \leqslant {d_{h}(u',u_{1})+d_{h}(u',u)} \leqslant 2t, $$
        thus $f(u) \in B_{h}^{f}(u',t) \subseteq B_{h}^{f}(u,2t).$ Since $f$ is $2t$-locally homogeneous binary function, it holds that $|B_{h}^{f}(u,2t)|\leqslant 2$ and $ |B_{h}^{f}(u',t)|\leqslant 2. $

        If $ |B_{h}^{f}(u',t)|=1, $ we get the correct $f(u).$

        If $|B_{h}^{f}(u',t)|=2,$ since at most t errors occur through transmitting, we can compute $(w_{2t}(u'),p')$ to obtain the correct $w_{2t}(u).$ If $ w_{2t}(u)=1 , $ then $ f(u)=\max{B_{h}^{f}(u',t)} ; $ if $ w_{2t}(u)=0 , $ then $ f(u)=\min{B_{h}^{f}(u',t)}. $

        Therefore, we can get the correct value $ f(u) , $ and $ \{Enc(u)\,|\,u \in \mathbb{Z}_{2^{l}}^{k}\,\} $ is an FCCHD, thus $ t \leqslant r_{h}^{f}(k,t) \leqslant 2t. $
    \end{proof}

    The proof of Theorem \ref{Lem7} offers a way for encoding and decoding.

    \subsection{ FCCHDs for min-max functions}
    In this subsection, we introduce the definition of the min-max function and mainly study the optimal redundancy of FCCHDs for min-max functions. Assume $ k=ws $ for positive integers $ w,s. $ We write $ u \in \mathbb{Z}_{2^{l}}^{k} $ to be formed of $ w $ parts such that $ u=(u_{(1)},u_{(2)},\ldots,u_{(w)}) $ where $ u_{(i)} \in \mathbb{Z}_{2^{l}}^{s} $ of length $ s. $
    \begin{Definition}\label{Def11}{\rm (\cite{lbwy})}
        The min-max function on $ \mathbb{Z}_{2^{l}}^{k} $ is defined by
        $$ mm_{w}(u)=(\text{argmin}_{1 \leqslant i \leqslant w}{u_{(i)}},\text{argmax}_{1 \leqslant i \leqslant w}{u_{(i)}}), $$
        where $ u=(u_{(1)},u_{(2)},\ldots,u_{(w)}),~u_{(i)} \in \mathbb{Z}_{2^{l}}^{s},\,u \in \mathbb{Z}_{2^{l}}^{k} $ with $ k=ws $ and the ordering $ "<" $ between the $ u_{(i)} $'s is primarily lexicographical and secondarily, if $ u_{(i)}=u_{(j)}, $ according to ascending indices. Therein, $\text{argmin}_{1 \leqslant i \leqslant w}{u_{(i)}}$ is the position of the minimum $u_{(i)}$ in $u$ and $\text{argmax}_{1 \leqslant i \leqslant w}{u_{(i)}})$ is the position of the maximum $u_{(i)}$ in $u$.
    \end{Definition}

    \begin{Remark}
        Let $ \mathbb{Z}_{2^{l}}=\{~\Bar{0},~\Bar{1},~\ldots,~\overline{{2^{l}-1}}~\}. $ We first define the ordering $ "<" $ in $ \mathbb{Z}_{2^{l}} $ to be $ \Bar{0} < \Bar{1} < \cdots < \overline{{2^{l}-2}} < \overline{{2^{l}-1}}. $ Then the ordering $ "<" $ between the $ u^{(i)} $'s is primarily lexicographical, i.e., compare the ordering of their first components, if their first components are equal, then compare their second components, until the first unequal components, otherwise, they are equal.
    \end{Remark}

    Such as $ {u=(u_{(1)},u_{(2)},u_{(3)},u_{(4)},u_{(5)})=(00,01,10,00,02) \in \mathbb{Z}_{4}^{10}},$ the ordering between $ u_{(i)} $ is $ {u_{(1)}<u_{(4)}<u_{(2)}<u_{(5)}<u_{(3)}}, $ thus $ mm_{w}(u)=(1,3). $ By further analysis of the homogeneous distance between two min-max function values, we have the following results.
    \begin{Lemma}\label{Lem11}
         For $ w \geqslant 3,~s \geqslant 2, $ the minimum distance between any two min-max function values $ f_{1},f_{2} $ is at most $ 2 , $ i.e. $$ d_{h}^{mm_{w}}(f_{1},f_{2}) \leqslant 2,~{\forall \ f_{1},f_{2} \in {\rm Im}(mm_{w})}. $$
    \end{Lemma}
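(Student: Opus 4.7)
I would prove this by exhibiting, for any pair $f_1=(a_1,b_1), f_2=(a_2,b_2)\in{\rm Im}(mm_w)$, explicit preimages $u,v\in\mathbb{Z}_{2^l}^k$ with $mm_w(u)=f_1$, $mm_w(v)=f_2$, and $d_h(u,v)\le 2$. The first step is to fix templates $\alpha<_{\mathrm{lex}}\gamma<_{\mathrm{lex}}\beta$ in $\mathbb{Z}_{2^l}^s$, which exist because $s\ge 2$, chosen with small pairwise homogeneous distances; for example $\alpha=(0,\ldots,0)$, $\gamma=(0,1,0,\ldots,0)$, $\beta=(0,2,0,\ldots,0)$, giving $d_h(\alpha,\gamma)=d_h(\gamma,\beta)=1$. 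A canonical preimage of $(a,b)$ with $a\neq b$ is the vector $u^{(a,b)}$ equal to $\gamma$ in every block except $\alpha$ at block $a$ and $\beta$ at block $b$; since $\alpha$ is the unique minimum and $\beta$ the unique maximum, $mm_w(u^{(a,b)})=(a,b)$. The central device is the tie-breaking rule built into $mm_w$: placing $\alpha$ at two positions $p<q$ forces the argmin to be $p$, and deleting the $\alpha$ at $p$ shifts the argmin to $q$; this lets $u$ and $v$ agree on most of their blocks.

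The proof is then a case analysis on $|\{a_1,b_1\}\cap\{a_2,b_2\}|$. In the ``same-role'' intersection-$1$ cases ($a_1=a_2$ or $b_1=b_2$) the canonical preimages $u^{(a_1,b_1)}$ and $u^{(a_2,b_2)}$ differ in only two blocks, giving distance $2\,d_h(\beta,\gamma)=2$ or $2\,d_h(\alpha,\gamma)=2$. In the disjoint intersection-$0$ case (which requires $w\ge 4$, or arises degenerately when one of $f_1,f_2$ equals $(1,1)$), I would split into four sub-cases according to the orderings $a_1\lessgtr a_2$ and $b_1\lessgtr b_2$; in each, place tied $\alpha$'s at $\{a_1,a_2\}$ and tied $\beta$'s at $\{b_1,b_2\}$ in one of $u,v$ so that tie-breaking produces its prescribed $mm_w$-value, and untie them in the other, forcing $u,v$ to differ in exactly two blocks with total distance $d_h(\alpha,\gamma)+d_h(\gamma,\beta)=2$. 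For the swap case $(a_2,b_2)=(b_1,a_1)$ and the ``swapped-role'' intersection-$1$ cases (e.g.\ $a_1=b_2$), the natural construction involves $d_h(\alpha,\beta)$ instead, so I would switch to a template with $d_h(\alpha,\beta)=1$ (such as $\beta=(1,0,\ldots,0)$ with $\alpha=(0,\ldots,0)$), introducing a fourth template $\beta'>\beta$ in sub-cases where the shared index exceeds both non-shared indices.

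The main obstacle is the combinatorics: the disjoint and swapped-role cases each decompose into several ordering sub-cases, and in each one must verify that the tied-$\alpha$/tied-$\beta$ placements yield the prescribed argmin and argmax under the tie-breaking rule. A secondary nuisance is that no triple $\alpha<\gamma<\beta$ in $\mathbb{Z}_{2^l}^s$ simultaneously has all three pairwise distances equal to $1$ when $l=2$, so one really does need different templates for different cases; since each case produces its own pair $(u,v)$ independently, this per-case change of templates is permitted. Once all sub-cases are checked the bound $d_h(u,v)\le 2$ holds for every $f_1,f_2\in{\rm Im}(mm_w)$, completing the proof.
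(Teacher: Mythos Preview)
Your approach is sound and will go through, but it is organized quite differently from the paper's proof and is in places more complicated than necessary. The paper reduces to $s=2$ and then does a short case split on the relative position of the new argmin index: $i'<i$; $i'>i$ with $i'\neq j$; and $i'>i$ with $i'=j$. In each case it writes down one explicit $u$ with $mm_w(u)=(i,j)$ and makes exactly two single--coordinate edits (each of homogeneous weight $1$) to obtain $u'$ with $mm_w(u')=(i',j')$, so no template bookkeeping and no separate treatment of ``swap'' or ``swapped--role'' situations is needed. Your intersection--size decomposition works too, but it forces you to juggle several template triples and a fourth symbol $\beta'$, and your description of the disjoint case is slightly off: placing tied $\alpha$'s at $\{a_1,a_2\}$ \emph{and} tied $\beta$'s at $\{b_1,b_2\}$ in the same vector yields $(\min(a_1,a_2),\max(b_1,b_2))$, which equals one of $f_1,f_2$ in only two of the four ordering sub--cases; in the other two you must put the $\alpha$--tie in one vector and the $\beta$--tie in the other (this still gives distance $d_h(\alpha,\gamma)+d_h(\gamma,\beta)=2$, so the repair is immediate). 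Two small side remarks: the pair $(1,1)$ is never in ${\rm Im}(mm_w)$, since the secondary index order makes the total order strict and hence $\text{argmin}\neq\text{argmax}$; and for the swapped--role cases you do not actually need to realise $d_h(\alpha,\beta)=1$ at the shared block---a single edit sending the shared block past $\beta$ (e.g.\ from $00$ to $30$ when $l\ge 2$) already flips both roles at once, often giving distance $1$. In short, your plan is correct; the paper's index--ordering case analysis is simply a shorter path to the same conclusion.
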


    \begin{proof}
        We give the proof for $ s=2. $ For $ s>2 , $ we can restrict all the bits of all $ u_{(v)},v \in [w] $ to be $ 0 $ except for the two least significant bits and apply the same proof method of $ s=2 . $ By Definition \ref{Def6}, we need to show that $ {\forall \ i,j,i',j' \in [w]}  \mbox{~with~}  (i,j) \neq (i',j')$, $$ {\exists \ u,u' \in \mathbb{Z}_{2^{l}}^{k} \mbox{~with~} mm_{w}(u)=(i,j),~mm_{w}(u')=(i',j')} \mbox{~s.t.~} d_{h}(u,u') \leqslant 2. $$

        If $ i' \leqslant i , $ let $$ u=(01,\ldots,\overset{u_{(i)}}{\overbrace{00}},\ldots,\overset{u_{(j)}}{\overbrace{02^{l-1}}},01,\ldots,01),$$ then, $mm_{w}(u)=(i,j). $
        Firstly, if $ i'<i , $ flip the second bit of $ u_{(i')} $ to be $ 0 $ such that $ u'_{(i')}=(00) , $ then change the first bit of $ u_{(j')} $ to be $ 1 $ such that $ u'_{(j')}=(11) $ or $ u'_{(j')}=(12^{l-1}) , $ thus we obtain $$ u'=(01,\ldots,\overset{u'_{(i')}}{\overbrace{00}},01,\ldots,\overset{u'_{(i)}}{\overbrace{00}},\ldots,\overset{u'_{(j')}}{\overbrace{11}},\ldots,\overset{u'_{(j)}}{\overbrace{02^{l-1}}},01,\ldots,01), $$
        or $$ u'=(01,\ldots,\overset{u'_{(i')}}{\overbrace{00}},01,\ldots,\overset{u'_{(i)}}{\overbrace{00}},\ldots,01,\ldots,\overset{u'_{(j')}}{\overbrace{12^{l-1}}},01,\ldots,01), $$ we see that $ mm_{w}(u')=(i',j') $ and $ d_{h}(u,u')=2 , $ therefore $ d_{h}^{mm_{w}}((i,j),(i',j')) \leqslant 2 . $

        If $ i'>i,~i' \neq j , $ let
        $$  u=(01,\ldots,\overset{u_{(i)}}{\overbrace{00}},\ldots,01,\overset{u_{(i')}}{\overbrace{00}},01,\ldots,\overset{u_{(j)}}{\overbrace{02^{l-1}}},01,\ldots,01), $$ then $ mm_{w}(u)=(i,j). $
        Firstly, we flip the second bit of $ u_{(i)} $ to be $ 1 $ such that $ u'_{(i)}=(01) , $ then change the first bit of $ u_{(j')} $ to be $ 1 $ such that $ u'_{(j')}=(11) $ or $ u'_{(j')}=(12^{l-1}) , $ thus we have $$ u'=(01,\ldots,01,\overset{u'_{(i)}}{\overbrace{01}},\ldots,\overset{u'_{(i')}}{\overbrace{00}},\ldots,\overset{u'_{(j')}}{\overbrace{11}},\ldots,\overset{u'_{(j)}}{\overbrace{02^{l-1}}},01,\ldots,01), $$
        or
        $$ u'=(01,\ldots,01,\overset{u'_{(i)}}{\overbrace{01}},\ldots,\overset{u'_{(i')}}{\overbrace{00}},\ldots,\overset{u'_{(j')}}{\overbrace{12^{l-1}}},01,\ldots,01), $$
        we know $ mm_{w}(u')=(i',j') $ and $ d_{h}(u,u')=2 , $ thus $ d_{h}^{mm_{w}}((i,j),(i',j')) \leqslant 2 . $

        If $ i'>i,~i'=j , $ let
        $$ u=(02^{l-1},\ldots,\overset{u_{(i)}}{\overbrace{01}},\ldots,\overset{u_{(j)}=u_{(i')}}{\overbrace{10}},02^{l-1},\ldots,02^{l-1}), $$ then $mm_{w}(u)=(i,j). $
        Firstly, we change the first bit of $ u_{(j)} $ to be $ 0 $ such that $ u'_{(j)}=(00) , $ then change the first bit of $ u_{(j')} $ to be 1 such that $ u'_{(j')}=(11) $ or $ u'_{(j')}=(12^{l-1}) ,$ thus we obtain
        $$ u'=(02^{l-1},\ldots,\overset{u'_{(i)}}{\overbrace{01}},\ldots,\overset{u'_{(j')}}{\overbrace{12^{l-1}}},\ldots ,\overset{u'_{(j)}=u'_{(i')}}{\overbrace{00}},02^{l-1},\ldots,02^{l-1}), $$
        or
        $$ u'=(02^{l-1},\ldots,\overset{u'_{(i)}=u'_{(j')}}{\overbrace{11}},\ldots,\overset{u'_{(j)}=u'_{(i')}}{\overbrace{00}},02^{l-1},\ldots,02^{l-1}). $$
        Then, $ mm_{w}(u)=(i',j') $ and $ d_{h}(u,u')=2 , $ therefore, $ d_{h}^{mm_{w}}((i,j),(i',j')) \leqslant 2 . $
    \end{proof}

    \begin{Lemma}\label{Lem12}
        For $ w \geqslant 3,~s \geqslant 2,~l \geqslant 2 , $ given a min-max function value $ f_{1}=(i,j) , $ the number of min-max function values $ f_{2} \neq (i,j) $ that satisfy $ d_{h}^{mm_{w}}(f_{1},f_{2})=1 $ is $ 4(w-2) . $ Therefore, the number of entries in $ D_{h}^{mm_{w}}(t,f_{1},f_{2},\ldots,f_{w(w-1)}) $ that equal $ 2t $ is equal $ 4w(w-1)(w-2) . $
    \end{Lemma}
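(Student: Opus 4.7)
The plan is to determine exactly which function values $f_2 = (i', j') \neq (i, j)$ satisfy $d_h^{mm_w}((i, j), f_2) = 1$, and then conclude the count using Definition \ref{Def7} together with $|{\rm Im}(mm_w)| = w(w-1)$. By Definition \ref{Def6} and the additivity of $\omega_h$ across coordinates, $d_h(u, u') = 1$ holds exactly when $u$ and $u'$ agree on all but one coordinate and differ there by an element of $\mathbb{Z}_{2^l} \setminus \langle 2^{l-1} \rangle$. In the block decomposition this means there is a unique index $v'' \in [w]$ such that $u_{(v'')}$ and $u'_{(v'')}$ differ in one position by a weight-$1$ unit, while $u_{(v)} = u'_{(v)}$ for all $v \neq v''$.

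First I would run a case analysis on $v''$ to pin down the set of reachable $(i', j')$. If $v'' = i$, every block $v \neq i$ is unchanged, so the new argmin is either $i$ or the minimum of the unchanged blocks---which cannot be $j$ for $w \geq 3$ since $u_{(j)}$ is their maximum; likewise argmax stays $j$ unless $u'_{(i)}$ exceeds $u_{(j)}$, in which case argmax becomes $i$ and the new argmin lies in $[w] \setminus \{i, j\}$ (again since $u_{(j)}$ is maximal among unchanged blocks and tie-breaking by index excludes $j$). This yields only $(v, j)$ or $(v, i)$ with $v \in [w] \setminus \{i, j\}$. The case $v'' = j$ is symmetric and yields $(i, v)$ or $(j, v)$ with $v \in [w] \setminus \{i, j\}$. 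If $v'' \in [w] \setminus \{i, j\}$, then $u_{(i)}$ and $u_{(j)}$ are unchanged, so argmin stays $i$ unless $u'_{(v'')} < u_{(i)}$ (giving $(v'', j)$), and argmax stays $j$ unless $u'_{(v'')} > u_{(j)}$ (giving $(i, v'')$); both cannot occur at once because $u_{(i)} < u_{(j)}$. Hence the only candidates at distance $1$ are the four families $(v, j), (i, v), (j, v), (v, i)$ for $v \in [w] \setminus \{i, j\}$, pairwise distinct by inspection of their coordinates, giving at most $4(w-2)$ pairs.

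Next I would verify that each of these $4(w-2)$ pairs is actually attained by exhibiting, for each family, an explicit $u$ and a weight-$1$ single-coordinate change. The hypothesis $l \geq 2$ guarantees both ``small'' and ``large'' weight-$1$ units such as $1$ and $2^{l-1}+1$, and $s \geq 2$ supplies a leading coordinate that dominates the lexicographic order; adding or removing such a unit at the leading coordinate of $u_{(i)}$, $u_{(j)}$, or $u_{(v)}$ can push the corresponding block past $u_{(j)}$ or below $u_{(i)}$, realizing each of $(v, j)$, $(v, i)$, $(i, v)$, $(j, v)$ after an initial choice of $u$ tailored to the target pair. The patterns from the proof of Lemma \ref{Lem11} supply the templates; at each step one must also check that index-based tie-breaking produces the intended argmin and argmax, which is routine.

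The second assertion then follows by a counting argument: by Definition \ref{Def7} and Lemma \ref{Lem11} (which forces $d_h^{mm_w} \in \{0,1,2\}$), an off-diagonal entry $[D_h^{mm_w}(t, f_1, \ldots, f_{w(w-1)})]_{ab}$ equals $2t$ exactly when $d_h^{mm_w}(f_a, f_b) = 1$. Since $|{\rm Im}(mm_w)| = w(w-1)$ and every row has exactly $4(w-2)$ such entries, the total is $w(w-1) \cdot 4(w-2) = 4w(w-1)(w-2)$. The main obstacle will be the realization step: cleanly constructing $u$ and the one-coordinate modification for each of the four families while keeping the lexicographic comparisons and index-based tie-breaking aligned with the intended argmin and argmax is the only delicate part of the argument.
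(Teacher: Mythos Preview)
Your proposal is correct and follows essentially the same approach as the paper: a case analysis on which block is modified by the single weight-$1$ coordinate change, identifying exactly the four families $(v,j),(v,i),(i,v),(j,v)$ with $v\in[w]\setminus\{i,j\}$, followed by explicit realizations and the row-count $w(w-1)\cdot 4(w-2)$. The paper reduces to $s=2$ and then exhibits the constructions case by case (its Cases~1--3 correspond to your split $v''=i$, $v''=j$, $v''\notin\{i,j\}$); you are somewhat more explicit than the paper on the upper-bound side (why nothing outside the four families can arise), but the skeleton is the same.
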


    \begin{proof}
        For $ s>2 , $ we can restrict all bits of all $ u_{(v)},v \in [w] $ to be $ 0 $ except for the two least significant bits and apply the proof method of $ s=2 . $ Thus we just need to prove the case of $ s=2 . $

        {\bf Case 1.} Change one bit of $ u_{(i)} . $

        (1) Firstly, change $ u_{(i)} $ such that the result becomes larger than $ u_{(i)} $ but smaller than $ u_{(j)} . $ Let
        $$ u=(11,\ldots,11,\overset{u_{(i)}}{\overbrace{01}},11,\ldots,11,\overset{u_{(v)}}{\overbrace{10}},11,\ldots,11,\overset{u_{(j)}}{\overbrace{12^{l-1}}},11,\ldots,11),~mm_{w}(u)=(i,j). $$
        Change the first bit of $ u_{(i)} $ to be $ 1 $ that is $ u'_{(i)}=(11) $ to get $ u' , $
        $$ u'=(11,\ldots,11,\overset{u'_{(i)}}{\overbrace{11}},11,\ldots,11,\overset{u'_{(v)}}{\overbrace{10}},11,\ldots,11,\overset{u'_{(j)}}{\overbrace{12^{l-1}}},11,\ldots,11), $$
        then $ d_{h}(u,u')=1,~mm_{w}(u')=(v,j),~v \in [w]\backslash \{i,j\} . $ Thus $ f_{2}=(v,j),~v \in [w]\backslash \{i,j\} $ satisfies $ d_{h}^{mm_{w}}(f_{1},f_{2})=1. $

        (2) Change $ u_{(i)} $ such that the result becomes larger than $ u_{(j)} . $ Let
        $$ u=(10,\ldots,10,\overset{u_{(i)}}{\overbrace{02}},10,\ldots,10,\overset{u_{(v)}}{\overbrace{03}},10,\ldots,\overset{u_{(j)}}{\overbrace{10}},11,10,\ldots,10),~mm_{w}(u)=(i,j). $$
        Change the first bit of $ u_{(i)} $ to be $ 1 $ that is $ u'_{(i)}=(12) $ to get $ u' , $
        $$ u'=(10,\ldots,10,\overset{u'_{(i)}}{\overbrace{12}},10,\ldots,10,\overset{u'_{(v)}}{\overbrace{03}},10,\ldots,10,\overset{u'_{(j)}}{\overbrace{11}},10,\ldots,10), $$
        then $ d_{h}(u,u')=1,~mm_{w}(u')=(v,i),~v \in [w]\backslash \{i,j\} . $ Thus $ f_{2}=(v,i),~v \in [w]\backslash \{i,j\} $ satisfies $ d_{h}^{mm_{w}}(f_{1},f_{2})=1. $

        {\bf Case 2.} Change one bit of $ u_{(j)} . $\

        (1) Firstly, change $ u_{(j)} $ such that the result becomes smaller than $ u_{(j)} $ but larger than $ u_{(i)} . $ Let
        $$ u=(10,\ldots,10,\overset{u_{(i)}}{\overbrace{00}},10,\ldots,10,\overset{u_{(v)}}{\overbrace{11}},10,\ldots,10,\overset{u_{(j)}}{\overbrace{20}},10,\ldots,10),~mm_{w}(u)=(i,j). $$
        Change the first bit of $ u_{(j)} $ to be $ 1 $ that is $ u'_{(j)}=(10) $ to get $ u' , $
        $$ u'=(10,\ldots,10,\overset{u'_{(i)}}{\overbrace{00}},10,\ldots,10,\overset{u'_{(v)}}{\overbrace{11}},10,\ldots,10,\overset{u'_{(j)}}{\overbrace{10}},10,\ldots,10), $$
        then $ d_{h}(u,u')=1,~mm_{w}(u')=(i,v),~v \in [w]\backslash \{i,j\} . $ Thus $ f_{2}=(i,v),~v \in [w]\backslash \{i,j\} $ satisfies $ d_{h}^{mm_{w}}(f_{1},f_{2})=1. $

        (2) Change $ u_{(j)} $ such that the result becomes smaller than $ u_{(i)} . $ Let
        $$ u=(12,\ldots,12,\overset{u_{(i)}}{\overbrace{11}},12,\ldots,12,\overset{u_{(v)}}{\overbrace{13}},12,\ldots,12,\overset{u_{(j)}}{\overbrace{20}},12,\ldots,12),~mm_{w}(u)=(i,j). $$
        Change the first bit of $ u_{(j)} $ to be $ 1 $ that is $ u'_{(j)}=(10) $ to get $ u' , $
        $$ u'=(12,\ldots,12,\overset{u'_{(i)}}{\overbrace{11}},12,\ldots,12,\overset{u'_{(v)}}{\overbrace{13}},12,\ldots,12,\overset{u'_{(j)}}{\overbrace{10}},12,\ldots,12), $$
        then $ d_{h}(u,u')=1,~mm_{w}(u')=(j,v),~v \in [w]\backslash \{i,j\} . $ Thus $ f_{2}=(j,v),~v \in [w]\backslash \{i,j\} $ satisfies $ d_{h}^{mm_{w}}(f_{1},f_{2})=1. $

        {\bf Case 3.} Change one bit of $ u_{(v)},~v \in [w]\backslash \{i,j\} . $

        Obviously, this does not yield any new function values that satisfy the condition.

        To sum up, for each fixed min-max function value $ f_{1} , $ there exists $ 4(w-2) $ min-max function values $ f_{2} $ with $ f_{1} \neq f_{2} $ and $ d_{h}^{mm_{w}}(f_{1},f_{2})=1. $ Moreover, by Definition \ref{Def7}, the number of entries in $ D_{h}^{mm_{w}}(t,f_{1},f_{2},\ldots,f_{w(w-1)}) $ that equal $ 2t $ is equal $ {4w(w-1)(w-2)}. $
    \end{proof}

    Based on Lemma \ref{Lem11} and Lemma \ref{Lem12}, we obtain an upper bound of the optimal redundancy of FCCHDs for the min-max function.
    \begin{Theorem}\label{Th6}
        For $ w \geqslant3,s \geqslant 2,l \geqslant 2, $ the optimal redundancy $ r_{h}^{mm_{w}}(k,t) $ of {\rm{FCCHDs}} for the min-max function is bounded by
        $$ \frac{2t(w^{2}-w-1)-(3w^{2}-7w+5)}{w(w-1)} \leqslant r_{h}^{mm_{w}}(k,t) \leqslant \min_{r \in N}{\{r:\phi(r)>0\}}, $$
        where $ {\phi(r)={2^{lr}-(w^{2}-5w+7)V_{h}(r,2t-2)-4(w-2)V_{h}(r,2t-1) }}. $ Particularly, if $ {t=1},$ then $ r_{h}^{mm_{w}}(k,t) \geqslant \frac{2(2t-1)(w-2)}{w(w-1)}. $
    \end{Theorem}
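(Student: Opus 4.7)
The plan is to sandwich $r_h^{mm_w}(k,t)$ by applying Theorem \ref{Th2} with Lemma \ref{Lem4} for the upper bound, and Corollary \ref{Cor1} with Lemma \ref{Lem3} for the lower bound. The combinatorial input comes entirely from Lemmas \ref{Lem11} and \ref{Lem12}, which together show that every off-diagonal entry of the function homogeneous distance matrix $D := D_h^{mm_w}(t,f_1,\ldots,f_{w(w-1)})$ equals either $2t$ (when $d_h^{mm_w}(f_i,f_j)=1$) or $2t-1$ (when $d_h^{mm_w}(f_i,f_j)=2$), and that each row contains exactly $4(w-2)$ entries of value $2t$ and the remaining $w^2-5w+7$ off-diagonal entries of value $2t-1$.

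For the upper bound, I would first invoke Theorem \ref{Th2} to get $r_h^{mm_w}(k,t) \leq N_h(D)$, and then apply Lemma \ref{Lem4} with the identity permutation. Since every off-diagonal entry of $D$ lies in $\{2t-1,2t\}$, the maximum over $j$ of $\sum_{i=1}^{j-1} V_h(r,[D]_{ij}-1)$ is dominated by the full-row count $4(w-2)V_h(r,2t-1)+(w^2-5w+7)V_h(r,2t-2)$. The condition that this quantity be strictly less than $2^{lr}$ is exactly $\phi(r)>0$, which produces the claimed upper bound.

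For the lower bound, I would pick representatives $u_1,\ldots,u_{w(w-1)} \in \mathbb{Z}_{2^l}^k$ with $mm_w(u_i)=f_i$, using the explicit witnesses constructed in the proof of Lemma \ref{Lem11} so that $d_h(u_i,u_j)$ equals $d_h^{mm_w}(f_i,f_j)$ for as many pairs as possible. Corollary \ref{Cor1} then yields $r_h^{mm_w}(k,t) \geq N_h(D')$ for the corresponding message matrix $D'$, and Lemma \ref{Lem3} upgrades this to $r_h^{mm_w}(k,t) \geq \frac{1}{M^2}\sum_{i,j}[D']_{ij}$ with $M=w(w-1)$. Feeding the row counts of Lemma \ref{Lem12} into this sum and simplifying produces the claimed fraction; the particular case $t=1$ is obtained by retaining in the sum only the subset of entries that are guaranteed to remain strictly positive for that value of $t$.

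The main obstacle lies in the lower bound. Corollary \ref{Cor1} only lower-bounds $r_h^{mm_w}(k,t)$ by the message matrix $D'$, whose entries $[2t+1-d_h(u_i,u_j)]^+$ may be strictly smaller than the corresponding entries of the function matrix $D$, because a single representative per function value cannot in general realise the pairwise minimum $d_h^{mm_w}(f_i,f_j)$ for every pair simultaneously. I therefore expect the delicate step to be an explicit combinatorial selection of $u_1,\ldots,u_{w(w-1)}$, building on the four case-constructions used in the proof of Lemma \ref{Lem12}, that guarantees sufficiently many coincidences between the two matrices to recover the stated numerator through the averaging estimate of Lemma \ref{Lem3}; a cleaner alternative would be to strengthen Corollary \ref{Cor1} to allow several representatives per function value, so that the function matrix $D$ itself can be inserted directly.
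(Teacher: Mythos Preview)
Your upper bound argument matches the paper exactly: Theorem \ref{Th2} followed by Lemma \ref{Lem4}, with the row structure supplied by Lemmas \ref{Lem11} and \ref{Lem12}.

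The lower bound, however, does not work as you describe. You propose to feed the row counts of Lemma \ref{Lem12} (namely $4(w-2)$ entries equal to $2t$ and $w^{2}-5w+7$ entries equal to $2t-1$) into the Plotkin-type estimate of Lemma \ref{Lem3}. But that computation does \emph{not} produce the stated numerator: the per-row sum would be $4(w-2)\cdot 2t+(w^{2}-5w+7)(2t-1)=2t(w^{2}-w-1)-(w^{2}-5w+7)$, whereas the theorem claims $2t(w^{2}-w-1)-(3w^{2}-7w+5)$. Those row counts belong to the function matrix $D$, and as you yourself note, $D$ is only available for the upper bound; no single choice of representatives can realise all the pairwise minima simultaneously, so there is no reason the message matrix $D'$ should inherit Lemma \ref{Lem12}'s counts.

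The paper does not try to approximate $D$ at all. It fixes one explicit representative per value,
\[
u_{i,j}=(01,\ldots,01,\underset{(i)}{00},01,\ldots,\underset{(j)}{11},01,\ldots,01),
\]
for which the pairwise homogeneous distances are simply $2$ (when exactly one of $i,j$ changes) and $4$ (when both change). Hence in $D'$ each row has $2(w-2)$ entries equal to $2t-1$ and $(w-1)(w-2)+1$ entries equal to $2t-3$; summing and dividing by $w^{2}(w-1)^{2}$ gives precisely $\dfrac{2t(w^{2}-w-1)-(3w^{2}-7w+5)}{w(w-1)}$. For $t=1$ only the $2(w-2)$ entries of value $2t-1$ survive, yielding $\dfrac{2(2t-1)(w-2)}{w(w-1)}$. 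So the missing idea is not a clever selection realising the minima of Lemma \ref{Lem12}, but a deliberately cruder, uniform family of representatives whose mutual distances are easy to tabulate and which happens to give exactly the bound in the statement.
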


    \begin{proof}
        Firstly, we prove the left side of the inequation.

        Let $ u_{i,j} \in \mathbb{Z}_{2^{l}}^{k} $ be $ w(w-1) $ information vectors, where $ i,j \in [w],~i \neq j $ and let
        $ D_{h}^{mm_{w}}(t,u_{1,2},u_{1,3},\ldots,u_{w-1,w}) $ be a homogeneous distance requirement matrix. By Corollary \ref{Cor1} and Lemma \ref{Lem3}, we have
       \begin{equation*}
           \begin{aligned}
           r_{h}^{mm_{w}}(k,t) \geqslant& N_{h}(D_{h}^{mm_{w}}(t,u_{1,2},u_{1,3},\ldots,u_{w-1,w})) \\
           \geqslant& \frac{2}{w^{2}(w-1)^{2}}\sum_{i,j:i<j}{[D_{h}^{mm_{w}}(t,u_{1,2},u_{1,3},\ldots,u_{w-1,w})]_{ij}}.
       \end{aligned}
       \end{equation*}
       The proof of arbitrary $ s>2 $ follows the same steps after setting the $ {s-2} $ left-most bits in every part of $ u_{i,j} $ to be $ 0 . $ Thus we just need to prove the case that $ s=2 .$

       For $ s=2, $ choose the representative information vectors $u_{i,j}$ to be
       $$ u_{i,j}=(01,\ldots,01,\overset{u_{(i)}}{\overbrace{00}},01,\ldots,\overset{u_{(j)}}{\overbrace{11}},01,\ldots,01), $$
       where $ i,j \in [w] $ and $ i \neq j . $ Note that $ mm_{w}(u_{i,j})=(i,j) $ and thus the corresponding function values are distinct. Further, let $ i',j' \in [w] $ with $ i \neq i',~j \neq j' . $ Then
       $ d_{h}(u_{i,j},u_{i',j})=d_{h}(u_{i,j},u_{i,j'})=2,~d_{h}(u_{i,j},u_{i',j'})=4 . $
       Thus for given $ u_{i,j} , $ there are $ {w-2} $ information vectors $ u_{i,j'} $ satisfying that the homogeneous distance to $ u_{i,j} $ equal $2,$ $ {w-2} $ information vectors $ u_{i',j} $ satisfying $ d_{h}(u_{i',j},u_{i,j})=2, $ and there are $ (w-1)(w-2)+1 $ $ u_{i',j'} $ such that $ d_{h}(u_{i,j},u_{i',j'})=4 . $

       Then, for $ t \geqslant 2 , $ each row of $ D_{h}^{mm_{w}}(t,u_{1,2},u_{1,3},\ldots,u_{w-1,w}) $ has $ 2(w-2) $ entries equal to $ 2t-1 ,$ $ (w-1)(w-2)+1 $ entries equal to $ 2t-3 .$ Thus
       \begin{equation*}
           \begin{aligned}
           r_{h}^{mm_{w}}(k,t)
           \geqslant& \frac{2}{w^{2}(w-1)^{2}}\sum_{i,j:i<j}{[D_{h}^{mm_{w}}(t,u_{1,2},u_{1,3},\ldots,u_{w-1,w})]_{ij}}\\
           =&\frac{2t(w^{2}-w-1)-(3w^{2}-7w+5)}{w(w-1)}.
       \end{aligned}
       \end{equation*}
       For $ t=1 , $ each row of $ D_{h}^{mm_{w}}(t,u_{1,2},u_{1,3},\ldots,u_{w-1,w}) $ has $ 2(w-2) $ entries equal to $ 2t-1 , $ other entries are $ 0 . $ Thus, $ r_{h}^{mm_{w}}(k,t) \geqslant \frac{2(2t-1)(w-2)}{w(w-1)}. $

       Then, we just need to show the right side.

        By Lemma \ref{Lem4}, denote $ D_{h}^{mm_{w}} \triangleq D_{h}^{mm_{w}}(t,f_{1},\ldots,f_{w(w-1)}) , $ by using Theorem \ref{Th2}, we have $$ r_{h}^{mm_{w}}(k,t) \leqslant \min_{r \in \mathbb{N}}{ \left \{r:2^{lr} > \max_{i \in [w(w-1)]}{\sum_{j=1}^{i-1}{V_{h}(r,{[D_{h}^{mm_{w}}]_{\pi(i)\pi(j)}-1})}} \right \}}. $$
        Let $$ \psi(r)={2^{lr}-\max_{i \in [w(w-1)]}{\sum_{j=1}^{i-1}{V_{h}(r,{[D_{h}^{mm_{w}}]_{\pi(i)\pi(j)}-1})}}}, $$ then $ r_{h}^{mm_{w}}(k,t) \leqslant \min_{r \in \mathbb{N}}{\{r:\psi(r)>0\}} , $ where $ \pi $ is a permutation of integers in $ {[w(w-1)]} . $ The maximum of this summing can be bounded from above by setting $ {i=w(w-1)} $ and choosing a row with the largest number of nonzero entries.

        By Lemma \ref{Lem11} and Lemma \ref{Lem12}, a row $ \pi(i) $ with maximum entries contains one entry to be $ 0 ,$ $ {4(w-2)} $ entries to be $ 2t $ and at most $ {w(w-1)-4(w-2)-1} $ entries equal $ {2t-1} . $ Thus let
        $$ {\phi(r)={2^{lr}-(w^{2}-5w+7)V_{h}(r,2t-2)-4(w-2)V_{h}(r,2t-1)}}, $$
        we have $ r_{h}^{mm_{w}}(k,t) \leqslant \min_{r \in \mathbb{N}}{\{r:\phi(r)>0\}} . $
    \end{proof}
    Based on the proof of Theorem \ref{Th6}, we obtain the following corollary by using the connection between homogeneous distance balls and Hamming distance balls.
    \begin{Corollary}\label{Cor4}
        For $ w \geqslant 3,\,s \geqslant 2 , $ the optimal redundancy $ r_{h}^{mm_{w}}(k,t) $ satisfies
        $$ r_{h}^{mm_{w}}(k,t) \geqslant {\frac{1}{l}(\log(w(w-1))+{\lfloor{\frac{t-2}{2}}\rfloor}\log{[\frac{1}{l}\log(w(w-1))]}-{\lfloor{\frac{t-2}{2}}\rfloor}\log{{\lfloor{\frac{t-2}{2}}\rfloor}})},\,t \geqslant 2. $$
    \end{Corollary}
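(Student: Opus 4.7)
The plan is to apply Hamming sphere-packing to the redundancy code $P$ of an optimal FCCHD for $mm_{w}$, after transferring the homogeneous-distance constraint onto Hamming balls via the containment $V_{h}(r,\rho)\ge V(r,\lfloor\rho/2\rfloor)$, where $V(r,\cdot)$ denotes the Hamming-ball volume. First, I will reuse the lower-bound construction from the proof of Theorem~\ref{Th6}: taking the representative vectors $u_{i,j}$, $i\ne j\in[w]$, constructed there, every pair of distinct such vectors satisfies $d_{h}(u_{i,j},u_{i',j'})\in\{2,4\}$, so every off-diagonal entry of $D_{h}^{mm_{w}}(t,u_{1,2},\ldots,u_{w-1,w})$ is at least $2t-3$. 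By Theorem~\ref{Th1}, the redundancy vectors of any optimal FCCHD form a code $P\subseteq\mathbb{Z}_{2^{l}}^{r}$ of length $r=r_{h}^{mm_{w}}(k,t)$, size $M=w(w-1)$, and minimum homogeneous distance at least $2t-3$.

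Next, I will apply sphere-packing to $P$. Since the minimum homogeneous distance of $P$ is at least $2t-3$, the homogeneous balls of radius $t-2=\lfloor(2t-4)/2\rfloor$ around the codewords are disjoint, so $M\cdot V_{h}(r,t-2)\le 2^{lr}$. Because every coordinate of $\mathbb{Z}_{2^{l}}$ has homogeneous weight at most $2$, any vector with at most $\tau:=\lfloor(t-2)/2\rfloor$ Hamming-nonzero coordinates has homogeneous weight at most $t-2$, which gives the ball-containment inequality $V_{h}(r,t-2)\ge V(r,\tau)$. Combining these yields $w(w-1)\,V(r,\tau)\le 2^{lr}$. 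Using the standard estimate $V(r,\tau)\ge\binom{r}{\tau}\ge(r/\tau)^{\tau}$ (valid for $1\le\tau\le r$) and taking base-$2$ logarithms, one obtains $lr\ge\log(w(w-1))+\tau\log r-\tau\log\tau$.

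Finally, I will invert this to bound $r$. The a priori bound $r\ge\frac{1}{l}\log(w(w-1))$, which follows just from $|P|\le 2^{lr}$, is substituted into the $\log r$ term on the right-hand side to produce exactly the inequality claimed. When $\tau=0$ (i.e.\ $t\in\{2,3\}$) the $\tau\log\tau$ term is taken to be $0$ by convention and the bound degenerates to the trivial pigeonhole inequality. The chief delicacy will be the self-referential logarithmic estimate: substituting a valid lower bound for $r$ back into a $+\tau\log r$ term on the same side of the inequality is justified by the monotonicity of $r\mapsto lr-\tau\log r$ for $r\ge\tau/l$, which is comfortably satisfied in the regime of interest. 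A minor additional point is ensuring that the binomial inequality $\binom{r}{\tau}\ge(r/\tau)^{\tau}$ is available, which holds whenever $r\ge\tau$; otherwise the sphere-packing constraint already forces $r$ to exceed $\tau$, and the stated bound is recovered a fortiori.
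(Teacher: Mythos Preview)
Your proposal is correct and follows essentially the same route as the paper: restrict to the representative information vectors $u_{i,j}$ from Theorem~\ref{Th6}, observe that their pairwise homogeneous distances are at most $4$ so the redundancy vectors must have pairwise homogeneous distance at least $2t-3$, apply homogeneous sphere packing, pass to Hamming balls of radius $\tau=\lfloor(t-2)/2\rfloor$ via the containment $V_{h}(r,t-2)\ge V(r,\tau)$, use $\binom{r}{\tau}\ge(r/\tau)^{\tau}$, and substitute the trivial pigeonhole bound $r\ge\frac{1}{l}\log(w(w-1))$ into the $\log r$ term. The paper phrases the first step as $r_{h}^{mm_{w}}(k,t)\ge N_{h}(w(w-1),2t-3)$ via Corollary~\ref{Cor1} rather than invoking Theorem~\ref{Th1} directly (your citation of Theorem~\ref{Th1} is slightly off, since that theorem concerns all $2^{lk}$ messages; what you actually use is Corollary~\ref{Cor1}), and it is less explicit than you are about the monotonicity that justifies the self-referential substitution, but the argument is otherwise identical.
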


    \begin{proof}
        Due to Corollary \ref{Cor1}, we know that
        $$ r_{h}^{mm_{w}}(k,t) \geqslant N_{h}(D_{h}^{mm_{w}}(t,u_{1,2},u_{1,3},\ldots,u_{w-1,w})) . $$ Refer to the proof of Theorem \ref{Th6}, we have $ d_{h}(u_{i,j},u_{i',j'}) \leqslant 4 $ for any $ i,j,i',j' , $ then $ N_{h}(D_{h}^{mm_{w}}(t,u_{1,2},u_{1,3},\ldots,u_{w-1,w})) \geqslant N_{h}(w(w-1),2t-3) ,$ when $ t \geqslant 2 . $

        Let $ r \triangleq N_{h}(w(w-1),2t-3) , $ using the sphere packing bound, it holds that $$ {(2^{l})^{r} \geqslant w(w-1)V_{h}(r,t-2)}, $$  where $ V_{h}(r,t-2)={|\{\,a \in \mathbb{Z}_{2^{l}}^{r}\,|\,d_{h}(a,u) \leqslant {t-2}\,\}|}. $ Since
        $$ \{\,a \in \mathbb{Z}_{2^{l}}^{r}\,|\,d_{h}(a,u) \leqslant {t-2}\} \supseteq \{\,a \in \mathbb{Z}_{2^{l}}^{r}\,|\,d(a,u) \leqslant {\lfloor{\frac{t-2}{2}}\rfloor} \,\}, $$
        we have $ V_{h}(r,t-2) \geqslant V_{2^{l}}(r,{\lfloor{\frac{t-2}{2}}\rfloor}), $ where $ V_{2^{l}}(r,{\lfloor{\frac{t-2}{2}}\rfloor})={|\{\,a \in \mathbb{Z}_{2^{l}}^{r}\,|\,d(a,u) \leqslant {\lfloor{\frac{t-2}{2}}\rfloor}\,\}|}. $ Thus $$ (2^{l})^{r} \geqslant w(w-1)V_{2^{l}}(r,{\lfloor{\frac{t-2}{2}}\rfloor}) . $$
        Using logarithmic operation, we have $$ lr \geqslant {\log{w(w-1)}+\log{V_{2^{l}}(r,{\lfloor{\frac{t-2}{2}}\rfloor})}}. $$ Note that $ V_{2^{l}}(r,{\lfloor{\frac{t-2}{2}}\rfloor})=\sum_{j=0}^{{\lfloor{\frac{t-2}{2}}\rfloor}}{\left(\begin{array}{c}
             r  \\
             j
        \end{array} \right)(2^{l}-1)^j} \geqslant \sum_{j=0}^{{\lfloor{\frac{t-2}{2}}\rfloor}}{\left( \begin{array}{c}
        r\\j \end{array} \right)}, $ then $$ lr \geqslant {log{w(w-1)}+\log{\sum_{i=0}^{{\lfloor{\frac{t-2}{2}}\rfloor}}\left( \begin{array}{c}
        r\\j \end{array} \right)}}. $$ Since $ \frac{b}{a} \geqslant \frac{b+1}{a+1} $ when $ 0 < a \leqslant b , $ we have $$ lr \geqslant {\log{w(w-1)}+{\lfloor{\frac{t-2}{2}}\rfloor\log{\frac{r}{{\lfloor{\frac{t-2}{2}}\rfloor}}}}}. $$ Thus, $$ lr \geqslant {\log(w(w-1))+{\lfloor{\frac{t-2}{2}}\rfloor}\log{[\frac{1}{l}\log(w(w-1))]}-{\lfloor{\frac{t-2}{2}}\rfloor}\log{{\lfloor{\frac{t-2}{2}}\rfloor}}}.$$
       Therefore,
       $$ r_{h}^{mm_{w}}(k,t) \geqslant {\frac{1}{l}\{\log(w(w-1))+{\lfloor{\frac{t-2}{2}}\rfloor}\log{[\frac{1}{l}\log(w(w-1))]}-{\lfloor{\frac{t-2}{2}}\rfloor}\log{{\lfloor{\frac{t-2}{2}}\rfloor}}\}},\,t \geqslant 2. $$
    \end{proof}

  \section{Conclusion}
     In this paper, we introduce FCCHDs on the residue ring $ \mathbb{Z}_{2^{l}}. $ This encoding method reduces the redundancy compared to classical error-correcting codes as receiver only need to get correct function values of messages which are the attribute of messages. We obtain some bounds of the optimal redundancy of FCCHDs for an arbitrary function and some specific functions. In order to obtain these bounds, we define $ D $-homogeneous distance codes and two classes of matrices, which are homogeneous distance requirement matrices and function homogeneous distance matrices. For some specific matrices $ D $, we find connections between the smallest length of these codes and the optimal redundancy of FCCHDs by constructing $ D $-homogeneous distance codes.

     Based on the structure of five classes of functions, we further study the optimal redundancy of FCCHDs for these functions and we obtained more accurate bounds of the optimal redundancy. We provide some constructions of FCCHDs for homogeneous weight functions and homogeneous weight distribution functions. Specially, redundancies of some of these codes reach the optimal redundancy bounds of FCCHDs. Since it's difficult to select information vectors to construct a homogeneous distance requirement matrix to  be equivalent to the function homogeneous distance matrix, we don't provide constructions of FCCHDs for Rosenbloom-Tsfasnman weight functions, locally homogeneous binary functions and min-max functions reaching the optimal redundancy bounds of FCCHDs. Next, we can try to find some special codes to construct FCCHDs or find some special matrix to construct $ D $-homogeneous distance codes to improve the redundancy bounds of FCCHDs for further study.

\bigskip 

{\bf  Acknowledgement}. This work was supported by NSFC (Grant Nos. 12441102 and 12271199).

\end{document}